\newtheorem{theorem}{Theorem}
\newtheorem{property}{Property}
\begin{document}
\title{Energy-Efficient Optimization for HARQ Schemes over Time-Correlated Fading Channels}
\author{Zheng~Shi,
        Shaodan~Ma,
        Guanghua~Yang,
        and Mohamed-Slim Alouini
\thanks{Copyright (c) 2015 IEEE. Personal use of this material is permitted. However, permission to use this material for any other purposes must be obtained from the IEEE by sending a request to pubs-permissions@ieee.org.}
%\thanks{Manuscript received June 2, 2017; revised September 29, 2017; revised January 22, 2018; accepted February 26, 2018. The associate editor coordinating the review of this paper and approving it for publication was Guoqiang Mao.}
\thanks{This work was supported in part by National Natural Science Foundation of China under grants 61601524 and 61671488, in part by the Special Fund for Science and Technology Development in Guangdong Province under Grant No. 2016A050503025, in part by the Research Committee of University of Macau under grants MYRG2014-00146-FST and MYRG2016-00146-FST, and in part by the Macau Science and Technology Development Fund under grants 091/2015/A3 and 020/2015/AMJ.}
\thanks{Zheng Shi is with the School of Electrical and Information Engineering and the Institute of Physical Internet, Jinan University (Zhuhai campus), Zhuhai 519070, China (e-mail:shizheng0124@gmail.com).}
\thanks{Shaodan Ma is with the Department of Electrical and Computer Engineering, University of Macau, Macao (e-mail: shaodanma@umac.mo).}
\thanks{Guanghua Yang is with Institute of Physical Internet, Jinan University (Zhuhai Campus), Zhuhai, China (e-mail: ghyang@jnu.edu.cn).}
\thanks{Mohamed-Slim Alouini is with the Computer, Electrical, and Mathematical Science and Engineering (CEMSE) Division, King Abdullah University of Science and Technology (KAUST) Thuwal, Makkah Province, Saudi Arabia (e-mail:slim.alouini@kaust.edu.sa).}
\thanks{The corresponding author is Guanghua Yang.}
}
\markboth{IEEE TRANSACTIONS ON VEHICULAR TECHNOLOGY,~Vol.~xx, No.~x, xxxx~2018}
{SHI \MakeLowercase{\textit{et al.}}: Energy-Efficient Optimization for HARQ Schemes over Time-Correlated Fading Channels}
\maketitle
\begin{abstract}
%Optimization of three common hybrid automatic repeat request (HARQ) schemes including Type I HARQ, HARQ with chase combining (HARQ-CC) and HARQ with incremental redundancy (HARQ-IR), is investigated in this paper.
Energy efficiency of three common hybrid automatic repeat request (HARQ) schemes including Type I HARQ, HARQ with chase combining (HARQ-CC) and HARQ with incremental redundancy (HARQ-IR), is analyzed and joint power allocation and rate selection to maximize the energy efficiency is investigated in this paper. Unlike prior literature, time-correlated fading channels is considered and two widely concerned quality of service (QoS) constraints, i.e., outage and goodput constraints, are also considered in the optimization, which further differentiates this work from prior ones. Using a unified expression of asymptotic outage probabilities, optimal transmission powers and optimal rate are derived in closed-forms to maximize the energy efficiency while satisfying the QoS constraints. These closed-form solutions then enable a thorough analysis of the maximal energy efficiencies of various HARQ schemes. It is revealed that with low outage constraint, the maximal energy efficiency achieved by Type I HARQ is $\frac{1}{4\ln2}$ bits/J, while HARQ-CC and HARQ-IR can achieve the same maximal energy efficiency as $\frac{\kappa_\infty}{4\ln2}$ bits/J where $\kappa_\infty = 1.6617$. Moreover, time correlation in the fading channels has a negative impact on the energy efficiency, while large maximal allowable number of transmissions is favorable for the improvement of energy efficiency. The effectiveness of the energy-efficient optimization is verified by extensive simulations and the results also show that HARQ-CC can achieve the best tradeoff between energy efficiency and spectral efficiency among the three HARQ schemes.

\end{abstract}
% Note that keywords are not normally used for peerreview papers.
\begin{IEEEkeywords}
Energy efficiency, time-correlated fading channels, hybrid automatic repeat request, power allocation and rate selection.
\end{IEEEkeywords}
\IEEEpeerreviewmaketitle
\hyphenation{HARQ}
\section{Introduction}\label{sec:int}
\IEEEPARstart{T}{he} past few years have witnessed an explosive growth in wireless data traffic and the number of mobile terminals. We are therefore obliged to continuously enhance the spectral/energy efficiency of wireless communication systems to meet the ever-increasing demand. Towards this end, adaptive modulation and coding (AMC) and hybrid automatic repeat request (HARQ) are recognized as two promising techniques so far \cite{Sassioui2016harq,li2015throughput}. In particular, AMC is an efficient physical-layer technique that adapts the modulation type, the transmission rate and even the transmission power to channel state information (CSI) available at the transmitter \cite{harsini2011analysis}. However, with only statistical or limited CSI at the transmitter, HARQ turns out to be more efficient to offer higher throughput and/or lower outage probability with the aid of multiple transmissions adaptive to channel conditions \cite{choi2013onenergyamc}. Specifically, HARQ enables reliable and robust data transmissions through leveraging forward error correction at the physical layer as well as automatic repeat request at the link layer. Generally, HARQ techniques are classified into three types based on the encoding and decoding operations at the transceivers, i.e., Type I HARQ, HARQ with chase combining (HARQ-CC) and HARQ with incremental redundancy (HARQ-IR) \cite{caire2001throughput}. The key difference among them is that Type I HARQ discards the erroneously received sub-codewords and performs memoryless decoding at the receiver, whereas HARQ-CC and HARQ-IR utilize the erroneous sub-codewords for subsequent decoding through chase combining and code combining, respectively. They provide various tradeoffs between performance and complexity and have a wide range of applications in wireless communications.
%It is worth emphasizing that HARQ-IR achieves higher spectral efficiency than HARQ-CC because of its higher decoding complexity.

To further boost the performance of HARQ schemes when statistical/limited CSI is available, optimal design of HARQ system has sparked of considerable research interests lately. The majority of the optimal designs in the literature aim to maximize the spectral efficiency which is commonly quantified by long-term average throughput (LTAT) \cite{seong2011rate,kim2013low,wu2010performance,chelli2013performance,larsson2014throughput,
larsson2016throughput,szczecinski2013rate,choi2013energydelay,chelli2014performance}. For example, cooperative HARQ scheme is investigated and optimal rate and transmission powers are found to maximize the LTAT through numerical search in \cite{seong2011rate}. Noticing no closed-form solution and heavy computational overhead in \cite{seong2011rate}, two suboptimal algorithms are then proposed for rate selection in \cite{kim2013low} to substantially reduce the computational complexity while guaranteeing a comparable performance to the globally optimal solution. By deriving the throughput using Laplace transform, a parameterization-based method is developed to find the optimal rate in (semi-)closed-form to maximize the throughput for various HARQ schemes in \cite{larsson2014throughput}. Moreover, when outdated CSI is available at the transmitter, the optimal transmission rate to maximize the LTAT of HARQ-IR is found by using dynamic programming in \cite{szczecinski2013rate}. Numerical results demonstrate that a notable throughput gain can be achieved by this optimal design.

Apart from spectral efficiency, there is another performance metric which becomes of increasing concern in wireless communications, that is, energy efficiency \cite{li2011energy,wu2014energy,chelli2014performance,choi2013energyfeed,
jabi2016energy,hu2016weighted,basit2015efficient,ge2017energy,zi2016energy,ge2016multi}. It is particularly important in energy limited networks, e.g., internet-of-things (IoT) networks. However, there are very few optimal designs aiming at energy efficiency maximization in the literature. Specifically, in \cite{wu2014energy}, three optimum schemes are proposed to achieve various tradeoffs between the energy efficiency and the spectral efficiency for Type I HARQ under quasi-static fading channels. A fixed power transmission scheme is proposed to maximize the energy efficiency of HARQ-IR under independent fading channels and the optimal power is found numerically in \cite{chelli2014performance}. On the other hand, it is revealed in \cite{choi2013energyfeed,jabi2016energy} that the maximal achievable energy efficiency of Type I HARQ is $\frac{1}{{\rm e}\ln2}$ bits/J, while HARQ-CC and HARQ-IR can achieve the same maximal energy efficiency as $\frac{1}{\ln2}$ bits/J. Unfortunately, these prior analyses and optimal designs rarely consider quality of service (QoS) constraints (e.g., outage probability and throughput constraints) while maximizing the energy efficiency. Such QoS constraints are practical and usually should be satisfied for practical applications. They would definitely deteriorate the energy efficiency due to the shrinking of the feasible region in the optimization. Moreover, all the energy-efficient designs are applicable to either quasi-static fading channels or independent fading channels, but not optimal to time-correlated fading channels which usually occur in low-to-medium mobility environments \cite{kim2011optimal,jin2011optimal}. Time-correlation among fading channels usually causes negative impacts on the system performance, and it has been intensively studied for various systems in the literature. Specifically, the performance of HARQ-IR over time-correlated Rayleigh fading channels is investigated in \cite{shi2015analysis} through polynomial fitting technique. The analytical results are further extended to cooperative HARQ-IR over time-correlated Nakagami-m fading channels in \cite{shi2016inverse}. Most of prior works are conducted for the analysis of either outage probability or spectral efficiency. To our best knowledge, analysis of energy-efficiency and its optimization for HARQ schemes over time-correlated fading channels has not been discussed and solved yet.

Here we thus take a step further to investigate energy-efficient optimization for various HARQ schemes over time-correlated fading channels with QoS constraints, particularly, a tolerable outage constraint and a minimum goodput constraint. A joint power allocation and rate selection scheme is developed to maximize the energy efficiency while satisfying the QoS constraints. With a unified expression of asymptotic outage probabilities, the optimal transmission powers and rates for the three common HARQ schemes including Type I HARQ, HARQ-CC and HARQ-IR are derived in closed-forms. The closed-form optimal solutions then enable a thorough analysis of the maximum energy efficiency when time-correlated fading channels and QoS constraints are considered. It is found that the channel time correlation has a negative impact on the energy efficiency, while large maximal allowable number of transmissions is beneficial to the energy efficiency. More importantly, it is revealed that with low outage constraint, the maximal energy efficiency of Type I HARQ is $\frac{1}{4\ln2}$ bits/J, while both HARQ-CC and HARQ-IR can achieve a maximal energy efficiency of $\frac{\kappa_\infty}{4\ln2}$ bits/J where $\kappa_\infty = 1.6617$.
Noticing that the maximization of the energy efficiency and the spectral efficiency are generally two conflicting goals, the tradeoff between the spectral efficiency and the energy efficiency is also discussed. Numerical results finally demonstrate the effectiveness of our energy-efficient optimization and show that HARQ-CC in fact can achieve a better tradeoff between the energy efficiency and the spectral efficiency than Type I HARQ and HARQ-IR.

The rest of this paper is organized as follows. Section \ref{sec:sysmod} introduces the system model. An energy-efficient optimization is then proposed for the three HARQ schemes and the optimal transmission powers are found in closed-form in Section \ref{sec:endesign}. In Section \ref{sec:opt_rate_ee}, the optimal rates are derived in closed-form for three HARQ schemes and their corresponding optimal energy efficiencies are thoroughly analyzed. Section \ref{sec:numerical} presents the numerical results for verification and investigates the tradeoff between the energy efficiency and the spectral efficiency of our design. Finally, Section \ref{sec:con} concludes this paper.

\section{System Model}\label{sec:sysmod}
%\subsection{Signal Model and Outage Probability}
This paper considers a point-to-point wireless communication system with HARQ transmissions. Following HARQ protocol, the source first encodes $N_b$ information bits into a mother codeword of length $L {N_s}$, where $L$ denotes the maximal allowable number of transmissions. The codeword is then punctured into $L$ sub-codewords, each with the same length $N_s$. These sub-codewords are sequentially transmitted to the destination through multiple rounds. Notice that in Type I HARQ and HARQ-CC, the same sub-codeword is transmitted in each HARQ round, while in HARQ-IR a different sub-codeword with new parity bits is transmitted in each HARQ round \cite{wu2010performance}. For all the three HARQ schemes, the received signal ${\bf y}_l $ in the $l$th HARQ round at the destination can be unified as
\begin{equation}\label{eqn:recieved_signal_model}
  {\bf y}_l = \sqrt{P_l} h_l {\bf x}_l + {\bf z}_l, \ \ l=1,2,\cdots,L,
\end{equation}
where ${\bf x}_l$ denotes a sub-codeword with length of $N_s$ and each symbol of ${\bf x}_l$ follows Gaussian distribution with unit average energy, i.e., ${\rm E}\{{{\bf x}_l}^{\rm T}{{\bf x}_l}\}=N_s$; $P_l$ stands for the transmission power in the $l$th HARQ round; ${\bf z}_l$ denotes complex Gaussian white noise vector whose elements have zero mean and unit variance; and $h_l$ represents Rayleigh fading channel coefficient in the $l$th HARQ round. Under low-to-medium mobility environments, wireless communications usually experiences time-correlated block-fading channels. To accommodate this channel time correlation, a widely used time-correlated Rayleigh fading channel model given in \cite{kim2011optimal} is adopted here
\begin{equation}\label{eqn:R_k_def}
{h_l} =  {\sigma_l} \left({\sqrt {1 - {\rho ^{2\left( {l  - 1} \right)}}} {\mathfrak h_l} + {\rho ^{l  - 1}}{\mathfrak h _0}}\right), \ \ l=1,2,\cdots,L,
\end{equation}
where $\rho$ and ${\sigma_l}^2$ denote time correlation coefficient and the variance of ${h_l}$, respectively; and ${\mathfrak h _{0}}$, ${\mathfrak h _{1}},\cdots,{\mathfrak h _{L}}$ follow independent circularly-symmetric complex Gaussian distribution with zero mean and unit variance, i.e., ${\mathfrak h_{0}}, {\mathfrak h_{l}} \sim \mathcal{CN}\left( {0,1} \right)$.
 %and is subject to a zero mean complex circularly symmetric Gaussian distribution with variance $\Omega_l$, i.e., $h_l \sim {\cal CN}(0,\Omega_l)$
This model is general and covers quasi-static fading channels (i.e., fully correlated fading channels where $h_1=h_2=\cdots=h_L$ ) and fast fading channels (i.e., independent fading channels where  $h_1$, $h_2$, $\cdots$, $h_L$ are independent) as special cases with $\rho=1$ and $\rho=0$, respectively. In particular, it is worth highlighting that the fixed-rate HARQ-IR is considered in this paper for the sake of simplicity and fair comparison, that is, the length of each sub-codeword delivered in HARQ-IR scheme keeps fixed throughout HARQ rounds. Hereby, as the length of each sub-codeword is $N_s$ for all the HARQ schemes, the initial transmission rate that denotes the data transmission rate in the first HARQ round is $R=\frac{N_b}{N_s}$.

It is assumed that perfect instantaneous CSI is available at the receiver, but only the statistical CSI is available at the transmitter. To boost the communication performance, the transmission powers $P_1,\cdots,P_L$ and the transmission rate $R$ should be properly designed by using the statistical CSI at the transmitter. In this paper, energy efficiency is concerned and our design objective is to maximize the energy efficiency under QoS constraints by jointly optimizing transmission powers and rate. Since the energy efficiency and QoS constraints are generally defined based on the most fundamental performance metric, i.e., outage probability, it is discussed first in the following. For HARQ schemes, an outage event happens when the destination fails to decode the message after $L$ transmissions. Here three common HARQ schemes are particularly discussed: Type I HARQ, HARQ-CC and HARQ-IR. They are differentiated by their encoding and decoding operations at the transceiver. Specifically, Type I HARQ only takes the received signal in the latest round for decoding, HARQ-CC adopts maximum ratio combining of the received signals in multiple rounds for joint decoding, while HARQ-IR adopts code combining of the received signals in multiple rounds for joint decoding. Therefore the outage probabilities of the three HARQ schemes are different and given respectively as (\ref{outage_definition}) at the top of this page \cite{caire2001throughput}.
\begin{figure*}[!t]
%\normalsize
  \centering
  % Requires \usepackage{graphicx}
\begin{equation}
\label{outage_definition}
{p_{out,L}}=\left\{ {\begin{array}{*{20}{l}}
{\Pr\left({{\log }_2} \left( 1+\max\left({{P_1}{\left| {{h_1}} \right|^2}},{{P_2}{\left| {{h_2}} \right|^2}}, \cdots, {{P_L}{\left| {{h_L}} \right|^2}} \right)\right) \le R \right),}&{{\rm{Type}}\;{\rm{I}}-{\rm HARQ}}\\
{\Pr\left({{\log }_2}\left(1+ \sum\nolimits_{l = 1}^L {{P_l}{\left| {{h_l}} \right|^2}} \right) \le R \right),}&{{\rm{HARQ - CC}}}\\
{\Pr \left(\sum\nolimits_{l = 1}^L {{{\log }_2}\left( {1 + {{P_l}{\left| {{h_l}} \right|^2}}} \right)} \le R \right),}&{{\rm{HARQ - IR}}.}
\end{array}} \right.
\end{equation}
\hrulefill
%\vspace*{4pt}
\end{figure*}
The outage probabilities of the three HARQ schemes under time-correlated fading channels have been derived in \cite{shi2015outage,shi2016optimal,shi2017asymptotic}. However, their expressions are too complex and involve complicated special functions, which provide little insights and hinder the optimal design based on them. Fortunately, by assuming Gaussian codes and typical set decoding \cite{caire2001throughput}, the asymptotic outage probabilities of Type I HARQ and HARQ-CC under time-correlated Rayleigh fading channels have been exactly derived in simple forms in \cite{shi2015outage,shi2016optimal}, respectively, while the asymptotic outage probability of HARQ-IR has been exactly derived in closed-form in \cite{shi2017asymptotic}. It has been shown that the asymptotic outage probabilities match well with the exact outage probabilities under low outage region or high SNR. They can be unified as
\begin{equation}\label{eqn:outage_prob}
{p_{out,L}} \simeq {\phi _L}{\left( {\prod\limits_{k = 1}^L {{P_k}} } \right)^{ - 1}},
\end{equation}
where ${p_{out,0}}=1$ if $R>0$ and ${p_{out,0}}=0$ otherwise, ``$\simeq$'' stands for the asymptotically equal operator, and ${\phi _L}$ changes among various HARQ schemes and is explicitly given by
\begin{equation}\label{eqn:Wl_def}
{\phi _L} = \left\{ {\begin{array}{*{20}{l}}
{{{\varsigma _L}}{{\left( {{2^R} - 1} \right)}^L},}&{{\rm{Type}}\;{\rm{I}}-{\rm HARQ}}\\
{\frac{{{{\varsigma _L}}{{\left( {{2^R} - 1} \right)}^L}}}{{\Gamma \left( {L + 1} \right)}},}&{{\rm{HARQ - CC}}}\\
{{\varsigma _L}}{{{g_L}\left( R \right)},}&{{\rm{HARQ - IR}},}
\end{array}} \right.
\end{equation}
where $\phi _0=1$ if $R>0$ and $\phi _0=0$ otherwise. In (\ref{eqn:Wl_def}), $\Gamma(\cdot)$ denotes Gamma function, ${\varsigma _L}$ quantifies the impact of time-correlated fading channels, i.e., ${\varsigma _L} = \frac{{{{\left( {\ell \left( {L,\rho } \right)} \right)}^{ - 1}}}}{{\prod\nolimits_{k  = 1}^L {{{\sigma_k}^2 }} }}$
%\begin{equation}\label{eqn:fading_channel}
%{\varsigma _L} = \frac{{{{\left( {\ell \left( {L,\rho } \right)} \right)}^{ - 1}}}}{{\prod\nolimits_{k  = 1}^L {{{\sigma_k}^2 }} }},
%\end{equation}
in which $\ell \left( {L,\rho} \right)$ particularly quantifies the impact of time correlation with
\begin{equation}\label{eqn:time_corr_impa}
\ell \left( {L,\rho} \right) = \left( {1 + \sum\limits_{k = 1}^L {\frac{{{\rho ^{2(k - 1)}}}}{{1 - {\rho ^{2(k   - 1)}}}}} } \right) \prod\limits_{k = 1}^L {\left( {1 - {\rho ^{2(k   - 1)}}} \right)},
\end{equation}
%and $\delta$ denotes the channel feedback delay.
and $\ell \left( {0,\rho} \right) = 1$. Hence ${\varsigma _0}=1$. Moreover, in (\ref{eqn:Wl_def}), if $R=0$ and $L=0$, $(2^R-1)^L=0$ by convention, and ${{g_L}\left( R \right)}$ is given by
\begin{align}\label{eqn:g_l_def}
{{g_L}\left( R \right)} %&= G_{l + 1,l + 1}^{0,l + 1}\left( {\left. {\begin{array}{*{20}{c}}
%{1,2, \cdots ,2}\\
%{1, \cdots ,1,0}
%\end{array}} \right|{2^R}} \right) \notag \\
& = \frac{1}{{2\pi {\rm{i}}}}\int\nolimits_{{a} - {\rm{i}}\infty }^{{a} + {\rm{i}}\infty } {\frac{{{2^{Rs}}}}{{s{{\left( {s - 1} \right)}^L}}}ds}\notag\\
&= {\left( { - 1} \right)^L} + {2^R}\sum\limits_{k = 0}^{L - 1} {{{\left( { - 1} \right)}^k}\frac{{{{\left( {R\ln 2} \right)}^{L - k - 1}}}}{{(L-k-1)!}}},
\end{align}
with ${\rm i}=\sqrt{-1}$, $a > 1$ and ${{g_0}\left( 0 \right)} = 0$. The unified asymptotic outage probability in (\ref{eqn:outage_prob}) not only provides clear insights but also enables optimal design of the transmission powers and transmission rate analytically to achieve various objectives. It will lead to closed-form optimal solutions and is adopted here for our optimal design. From the simulation results shown later, the optimal design based on the asymptotic outage probabilities can achieve similar performance to that based on the exact outage probabilities through numerical search, but with much lower complexity.

\section{Energy-Efficient Optimization}\label{sec:endesign}
As defined in \cite{li2011energy,chelli2014performance}, energy efficiency quantifies the amount of successfully delivered information per unit energy. Specifically, for HARQ schemes with maximum $L$ transmissions, based on the \emph{renewal-reward theorem} \cite{zorzi1996use}, the energy efficiency ${\eta}_L$ can be written as the ratio of the average number of correctly received bits $\bar N_b={{N_b}\left( {1 - {p_{out,L}}} \right)}$ to the average amount of energy used $\bar {\mathcal E}=\sum\nolimits_{l = 1}^L {{p_{out,l - 1}}{P_l}{\rm E}\{{{\bf x}_l}^{\rm T}{{\bf x}_l}\}}=N_s\sum\nolimits_{l = 1}^L {{p_{out,l - 1}}{P_l}}$, i.e.,
\begin{equation}\label{eqn:power_effi}
%{\eta_{EE}}\left( {{P_1}, \cdots {P_L},R} \right) = \frac{{\bar P}}{R_{EFF}},
{\eta_L} = \frac{{{{\bar N}_b}}}{{\bar {\mathcal E} }} = \frac{{{N_b}\left( {1 - {p_{out,L}}} \right)}}{{{N_s}\bar P}} %= \frac{{R\left( {1 - {p_{out,L}}} \right)}}{{\bar P}}
= \frac{\mathcal T_{g}}{{\bar P}},
\end{equation}
where $\bar P = \sum\nolimits_{l = 1}^L {{p_{out,l - 1}}{P_l}}$ is the average total transmission power and $\mathcal T_{g}=R(1-{p_{out,L}})$. It is worth noting that $\mathcal T_{g}$ is frequently termed as the goodput/effecitve rate, which is an important performance metric to evaluate the throughput of HARQ schemes \cite{rui2008combined,zheng2005optimizing}. As pointed out in \cite{rui2008combined}, the goodput denotes the number of bits successfully delivered per packet transmission. It is asymptotically equivalent to the spectral efficiency under high SNR. Notice that the spectral efficiency denotes the average number of successfully delivered bits per channel use and its definition will be given later.

%\subsection{Problem Formulation}

With statistical CSI at the transmitter, the transmission powers $P_1,\cdots,P_L$ and the transmission rate $R$ could be optimally designed to maximize the energy efficiency ${\eta_{L}}$. In the literature, some optimal/sub-optimal designs have been proposed for HARQ schemes \cite{wu2014energy,chelli2014performance}. Unfortunately, most of them are applicable for independent fading channels and rarely consider QoS constraints. Considering the wide occurrences of time-correlated fading channels and QoS requirements in practice, we take a step forward to incorporate both of them in the energy-efficient optimization. Two widely concerned QoS constraints are particularly considered here. One is the outage constraint, i.e., ${{p_{out,L}} \le \varepsilon }$, and the other is minimum goodput constraint \footnote{The goodput constraint is asymptotically equivalent to the spectral efficiency constraint\cite{rui2008combined}. Since the goodput expression is simpler than the spectral efficiency, the consideration of goodput constraint will simplify the optimization and lead to close-form solution without loss of the practical significance.}, i.e., ${\mathcal T_{g} \ge \mathcal T_0}$. With these QoS constraints, the optimum design of transmission powers and transmission rate to maximize the energy efficiency can be formulated as
\begin{equation}\label{eqn:joint_adaption}
\begin{array}{*{20}{l}}
{\mathop {\rm \max }\limits_{{P_1}, \cdots {P_L},R} }&{{\eta_L} }\\
{{\rm{subject}}\,{\rm{to}}}&{{p_{out,L}} \le \varepsilon }\\
{}&{{\mathcal T_g} \ge \mathcal T_0}\\
{}&{{P_l} \ge 0,\quad 1 \le l \le L}\\
{}&{R \ge 0},
\end{array}
\end{equation}
where $\varepsilon$ and $\mathcal T_0$ denote the maximal allowable outage probability and the minimum required goodput, respectively. It is clear that (\ref{eqn:joint_adaption}) is a fractional programming problem. Due to the complicated expressions of the energy efficiency and outage probability under time-correlated fading channels, the optimal design is very challenging and it is difficult to find the optimal solution directly. But by introducing an auxiliary variable $\alpha={p_{out,L}}$ (\emph{target outage probability}), the fractional programming problem can be reformulated as
\begin{equation}\label{eqn:opt_prob_reform}
\begin{array}{*{20}{l}}
{\mathop {{\rm{\max}}}\limits_{{P_1}, \cdots {P_L},R,\alpha} }&{\frac{R(1-\alpha)}{{\bar P}}}\\
{{\rm{subject}}\,{\rm{to}}}&{{p_{out,L}} = \alpha }\\
{}&{0 \le \alpha  \le \varepsilon }\\
{}&{R(1-\alpha) \ge {\mathcal T}_0}\\
{}&{{P_l} \ge 0,\quad 1 \le l \le L}\\
{}&{R \ge 0},
\end{array}
\end{equation}
which can be further decomposed into three subproblems equivalently with regard to transmission powers $P_1,\cdots,P_L$, target outage probability $\alpha$ and transmission rate $R$, respectively \cite[Eqs.11-13]{palomar2006tutorial}. It should be noted that this equivalent decomposition holds without the necessity of any conditions. They can be solved individually in closed-forms in the following.

%Intuitively, the fractional programming problem (\ref{eqn:opt_prob_reform}) can be decoupled into three subproblems by adopting the decomposition theory \cite{palomar2006tutorial}, i.e., optimizations of transmission powers $P_1,\cdots,P_L$, outage probability $\alpha$ and transmission rate $R$. More specifically, the optimization problem (\ref{eqn:opt_prob_reform}) is decomposed into three levels as follows.

\subsection{Optimal Power Allocation}
Given the transmission rate $R$ and the target outage probability $\alpha$, the problem in (\ref{eqn:opt_prob_reform}) reduces to the design of the transmission powers as
%At the lowest level, we have the subproblem of transmission power allocation via fixing the transmission rate $R$ and the target outage probability $\alpha$, such that (\ref{eqn:opt_prob_reform}) decouples
\begin{equation}\label{eqn:probem_decom_leve1}
\begin{array}{*{20}{l}}
{\mathop {\rm \min }\limits_{{P_1}, \cdots {P_L}} }&{\bar P}\\
{{\rm{subject}}\,{\rm{to}}}&{{p_{out,L}} = \alpha }\\
{}&{{P_l} \ge 0,1 \le l \le L}.
\end{array}
\end{equation}
This optimization problem is similar to the power allocation problem in \cite{shi2016optimal}, except that the equality outage constraint ${{p_{out,L}} = \alpha }$ is changed as inequality outage constraint ${{p_{out,L}} \le \alpha }$. It has been proved in \cite{shi2016optimal} that the minimal average total transmission power with inequality outage constraint is achieved at the outage boundary, i.e., ${{p_{out,L}} = \alpha }$. Therefore, the closed-form optimal power solution in \cite{shi2016optimal} is applicable to our power design problem in (\ref{eqn:probem_decom_leve1}). Specifically, as shown in \cite{shi2016optimal}, the optimal powers are given as functions of the transmission rate $R$ and target outage probability $\alpha$ as
\begin{align}\label{eqn:P_L_star_sec}
{P_L^*}&{ = {{\left( {\frac{{{\phi _L}\prod\limits_{k = 2}^L {{{\left( {\frac{{2{\phi _{k - 1}}}}{{{\phi _{k - 2}}}}} \right)}^{{2^{1 - k}}}}} }}{{{2^{L - 1}}\alpha{\phi _{L - 1}} }}} \right)}^{\frac{{{2^{L - 1}}}}{{{2^L} - 1}}}}},
\end{align}
\begin{equation}\label{eqn:kkt1_subs_fin}
P_l^* = \prod\limits_{k = l + 1}^L {{{\left( {\frac{{2{\phi _{k - 1}}}}{{{\phi _{k - 2}}}}} \right)}^{{2^{l - k}}}}} {P_L^*}^{{2^{l - L}}},\quad 1 \le l \le L - 1,
\end{equation}
and the minimal average total transmission power $\bar P^*$ can be obtained in a simple form as
\begin{equation}\label{EQN:AVG_P}
{{\bar P}^*} = \frac{{\left( {{2^L} - 1} \right){\alpha ^{ - \frac{1}{{{2^L} - 1}}}}}}{{{2^{\frac{L}{{1 - {2^{ - L}}}} - 2}}}}{\left( {\prod\limits_{k = 1}^L {{{\left( {\frac{{{\phi _k}}}{{{\phi _{k - 1}}}}} \right)}^{{2^{ - k}}}}} } \right)^{\frac{1}{{1 - {2^{ - L}}}}}}.
\end{equation}

Clearly from (\ref{eqn:P_L_star_sec}), (\ref{eqn:kkt1_subs_fin}) and (\ref{EQN:AVG_P}), the decrease of the target outage probability $\alpha$ will lead to the increase of transmission powers $P^*_l$ and then the increase of the minimal average total transmission power $\bar P^*$. In addition, it can be found that the minimal average total transmission power $\bar P^*$ becomes irrelevant to the target outage probability $\alpha$ when $L \to \infty$.

\subsection{Optimal Outage Probability}
Now putting the optimal powers (\ref{eqn:P_L_star_sec}) and (\ref{eqn:kkt1_subs_fin}) into the original optimization problem (\ref{eqn:opt_prob_reform}), the original problem can be reduced to the optimization of two variables of the transmission rate $R$ and target outage probability $\alpha$ as
\begin{equation}\label{eqn:probe_leve2}
\begin{array}{*{20}{l}}
{\mathop {\rm \max}\limits_{R,\alpha} }&\frac{R(1-\alpha)}{{{\bar P}^*}}\\
{{\rm{subject}}\,{\rm{to}}}&{0 \le \alpha \le \varepsilon }\\
{}&{R(1-\alpha) \ge \mathcal T_0}\\
{}&{R \ge 0}.
\end{array}
\end{equation}
With (\ref{EQN:AVG_P}), the objective function of (\ref{eqn:probe_leve2}) can be rewritten as
\begin{equation}\label{eqn:rewri_obje}
 \frac{{R(1 - \alpha )}}{{{{\bar P}^*}}} = \underbrace {\frac{{{2^{\frac{L}{{1 - {2^{ - L}}}} - 2}}}}{{\left( {{2^L} - 1} \right)}}}_{\triangleq \psi} \underbrace {(1 - \alpha ){\alpha ^{\frac{1}{{{2^L} - 1}}}}}_{\triangleq f\left( \alpha  \right)} {\frac{R}{{{{\left( {\prod\limits_{k = 1}^L {{{\left( {\frac{{{\phi _k}}}{{{\phi _{k - 1}}}}} \right)}^{{2^{ - k}}}}} } \right)}^{\frac{1}{{1 - {2^{ - L}}}}}}}}}.
\end{equation}
It is clear that the target outage probability is only involved in the term $f\left( \alpha  \right)$ in (\ref{eqn:rewri_obje}). When given the transmission rate $R$, the problem in (\ref{eqn:probe_leve2}) is equivalent to the following target outage probability optimization as
\begin{equation}\label{eqn:tar_out_prob}
\begin{array}{*{20}{l}}
{\mathop {{\rm{\max}}}\limits_{\alpha } }&{f\left( \alpha  \right)}\\
{{\rm{subject\,to}}}&{0 \le \alpha  \le \varepsilon }\\
{}&{R(1 - \alpha ) \ge {\mathcal T_0}},
\end{array}
\end{equation}
whose feasibility and optimal solution can be found in the following theorem.
\begin{theorem}\label{the:optimal_outage}
The optimization problem (\ref{eqn:tar_out_prob}) is infeasible when $R \le \mathcal T_0$. When $R > \mathcal T_0$, the optimal target outage probability is ${\alpha ^*} = {\min \left\{ {\varepsilon ,1 - \frac{{{{\mathcal T_0}}}}{R}},2^{-L} \right\}}$ and the optimal $f\left( {{\alpha ^*}} \right)$ can be written as
\begin{align}\label{eqn:f_alpha_star}
f\left( {{\alpha ^*}} \right)
%\left\{ {\begin{array}{*{20}{c}}
%{\frac{{\chi\left( {R - {{\mathcal T_0}}} \right) - \chi\left( {R - \frac{{{{\mathcal T_0}}}}{{1 - \varepsilon }}} \right)}}{{\frac{{{{\mathcal T_0}}}}{R}{{\left( {1 - \frac{{{{\mathcal T_0}}}}{R}} \right)}^c}}} + \frac{{\chi\left( {R - \frac{{{{\mathcal T_0}}}}{{1 - \varepsilon }}} \right)}}{{(1 - \varepsilon ){\varepsilon ^c}}},}&{\varepsilon  \le {2^{ - L}}}\\
%{\frac{{\chi\left( {R - {{\mathcal T_0}}} \right) - \chi\left( {R - \frac{{{{\mathcal T_0}}}}{{1 - {2^{ - L}}}}} \right)}}{{\frac{{{{\mathcal T_0}}}}{R}{{\left( {1 - \frac{{{{\mathcal T_0}}}}{R}} \right)}^c}}} + \frac{{\chi\left( {R - \frac{{{{\mathcal T_0}}}}{{1 - {2^{ - L}}}}} \right)}}{{(1 - {2^{ - L}}){{\left( {{2^{ - L}}} \right)}^c}}},}&{\varepsilon  > {2^{ - L}}}
%\end{array}} \right.
&=  {{\frac{{{{\mathcal T_0}}}}{R}{{\left( {1 - \frac{{{{\mathcal T_0}}}}{R}} \right)}^c}}}\left( {{\chi}\left( {R - {{\mathcal T_0}}} \right) - {\chi}\left( {R - \frac{{{{\mathcal T_0}}}}{{1 - \Delta }}} \right)} \right) \notag\\
&\quad + {{(1 - \Delta ){\Delta ^c}}}{\chi }\left( {R - \frac{{{{\mathcal T_0}}}}{{1 - \Delta }}} \right),
\end{align}
where $c = \frac{1}{{{2^L} - 1}}$, $\Delta  = \min \left\{ {\varepsilon ,{2^{ - L}}} \right\}$, and $\chi (t)$ is an indicator function as
\begin{equation}\label{eqn:indi_fundef}
\chi (t){\rm{ = }}\left\{ {\begin{array}{*{20}{c}}
{\rm{0}}&{t < 0}\\
{\rm{1}}&{t \ge 0}
\end{array}} \right..
\end{equation}

%\begin{equation}\label{eqn:tar_out_fin}
%{\alpha ^*} = \left\{ {\begin{array}{*{20}{l}}
%{\min \left\{ {\varepsilon ,1 - \frac{{{{\mathcal T_0}}}}{R}} \right\},}&{\min \left\{ {\varepsilon ,1 - \frac{{{{\mathcal T_0}}}}{R}} \right\} \le {2^{ - L}}};\\
%{{2^{ - L}},}&{\min \left\{ {\varepsilon ,1 - \frac{{{{\mathcal T_0}}}}{R}} \right\} > {2^{ - L}}}.
%\end{array}} \right.
%\end{equation}
\end{theorem}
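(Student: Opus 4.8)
\emph{Proof plan.} The idea is to view (\ref{eqn:tar_out_prob}) as a scalar optimization of the smooth function $f(\alpha)=(1-\alpha)\alpha^{c}$, with $c=\frac{1}{2^{L}-1}>0$, over the interval cut out by the two affine constraints, and to exploit the unimodality of $f$. First I would handle feasibility. Since $R>0$ whenever $\mathcal T_0>0$ and the constraint is violated at $R=0$, the goodput constraint $R(1-\alpha)\ge\mathcal T_0$ is equivalent to $\alpha\le 1-\frac{\mathcal T_0}{R}$; together with $0\le\alpha\le\varepsilon$ the feasible set is the interval $\big[\,0,\ \bar\alpha\,\big]$ with $\bar\alpha=\min\{\varepsilon,\ 1-\frac{\mathcal T_0}{R}\}$. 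This set is nonempty precisely when $1-\frac{\mathcal T_0}{R}\ge 0$, i.e. $R\ge\mathcal T_0$; and at the boundary $R=\mathcal T_0$ the only admissible point is the degenerate $\alpha=0$, which by (\ref{eqn:outage_prob})--(\ref{EQN:AVG_P}) requires infinite power (equivalently gives $f=0$ and zero energy efficiency), so the problem is declared infeasible for $R\le\mathcal T_0$. This establishes the first claim.

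Next, assume $R>\mathcal T_0$ so that $\bar\alpha>0$. I would differentiate, obtaining $f'(\alpha)=\alpha^{c-1}\big(c-(1+c)\alpha\big)$ on $(0,1)$; since $\alpha^{c-1}>0$ there, $f$ strictly increases on $\big(0,\tfrac{c}{1+c}\big)$ and strictly decreases on $\big(\tfrac{c}{1+c},1\big)$. The one algebraic simplification needed is $\frac{c}{1+c}=\frac{1/(2^{L}-1)}{2^{L}/(2^{L}-1)}=2^{-L}$, so $f$ is unimodal on $[0,1]$ with unconstrained maximizer $2^{-L}$, and $f(0)=0<f$ on $(0,1)$. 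Hence the maximizer of $f$ over $[0,\bar\alpha]$ is the projection of $2^{-L}$ onto that interval: $\alpha^{*}=\min\{\bar\alpha,\,2^{-L}\}=\min\{\varepsilon,\,1-\tfrac{\mathcal T_0}{R},\,2^{-L}\}$, which is exactly the stated optimal target outage probability.

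To reach the closed form (\ref{eqn:f_alpha_star}), put $\Delta=\min\{\varepsilon,2^{-L}\}$ so that $\alpha^{*}=\min\{\Delta,\,1-\tfrac{\mathcal T_0}{R}\}$, and split into the two complementary regimes over $R>\mathcal T_0$. Regime (i): $1-\frac{\mathcal T_0}{R}\le\Delta$, i.e. $\mathcal T_0<R\le\frac{\mathcal T_0}{1-\Delta}$ (well defined since $0<\Delta<1$), where $\alpha^{*}=1-\frac{\mathcal T_0}{R}$ and hence $f(\alpha^{*})=\frac{\mathcal T_0}{R}\big(1-\frac{\mathcal T_0}{R}\big)^{c}$. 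Regime (ii): $R>\frac{\mathcal T_0}{1-\Delta}$, where $\alpha^{*}=\Delta$ and $f(\alpha^{*})=(1-\Delta)\Delta^{c}$. Encoding the event of regime (i) as $\chi(R-\mathcal T_0)-\chi\big(R-\frac{\mathcal T_0}{1-\Delta}\big)$ and that of regime (ii) as $\chi\big(R-\frac{\mathcal T_0}{1-\Delta}\big)$, and adding the two contributions, reproduces (\ref{eqn:f_alpha_star}).

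I do not anticipate a genuine obstacle here; the argument is a unimodality/projection argument and the work is bookkeeping: verifying $\frac{c}{1+c}=2^{-L}$, checking $0<\Delta<1$ so that $\frac{\mathcal T_0}{1-\Delta}$ makes sense and exceeds $\mathcal T_0$, and confirming that the indicator combination partitions $(\mathcal T_0,\infty)$ without overlap. The only mildly delicate point is continuity of $f(\alpha^{*})$ at the breakpoints $R=\mathcal T_0$ (where the first term vanishes) and $R=\frac{\mathcal T_0}{1-\Delta}$ (where the two expressions agree), which makes the choice of $\ge$ versus $>$ inside $\chi$ immaterial and keeps the piecewise formula consistent with the infeasibility statement at $R=\mathcal T_0$.
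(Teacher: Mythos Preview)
Your proposal is correct and reaches exactly the conclusion of Theorem~\ref{the:optimal_outage}, but the route differs from the paper's in a way worth noting. The paper proves Theorem~\ref{the:optimal_outage} by writing down the Lagrangian of (\ref{eqn:tar_out_prob}) with multipliers $u,\nu$ for the constraints $\alpha\le\min\{\varepsilon,1-\mathcal T_0/R\}$ and $\alpha\ge0$, imposing the KKT stationarity condition $f'(\alpha^*)+u^*-\nu^*=0$, eliminating $\nu^*$ via $\alpha^*>0$, and then splitting on whether $u^*=0$ (yielding $\alpha^*=2^{-L}$) or $u^*>0$ (forcing $\alpha^*=\min\{\varepsilon,1-\mathcal T_0/R\}<2^{-L}$). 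Your argument instead computes $f'(\alpha)=\alpha^{c-1}\big(c-(1+c)\alpha\big)$ directly, observes the single interior critical point at $c/(1+c)=2^{-L}$, and uses unimodality to project the unconstrained maximizer onto the feasible interval $[0,\bar\alpha]$. Both derivations are equivalent at the level of calculus; yours is more elementary and transparent for a scalar problem, while the KKT presentation in the paper mirrors the multiplier framework used elsewhere (e.g., in the proof of Theorem~\ref{the:optimal_rate}) and makes the complementary-slackness structure explicit. The bookkeeping you flag---checking $0<\Delta<1$, continuity at $R=\mathcal T_0/(1-\Delta)$, and the degeneracy $\alpha=0\Rightarrow\bar P^*=\infty$ at $R=\mathcal T_0$---is handled correctly and matches the paper's treatment of the same boundary cases.
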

\begin{proof}
Please see Appendix \ref{app:proof_tarout}.
\end{proof}
%From Theorem \ref{the:optimal_outage}, it is readily found that ${\alpha ^*} \le 2^{-L}$. Therefore, the target outage probability $\alpha$ of HARQ schemes should not be set larger than $2^{-L}$ in practice because of the impossibility of achieving the maximal energy efficiency.

\subsection{Optimal Rate Selection}
After determining the optimal target outage probability $\alpha^*$ and considering the feasible region of $R > \mathcal T_0$ in Theorem \ref{the:optimal_outage}, the energy-efficient optimization in (\ref{eqn:probe_leve2}) is finally reduced to the optimal rate selection as
\begin{equation}\label{eqn:rate_sele_prob}
\begin{array}{*{20}{l}}
{\mathop {{\rm{\max}}}\limits_R }&{{\eta_L } = \psi f\left( {{\alpha ^*}} \right){\frac{R}{{{{\left( {\prod\limits_{k = 1}^L {{{\left( {\frac{{{\phi _k}}}{{{\phi _{k - 1}}}}} \right)}^{{2^{ - k}}}}} } \right)}^{\frac{1}{{1 - {2^{ - L}}}}}}}}}}\\
{{\rm{subject\,to}}}&{R > {{\mathcal T_0}}.}
\end{array}
\end{equation}

Plugging (\ref{eqn:f_alpha_star}) into (\ref{eqn:rate_sele_prob}), although the optimal transmission rate $R^*$ can be computed numerically through one dimensional search, it is lacking of clear insights. In this paper, we aim to derive the optimal rate in closed-form and extract clear insights for energy-efficient optimization. Since $\phi _k$ in the objective function depends on the transmission rate $R$ as shown in (\ref{eqn:Wl_def}) and changes among different HARQ schemes, the optimal rate selection for various HARQ schemes should be investigated individually and will be discussed one by one in the next section.

%However, in order to extract more useful insights, the optimal transmission rate is derived in closed-form in the sequel. The optimal energy efficiencies of three HARQ schemes are individually discussed in the next section, because the asymptotic outage probabilities of three HARQ schemes possess distinct expressions of $\phi_k$.

\section{Optimal Rate and Optimal Energy Efficiency}\label{sec:opt_rate_ee}
\subsection{Type I HARQ}\label{sec:harq_I}
\subsubsection{Optimal Rate}
Putting (\ref{eqn:Wl_def}) into the objective function in (\ref{eqn:rate_sele_prob}), the energy efficiency of Type I HARQ can be written as
\begin{equation}\label{eqn:power_eff_rew_op}
{\eta _{I,L}} %= \psi f\left( {{\alpha ^*}} \right)\vartheta_I\left( R \right),
= \psi {\theta _{L}} \frac{f\left( {{\alpha ^*}} \right)R}{{{2^R} - 1}},
\end{equation}
where
\begin{align}\label{eqn:theta_I}
{\theta _L} &= {\left( {\prod\limits_{k = 1}^L {{{\left( {\frac{{{\varsigma _{k - 1}}}}{{{\varsigma _k}}}} \right)}^{{2^{ - k}}}}} } \right)^{\frac{1}{{1 - {2^{ - L}}}}}}\notag\\
&= {\left( {\prod\limits_{k = 1}^L {{{\left( {\frac{{\ell \left( {k,\rho } \right){{\sigma_k}^2 }}}{{\ell \left( {k - 1,\rho } \right)}}} \right)}^{{2^{ - k}}}}} } \right)^{\frac{1}{{1 - {2^{ - L}}}}}}.
\end{align}
Substituting (\ref{eqn:f_alpha_star}) into (\ref{eqn:power_eff_rew_op}) yields
%Thus, the energy efficiency ${\eta _{EE}^I}$ can be derived as
\begin{align}\label{eqn:power_type_I_rew1}
{\eta _{I,L}}&= {{\psi {\theta _L}}}{{{{\mathcal T_0}}}}\frac{{{{\left( {1 - \frac{{{{\mathcal T_0}}}}{R}} \right)}^c}}}{{{2^R} - 1}}\left( {{\chi}\left( {R - {{\mathcal T_0}}} \right) - {\chi}\left( {R - \frac{{{{\mathcal T_0}}}}{{1 - \Delta }}} \right)} \right) \notag\\
&\quad + {{\psi {\theta _L}}}{{(1 - \Delta ){\Delta ^c}}}\frac{R}{{{2^R} - 1}}{\chi}\left( {R - \frac{{{{\mathcal T_0}}}}{{1 - \Delta }}} \right).
\end{align}
From (\ref{eqn:power_type_I_rew1}), we can see that when $R > \frac{\mathcal T_0}{1-\Delta}$, the first term is zero and the energy efficiency reduces to ${\eta _{I,L}}={{\psi {\theta _L}}}{{(1 - \Delta ){\Delta ^c}}}\frac{R}{{{2^R} - 1}}$ which is a decreasing function of $R$. Moreover, the energy efficiency ${\eta _{I,L}}$ is continuous in the domain of $R > \mathcal T_0$. Therefore, considering the constraint of $R > \mathcal T_0$ in (\ref{eqn:rate_sele_prob}), we can conclude that the optimal rate ${R^*}$ to achieve the maximal energy efficiency should lie within the range of $\left({\mathcal T_0}, \frac{\mathcal T_0}{1-\Delta}\right]$. In this range, the second term in (\ref{eqn:power_type_I_rew1}) is zero and the energy efficiency can be simplified as
%
%Noticing that the first term and the second term of the right hand side of (\ref{eqn:power_type_I_rew1}) are zero in $R > \frac{\mathcal T_0}{1-\Delta}$ and ${\mathcal T_0} < R \le \frac{\mathcal T_0}{1-\Delta}$, respectively, the second term of the right hand side of (\ref{eqn:power_type_I_rew1}) is a decreasing function of $R$,  and ${\eta _{I,L}}$ is found to be continuous in its domain $R > \mathcal T_0$ together with the fact that the problem is infeasible when $R \le \mathcal T_0$, we have the following property about the optimal rate.
%\begin{property}
%The optimal rate ${R^*}$ to achieve the maximal energy efficiency lies in $\left({\mathcal T_0}, \frac{\mathcal T_0}{1-\Delta}\right]$.%, which means ${\mathcal T_0} < R^* \le \frac{\mathcal T_0}{1-\Delta}$.
%\end{property}
%
%
%With this property, ${\eta _{I,L}}$ can be simplified as
\begin{align}\label{eqn:power_type_I_rew1_corr}
{\eta _{I,L}}&= {{\psi {\theta _L}}}{{{{\mathcal T_0}}}}\frac{{{{\left( {1 - \frac{{{{\mathcal T_0}}}}{R}} \right)}^c}}}{{{2^R} - 1}}.
\end{align}
Then the problem of rate selection in (\ref{eqn:rate_sele_prob}) is equivalent to a minimization problem as
\begin{equation}\label{eqn:type_i_rate_ref1}
\begin{array}{*{20}{l}}
{\mathop {{\rm{\min}}}\limits_R }&{\Phi \left( R \right) = \left( {{2^R} - 1} \right){\left( {1 - \frac{{{{\mathcal T_0}}}}{R}} \right)^{ - c}}}\\
{{\rm{subject}}{\mkern 1mu}\,{\rm{to}}}&{{{\mathcal T_0}} < R \le \frac{{{{\mathcal T_0}}}}{{1 - \Delta }}}.
\end{array}
\end{equation}
The optimal solution to (\ref{eqn:type_i_rate_ref1}) can be found in closed-form and is shown in the following theorem.
%According to (\ref{eqn:opt_rate_type_i1}) along with the critical value $\bar {\mathcal T}_0$, the optimal transmission rate ${R^*}$ can be rewritten as
\begin{theorem}\label{the:optimal_rate}
The optimal rate to maximize the energy efficiency while guaranteeing the outage and goodput performance for Type I HARQ is
\begin{equation}\label{eqn:opt_rate_type_i_re}
%{R^*} = \min \left\{ {\varphi^{-1}(0),\frac{{{{\mathcal T_0}}}}{{1 - \Delta }}} \right\}.
{R^*} = %\left\{ {\begin{array}{*{20}{c}}
%{\frac{{{{\cal T}_0}}}{{1 - \Delta }},}&{{\mathcal T_0} \le {\bar{\mathcal T}_0}}\\
%{{\varphi ^{ - 1}}(0),}&{{\mathcal T_0} > {\bar{\mathcal T}_0},}
%\end{array}} \right.
\min\left\{{\frac{{{{\cal T}_0}}}{{1 - \Delta }}}, {{\varphi ^{ - 1}}(0)}\right\},
\end{equation}
where $\varphi^{-1} $ denotes the inverse function with respect to $\varphi \left( R \right) = \ln \left( 2 \right)R\left( {R - {{\mathcal T_0}}} \right){2^R} - c{{\mathcal T_0}}\left( {{2^R} - 1} \right)$ and $\varphi^{-1}(0)$ refers to the zero point of $\varphi(R)$. When $\varepsilon \ge 2^{-L}$, the optimal transmission rate reduces to $R^*=\varphi^{-1}(0)$.
\end{theorem}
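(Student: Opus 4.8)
The plan is to solve the minimization problem \eqref{eqn:type_i_rate_ref1} by standard calculus, exploiting the fact that $\Phi(R)$ is smooth and strictly positive on the open interval $(\mathcal{T}_0, \mathcal{T}_0/(1-\Delta)]$. First I would take the logarithmic derivative, i.e. study $\frac{d}{dR}\ln\Phi(R) = \frac{\ln 2 \cdot 2^R}{2^R-1} - c\cdot\frac{d}{dR}\ln\!\left(1-\frac{\mathcal{T}_0}{R}\right)$, and observe that setting this to zero is equivalent to clearing denominators, which yields precisely $\varphi(R) = \ln(2)\,R(R-\mathcal{T}_0)2^R - c\,\mathcal{T}_0(2^R-1) = 0$. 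So the interior stationary points of $\Phi$ on $(\mathcal{T}_0,\infty)$ are exactly the zeros of $\varphi$.

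Next I would establish that $\varphi$ has a unique zero in $(\mathcal{T}_0,\infty)$ and that this zero is a minimizer of $\Phi$. Near $R \to \mathcal{T}_0^+$, the factor $(1-\mathcal{T}_0/R)^{-c}$ blows up to $+\infty$ while $2^R-1$ stays bounded away from $0$, so $\Phi(R)\to+\infty$; equivalently $\varphi(\mathcal{T}_0^+) = -c\,\mathcal{T}_0(2^{\mathcal{T}_0}-1) < 0$. As $R\to\infty$, the term $\ln(2)R(R-\mathcal{T}_0)2^R$ dominates $c\,\mathcal{T}_0 2^R$, so $\varphi(R)\to+\infty$; hence a zero exists by the intermediate value theorem. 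For uniqueness I would argue that $\varphi(R)/2^R = \ln(2)R(R-\mathcal{T}_0) - c\,\mathcal{T}_0(1-2^{-R})$ is strictly increasing on $(\mathcal{T}_0,\infty)$: its derivative is $\ln(2)(2R-\mathcal{T}_0) - c\,\mathcal{T}_0\ln(2)2^{-R}$, which is positive there since $2R-\mathcal{T}_0 > \mathcal{T}_0 > c\,\mathcal{T}_0 2^{-R}$ (using $c\le 1$ and $2^{-R}<1$). Since $2^R>0$, $\varphi$ and $\varphi/2^R$ have the same sign pattern, so $\varphi$ changes sign exactly once, at $\varphi^{-1}(0)$; moreover $\Phi' = \Phi\cdot(\ln\Phi)'$ has the same sign as $\varphi$, so $\Phi$ is decreasing then increasing, and $\varphi^{-1}(0)$ is its unique interior minimizer on $(\mathcal{T}_0,\infty)$.

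Finally I would handle the boundary. The feasible set for \eqref{eqn:type_i_rate_ref1} is $(\mathcal{T}_0, \mathcal{T}_0/(1-\Delta)]$. If $\varphi^{-1}(0) \le \mathcal{T}_0/(1-\Delta)$, the unconstrained minimizer is feasible and $R^* = \varphi^{-1}(0)$. If instead $\varphi^{-1}(0) > \mathcal{T}_0/(1-\Delta)$, then $\Phi$ is still decreasing throughout the feasible interval (since we have not yet reached the minimizer), so the minimum over the half-open interval is attained at the right endpoint $R^* = \mathcal{T}_0/(1-\Delta)$. Combining the two cases gives $R^* = \min\{\mathcal{T}_0/(1-\Delta),\ \varphi^{-1}(0)\}$ as claimed. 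For the last sentence, recall $\Delta = \min\{\varepsilon, 2^{-L}\}$; when $\varepsilon \ge 2^{-L}$ we get $\Delta = 2^{-L}$, so $\mathcal{T}_0/(1-\Delta) = \mathcal{T}_0/(1-2^{-L}) = \mathcal{T}_0\cdot 2^L/(2^L-1)$, and I would need to show $\varphi^{-1}(0) \le \mathcal{T}_0/(1-2^{-L})$ always holds in this regime — equivalently $\varphi(\mathcal{T}_0/(1-2^{-L})) \ge 0$ — which reduces, after substituting $R = \mathcal{T}_0/(1-2^{-L})$ and using $c = 1/(2^L-1) = 2^{-L}/(1-2^{-L})$, to an inequality in $\mathcal{T}_0$ and $L$ that I would verify directly; this boundary comparison is the one slightly delicate computational step, and I expect it to be the main obstacle in making the argument airtight.
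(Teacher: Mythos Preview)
Your approach is essentially the same as the paper's: both reduce the problem to studying the sign of the auxiliary function $\varphi(R)$ on $(\mathcal{T}_0,\infty)$, establish that $\varphi$ is strictly increasing there (hence has a unique zero), and then compare that zero with the right endpoint $\mathcal{T}_0/(1-\Delta)$. The only cosmetic difference is that the paper packages the boundary analysis via KKT conditions rather than your direct unimodality argument, and it proves monotonicity by computing $\varphi'(R)>0$ directly instead of via $(\varphi/2^R)'>0$; neither variation buys anything substantive.

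The one genuine gap in your proposal is the final step, which you yourself flag as ``the main obstacle'': verifying that $\varphi\bigl(\mathcal{T}_0/(1-2^{-L})\bigr)\ge 0$ when $\varepsilon\ge 2^{-L}$ (so $\Delta=2^{-L}$). The paper does complete this computation. Writing $R_0=\mathcal{T}_0/(1-\Delta)$ and expanding,
\[
\varphi(R_0)=2^{R_0}\mathcal{T}_0\left(\ln(2)\,\frac{\Delta\,\mathcal{T}_0}{(1-\Delta)^2}+c\bigl(2^{-R_0}-1\bigr)\right),
\]
and then applying the elementary inequality $2^{-x}\ge 1-x\ln 2$ gives
\[
\varphi(R_0)\ge 2^{R_0}\mathcal{T}_0\,\ln(2)\,\frac{\mathcal{T}_0}{1-\Delta}\left(\frac{\Delta}{1-\Delta}-c\right).
\]
With $\Delta=2^{-L}$ one has $\Delta/(1-\Delta)=1/(2^L-1)=c$, so the right-hand side is exactly zero, yielding $\varphi(R_0)\ge 0$ and hence $\varphi^{-1}(0)\le R_0$. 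Plugging this short calculation into your outline makes the argument complete.
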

\begin{proof}
Please see Appendix \ref{app:proof_convex}.
\end{proof}
It is noteworthy that the zero point $\varphi^{-1}(0)$ can be easily computed since $\varphi \left( R \right) $ is an increasing function of $R$ within the range of $(\mathcal T_0, \infty)$.
%\begin{lemma}\label{rem:varpsilon}
%given fixed values of the minimum goodput demand $\mathcal T_0$ and the maximal number of transmissions $L$.
%\end{lemma}
%\begin{proof}
%Please see Appendix \ref{app:var_the_proof}.
%\end{proof}

\subsubsection{Optimal Energy Efficiency}\label{sec:opt_ee_I}
Putting the optimal rate (\ref{eqn:opt_rate_type_i_re}) into (\ref{eqn:power_type_I_rew1_corr}), the optimal energy efficiency of Type I HARQ can be obtained as
\begin{equation}\label{eqn:power_eff_opt}
{{\eta}  _{I,L}^*} = {{\psi {\theta _L}}}{{{{\mathcal T_0}}}}\frac{{{{\left( {1 - \frac{{{{\mathcal T_0}}}}{{{R^*}}}} \right)}^c}}}{{{2^{{R^*}}} - 1}}  = {{\psi {\theta _L}}}{{{{\mathcal T_0}}}} \frac{{{{\left( {{R^*} - {{\mathcal T_0}}} \right)}^c}}}{{{R^*}^c\left( {{2^{{R^*}}} - 1} \right)}}.
\end{equation}

%\subsubsection{Asymptotic Optimal Energy Efficiency}\label{sec:asy_opt_I}
As aforementioned, the maximal energy efficiency of HARQ schemes over independent Rayleigh fading channels without QoS constraints has been studied in \cite{jabi2016energy}. It has been proved that the maximal energy efficiency of Type I HARQ operating over \emph{independent Rayleigh fading channels without QoS constraints} is $\frac{1}{{\rm e} \ln(2)}$. From (\ref{eqn:power_eff_opt}), we can see that channel time correlation will affect the optimal energy efficiency through the term $\theta_L$. As proved in Appendix \ref{app:time_corr}, $\theta_L$ is a decreasing function of time correlation coefficient $\rho$. In other words, channel time correlation has a negative impact on the optimal energy efficiency and time-correlated fading channels provide lower energy efficiency than time independent fading channels with $\rho=0$. We thus can expect that the maximal energy efficiency of Type I HARQ over time-correlated fading channels with QoS constraints will be lower than $\frac{1}{{\rm e} \ln(2)}$ and will be investigated here.

To proceed with the investigation, we first analyze the monotonic property of the optimal energy efficiency with respect to the maximal number of transmissions $L$. From the original energy efficiency maximization problem in (\ref{eqn:joint_adaption}), we can find the following property.

\begin{property}\label{the:type_I_eff}
The optimal energy efficiencies of all the three HARQ schemes are non-decreasing functions of the maximal number of transmissions $L$ and ${\eta _{L}^*} \le \mathop {\lim }\limits_{L \to \infty}{\eta _{L}^*}  \triangleq  {\eta _{\infty}^*} $. %Thus the maximal energy efficiency can be achieved as $L \to \infty$, i.e., ${\eta _{\infty}^*}$.
%
%Basically, this conclusion holds for ${\eta _{I,L}^*}$ obtained via the asymptotic outage probability when under a low outage tolerance.
\end{property}
\begin{proof}
Please see Appendix \ref{app:there2_proof}.
% in this case. %From (\ref{eqn:upper_bound}), high SNR requires a low outage tolerance $\varepsilon$. $\varepsilon \le 0.1$ is normally sufficient to guarantee the requirement (Relative to the target throughput $\mathcal T_0$), which is usually satisfied in practical applications.
\end{proof}

Notice that this property is applicable to all the three HARQ schemes. For Type I HARQ scheme, the maximal energy efficiency is thus achieved when $L \to \infty$ and is denoted as ${\eta _{I,\infty}^*}$ which can be found as shown in the following theorem.
%\textcolor[rgb]{1.00,0.00,0.00}{Insights about the relationship between target throughput and transmission rate.}
%Substituting the inequalities (\ref{eqn:theta_type_I}) into (\ref{eqn:type_power_eff_fin_bound}), the optimal energy efficiency ${\eta _{I,\infty}^*}$ is bounded by
%\begin{equation}\label{eqn:power_eff_fina_bouncc}
%\frac{{{{\left( {1 - {\rho ^{2(1 + 2\delta )}}} \right)}^{\frac{1}{4}}}{{\mathcal T_0}} }}{{4\left( {{2^{{{\mathcal T_0}}}} - 1} \right)}} \le {\eta _{I,\infty}^*}\le \frac{{\left( {1 - {\rho ^{2(1+\delta) }}} \right)^{\frac{1}{2}} {{\mathcal T_0}}}}{{4\left( {{2^{{{\mathcal T_0}}}} - 1} \right)}}.
%\end{equation}
%where the equalities hold if and only if $\rho=0$, i.e., fast fading channels.
\begin{theorem}\label{the:ee_I}
Under time-correlated Rayleigh fading channels, the optimal energy efficiency of Type I HARQ with outage and goodput constraints is upper bounded by
\begin{equation}\label{eqn:upper_bound}
{\eta _{I,L}^*} \le {\eta _{I,\infty}^*} = \frac{{{\theta _\infty }}{{\mathcal T_0}}}{{4 \left( {{2^{{{\mathcal T_0}}}} - 1} \right)}},
\end{equation}
where $\theta_\infty \triangleq {\mathop {\lim }\limits_{L \to \infty } {\theta _L}}$ exists. The maximal energy efficiency ${\eta _{I,\infty}^*}$ is a decreasing function of the goodput threshold $\mathcal T_0$. Particularly, for Rayleigh fading channels with unit channel gains as ${\sigma_1}^2=\cdots={\sigma_L}^2=1$, $\theta_\infty \le 1$ and the maximal energy efficiency of Type I HARQ with outage and goodput constraints satisfies
\begin{equation}\label{type_I_maximal}
{\eta _{I,\infty}^*} = \frac{{{\theta _\infty }}{{\mathcal T_0}}}{{4 \left( {{2^{{{\mathcal T_0}}}} - 1} \right)}} \le \mathop {\lim }\limits_{{\mathcal T_0} \to 0} \frac{{{{\mathcal T_0}} }}{{4\left( {{2^{{{\mathcal T_0}}}} - 1} \right)}}=\frac{1}{4\ln(2)} \triangleq {\eta _{I,\infty}^{\max}}.
\end{equation}
%The maximal energy efficiency of Type I HARQ with outage and goodput constraints becomes $\frac{1}{4\ln(2)}$.
%With this result and the adverse impact of time correlation, the maximal energy efficiency is ${\eta _{I,\infty}^{\max}} = \frac{1}{4\ln(2)}$, corresponding to the case without goodput requirement, because ${\eta _{I,\infty}^*}$ is a decreasing function of ${\mathcal T_0}$ and $\mathop {\lim }\limits_{{\mathcal T_0} \to 0} \frac{{{\mathcal T_0}}}{{{2^{{\mathcal T_0}}} - 1}} = \ln \left( 2 \right)$.
\end{theorem}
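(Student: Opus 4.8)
The plan is to derive the upper bound from Property~\ref{the:type_I_eff}, which already gives ${\eta_{I,L}^*}\le{\eta_{I,\infty}^*}=\lim_{L\to\infty}{\eta_{I,L}^*}$, so everything reduces to evaluating this limit starting from the closed form~(\ref{eqn:power_eff_opt}). Write $c=\frac{1}{2^L-1}$, $\Delta=\min\{\varepsilon,2^{-L}\}$, and recall from Theorems~\ref{the:optimal_outage} and~\ref{the:optimal_rate} that $\mathcal{T}_0<R^*\le\frac{\mathcal{T}_0}{1-\Delta}$ with $R^*=\min\{\frac{\mathcal{T}_0}{1-\Delta},\varphi^{-1}(0)\}$. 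First I would dispose of the benign factors: $\psi=\frac{2^{L/(1-2^{-L})-2}}{2^L-1}\to\frac14$ because $\frac{L}{1-2^{-L}}-L=\frac{L2^{-L}}{1-2^{-L}}\to0$; the infinite product underlying~(\ref{eqn:theta_I}) converges since the exponents $2^{-k}$ are geometrically summable while the logarithms $\ln\!\big(\ell(k,\rho)\sigma_k^2/\ell(k-1,\rho)\big)$ stay bounded, and as $\frac{1}{1-2^{-L}}\to1$ this establishes that $\theta_\infty=\lim_L\theta_L$ exists; and $\Delta\to0$ forces $R^*\to\mathcal{T}_0^+$, hence $2^{R^*}-1\to2^{\mathcal{T}_0}-1$.

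The only delicate factor is $(1-\mathcal{T}_0/R^*)^c$, which is of indeterminate $0^0$ type, and I would handle it by squeezing. The upper estimate is immediate: $R^*\le\frac{\mathcal{T}_0}{1-\Delta}$ gives $1-\mathcal{T}_0/R^*\le\Delta\le2^{-L}$, so $(1-\mathcal{T}_0/R^*)^c\le2^{-Lc}=2^{-L/(2^L-1)}\to1$. For the matching lower estimate I need $R^*-\mathcal{T}_0\ge C\,c$ for some constant $C>0$ and all large $L$, which I would get by bounding both candidates in $R^*=\min\{\frac{\mathcal{T}_0}{1-\Delta},\varphi^{-1}(0)\}$: on one hand $\frac{\mathcal{T}_0}{1-\Delta}-\mathcal{T}_0\ge\mathcal{T}_0\Delta=\mathcal{T}_0 2^{-L}\ge\frac{\mathcal{T}_0}{2}c$ once $2^{-L}\le\varepsilon$; on the other hand, since $\varphi$ is increasing on $(\mathcal{T}_0,\infty)$, it suffices to check that $\varphi(\mathcal{T}_0+Kc)=\ln2\,(\mathcal{T}_0+Kc)(Kc)2^{\mathcal{T}_0+Kc}-c\mathcal{T}_0(2^{\mathcal{T}_0+Kc}-1)<0$, which holds for any fixed $K<\frac{1-2^{-\mathcal{T}_0}}{\ln2}$ and $L$ large enough (divide by $c$ and let $L\to\infty$). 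Then $C'c\le1-\mathcal{T}_0/R^*\le2^{-L}$, and since $x\mapsto x^c$ is increasing on $(0,1)$ we get $(C'c)^c\le(1-\mathcal{T}_0/R^*)^c\le2^{-Lc}$; both ends tend to $1$ because $c\ln c\to0$ and $Lc\to0$. Multiplying the limits yields $\eta_{I,\infty}^*=\frac14\theta_\infty\mathcal{T}_0\frac{1}{2^{\mathcal{T}_0}-1}$, i.e.~(\ref{eqn:upper_bound}), and together with Property~\ref{the:type_I_eff} this gives ${\eta_{I,L}^*}\le{\eta_{I,\infty}^*}$.

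Since $\theta_\infty$ does not depend on $\mathcal{T}_0$, the monotonicity claim reduces to showing that $g(t)=\frac{t}{2^t-1}$ is strictly decreasing on $(0,\infty)$: the sign of $g'(t)$ equals that of $N(t)=2^t(1-t\ln2)-1$, and $N(0)=0$ with $N'(t)=-t(\ln2)^2 2^t<0$ forces $N(t)<0$ for $t>0$. Finally, under unit channel gains I would invoke Appendix~\ref{app:time_corr} (where $\theta_L$ is shown to be decreasing in $\rho$) together with $\theta_L|_{\rho=0}=1$ (as $\ell(k,0)=1$ for all $k$), so that $\theta_L\le1$ for every $L$ and hence $\theta_\infty\le1$; combining this with the monotonicity of $g$ and $\lim_{t\to0^+}g(t)=\frac1{\ln2}$ (L'Hôpital) gives $\eta_{I,\infty}^*=\frac{\theta_\infty\mathcal{T}_0}{4(2^{\mathcal{T}_0}-1)}\le\frac{g(\mathcal{T}_0)}{4}\le\frac1{4\ln2}$. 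I expect the main obstacle to be precisely the lower-bound step for $R^*$: every other piece is a routine limit, but controlling the rate at which $R^*-\mathcal{T}_0$ shrinks (no faster than $\Theta(c)$) is exactly what turns the indeterminate $(1-\mathcal{T}_0/R^*)^c$ into the value $1$.
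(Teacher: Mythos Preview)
Your proposal is correct and tracks the paper's proof almost step for step: the reduction to $\lim_{L\to\infty}\eta_{I,L}^*$ via Property~\ref{the:type_I_eff}, the limits $\psi\to\tfrac14$ and $R^*\to\mathcal{T}_0$, the existence of $\theta_\infty$ via summability of the exponents, the monotonicity of $t\mapsto t/(2^t-1)$, and the bound $\theta_\infty\le1$ from Appendix~\ref{app:time_corr} are all handled the same way.

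The one genuine difference is how you resolve the indeterminate factor $(1-\mathcal{T}_0/R^*)^c\to0^0$. The paper observes that for $L$ large enough Theorem~\ref{the:optimal_rate} gives $R^*=\varphi^{-1}(0)$, and then rewrites $\varphi(R^*)=0$ as the \emph{exact} identity
\[
R^*-\mathcal{T}_0=c\,\mathcal{T}_0\,\Xi(R^*),\qquad \Xi(R)=\frac{2^R-1}{\ln(2)\,R\,2^R},
\]
so that $(R^*-\mathcal{T}_0)^c=(c\,\mathcal{T}_0\,\Xi(R^*))^c\to1$ follows immediately from $c^c\to1$ and $\Xi(R^*)\to\Xi(\mathcal{T}_0)>0$. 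You instead squeeze: bound $1-\mathcal{T}_0/R^*$ above by $\Delta\le2^{-L}$ and below by $C'c$ (checking both branches of the $\min$), then let both ends go to $1$. Both arguments are valid; the paper's is slightly shorter because the identity from $\varphi(R^*)=0$ removes the need to treat the two candidates in the $\min$ separately, while your sandwich makes the $\Theta(c)$ rate for $R^*-\mathcal{T}_0$ explicit without ever solving the optimality equation.
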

\begin{proof}
Please see Appendix \ref{app:ee_I}.
%As proved in Appendix \ref{app:ee_I}, ${\eta _{I,\infty}^*}$ is given by
%\begin{align}\label{eqn:type_power_eff_fin_bound_sec}
%{\eta _{I,\infty}^*} & = \frac{{{\theta _\infty {{\cal T}_0}}}}{{4\left( {{2^{{{\cal T}_0}}} - 1} \right)}},
%\end{align}
\end{proof}

It means that when QoS constraints are considered, the maximal energy efficiency which can be achieved by Type I HARQ is $\frac{1}{4\ln(2)}$.

\subsection{HARQ-CC}\label{sec:harq_cc}
\subsubsection{Optimal Rate}
Similarly to Section \ref{sec:harq_I}, with the optimal powers $P_1^*,\cdots,P_L^*$, the optimal target outage probability $\alpha^*$ and the definition in (\ref{eqn:Wl_def}), the energy efficiency of HARQ-CC in (\ref{eqn:rate_sele_prob}) can be rewritten as
\begin{equation}\label{eqn:power_eff_cc}
{\eta _{CC,L}} %= \psi f\left( {{\alpha ^*}} \right){\vartheta _{CC}}\left( R \right)
= {\kappa _L}^{\frac{1}{{1 - {2^{ - L}}}}} \psi {\theta _L}
\frac{f\left( {{\alpha ^*}} \right)R}{{{2^R} - 1}},
\end{equation}
where ${\kappa _L} = \prod\nolimits_{k = 1}^L {{{ {{k}} }^{{2^{ - k}}}}}$.
%\begin{equation}\label{eqn:vartheta_cc}
%{\vartheta _{CC}}\left( R \right) = {\kappa _L}^{\frac{1}{{1 - {2^{ - L}}}}}{\theta _L}\frac{{{2^R} - 1}}{R},
%\end{equation}
%\begin{equation}\label{eqn:kappa_L_def}
%{\kappa _L} = \prod\limits_{k = 1}^L {{{ {{k}} }^{{2^{ - k}}}}}.
%\end{equation}
%With (\ref{eqn:fading_channel}), ${\theta _{CC}}$ can be further derived as
%\begin{equation}\label{eqn:theta_cc_fur_sim}
%{\theta _{CC}} = {\left( {\prod\limits_{k = 1}^L {{{\left( {\frac{{\ell \left( {k - 1,\rho } \right)}}{{k{{\sigma_k}^2 }\ell \left( {k,\rho } \right)}}} \right)}^{{2^{ - k}}}}} } \right)^{\frac{1}{{1 - {2^{ - L}}}}}}
%\end{equation}
Noticing that the only difference between (\ref{eqn:power_eff_rew_op}) and (\ref{eqn:power_eff_cc}) is at an additional term ${\kappa _L}^{\frac{1}{{1 - {2^{ - L}}}}}$ involved in (\ref{eqn:power_eff_cc}) and this term is irrelevant to the transmission rate, the optimal transmission rate $R^*$ of HARQ-CC can thus be derived as the same as that for Type I HARQ shown in (\ref{eqn:opt_rate_type_i_re}) in Theorem \ref{the:optimal_rate}.

%The same results can be obtained for HARQ-CC as discussed in Subsection \ref{sec:harq_I}.
%similarly to (\ref{eqn:power_type_I_rew1}), the energy efficiency can be further rewritten as
%\begin{multline}\label{eqn:power_eff_cc_rew}
%{\eta _{EE}}\left( {{P_1}^*, \cdots {P_L}^*,R} \right) = \frac{{\varphi {\theta _{CC}}}}{{{{\mathcal T_0}}}}\frac{{{2^R} - 1}}{{{{\left( {1 - \frac{{{{\mathcal T_0}}}}{R}} \right)}^c}}}\left( {{\chi}\left( {R - {{\mathcal T_0}}} \right) - {\chi}\left( {R - \frac{{{{\mathcal T_0}}}}{{1 - \Delta }}} \right)} \right)\\
% + \frac{{\varphi {\theta _{CC}}}}{{(1 - \Delta ){\Delta ^c}}}\frac{{{2^R} - 1}}{R}{\chi}\left( {R - \frac{{{{\mathcal T_0}}}}{{1 - \Delta }}} \right)
%\end{multline}
%Due to the same form of objective function as (\ref{eqn:power_type_I_rew1}), the optimal rate selection can be reformulated similarly to (\ref{eqn:type_i_rate_ref}).
%The optimal rate of HARQ-CC can be obtained as (\ref{eqn:opt_rate_type_i_re}).
\subsubsection{Optimal Energy Efficiency}
Accordingly, the optimal energy efficiency of HARQ-CC can be written as
\begin{equation}\label{eqn:power_eff_opt_cc}
{\eta _{CC,L}^{*}} %= \frac{{\psi {\theta _{CC}}}}{{{{\mathcal T_0}}}}\frac{{{2^{{R^*}}} - 1}}{{{{\left( {1 - \frac{{{{\mathcal T_0}}}}{{{R^*}}}} \right)}^c}}}
={\kappa _L}^{\frac{1}{{1 - {2^{ - L}}}}} {\eta _{I,L}^{*}}.
\end{equation}
%, from (\ref{eqn:theta_cc_fur_sim}), ${\theta _{CC}}$ can be rewritten as
%\begin{equation}\label{eqn:theta_cc_furth_w}
%{\theta _{CC}} %= {\left( {\prod\limits_{k = 1}^L {{{\left( {\frac{{\ell \left( {k - 1,\rho } \right)}}{{\ell \left( {k,\rho } \right)\Omega}}} \right)}^{{2^{ - k}}}}} } \right)^{\frac{1}{{1 - {2^{ - L}}}}}}{\left( {\prod\limits_{k = 1}^L {{{\left( {\frac{1}{k}} \right)}^{{2^{ - k}}}}} } \right)^{\frac{1}{{1 - {2^{ - L}}}}}}\\
%%=\theta_I {\left( {\prod\limits_{k = 1}^L {{{\left( {\frac{1}{k}} \right)}^{{2^{ - k}}}}} } \right)^{\frac{1}{{1 - {2^{ - L}}}}}}
%=\theta_I {{\kappa_L}^{\frac{1}{{1 - {2^{ - L}}}}}}
%\end{equation}
%where
Since $\kappa_L > 1$, it thus reveals that HARQ-CC surpasses Type I HARQ in terms of the optimal energy efficiency, i.e., ${\eta _{CC,L}^{*}} > {\eta _{I,L}^{*}}$.

Moreover, based on Property \ref{the:type_I_eff} and Theorem \ref{the:ee_I}, it is easy to get the following result of maximal energy efficiency of HARQ-CC.

%%\subsubsection{Asymptotic Optimal Energy Efficiency}
%Under the same assumption of the unit average powers of fading channels, i.e., ${\sigma_1}^2=\cdots={\sigma_L}^2=1$, the asymptotic optimal energy efficiency will be studied for HARQ-CC  analogous to Type I HARQ. As already proved in Property \ref{the:type_I_eff}, the maximal energy efficiency can be achieved by lossless HARQ scheme. Accordingly, taking the limits as $L \to \infty$ at the both sides of (\ref{eqn:power_eff_opt_cc}) and applying the same analytical approach in Appendix \ref{app:ee_I} to deriving the asymptotic optimal energy efficiency ${\eta _{CC,\infty}^*}$ due to that the constant ${\kappa _L}^{\frac{1}{{1 - {2^{ - L}}}}}$ does not matter, we have the following theorem regarding the maximal energy efficiency.

\begin{theorem}\label{the:ee_CC}
Under time-correlated Rayleigh fading channels, the optimal energy efficiency of HARQ-CC with outage and goodput constraints is upper bounded by
\begin{equation}\label{eqn:upper_bound_cc}
 {\eta _{CC,L}^*} \le {\eta _{CC,\infty}^*} = \frac{{{\kappa_\infty\theta _\infty }}{{\mathcal T_0}}}{{4 \left( {{2^{{{\mathcal T_0}}}} - 1} \right)}},
\end{equation}
where ${\kappa _\infty }= \mathop {\lim }\limits_{{L} \to \infty} \kappa_L\approx 1.6617$ as proved in Appendix \ref{app:proof_rec}. Particularly, for Rayleigh fading channels with unit channel gains as ${\sigma_1}^2=\cdots={\sigma_L}^2=1$, $\theta_\infty \le 1$ and the maximal energy efficiency of HARQ-CC with outage and goodput constraints satisfies
\begin{equation}
{\eta _{CC,\infty}^*} = \frac{{{\kappa_\infty\theta _\infty }}{{\mathcal T_0}}}{{4 \left( {{2^{{{\mathcal T_0}}}} - 1} \right)}} \le \mathop {\lim }\limits_{{\mathcal T_0} \to 0} \frac{{{{\kappa_\infty\mathcal T_0}} }}{{4\left( {{2^{{{\mathcal T_0}}}} - 1} \right)}} = \frac{\kappa_\infty}{4\ln(2)} \triangleq {\eta _{CC,\infty}^{\max}}.
\end{equation}
\end{theorem}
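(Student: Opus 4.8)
The plan is to avoid any fresh optimization and instead bootstrap from the multiplicative identity (\ref{eqn:power_eff_opt_cc}), Theorem \ref{the:ee_I}, Property \ref{the:type_I_eff}, and the convergence of $\kappa_L$ established in Appendix \ref{app:proof_rec}. First I would apply Property \ref{the:type_I_eff} to the HARQ-CC scheme: it gives that $\eta_{CC,L}^*$ is non-decreasing in $L$ and $\eta_{CC,L}^* \le \eta_{CC,\infty}^* \triangleq \lim_{L\to\infty}\eta_{CC,L}^*$, so it only remains to evaluate this limit. By (\ref{eqn:power_eff_opt_cc}), $\eta_{CC,L}^* = \kappa_L^{1/(1-2^{-L})}\,\eta_{I,L}^*$. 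Since $2^{-L}\to 0$ the exponent $1/(1-2^{-L})\to 1$, and since $\kappa_L\to\kappa_\infty$ with $\kappa_\infty\approx 1.6617>0$ (Appendix \ref{app:proof_rec}), continuity of $\exp$ gives $\kappa_L^{1/(1-2^{-L})}=\exp\!\big(\frac{\ln\kappa_L}{1-2^{-L}}\big)\to\kappa_\infty$. Combined with $\eta_{I,L}^*\to\eta_{I,\infty}^* = \frac{\theta_\infty\mathcal T_0}{4(2^{\mathcal T_0}-1)}$ from Theorem \ref{the:ee_I} (where $\theta_\infty=\lim_{L\to\infty}\theta_L$ exists), the product of the limits yields $\eta_{CC,\infty}^* = \frac{\kappa_\infty\theta_\infty\mathcal T_0}{4(2^{\mathcal T_0}-1)}$, which is (\ref{eqn:upper_bound_cc}).

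For the second part I would specialize to unit channel gains $\sigma_1^2=\cdots=\sigma_L^2=1$. Theorem \ref{the:ee_I} already gives $\theta_\infty\le 1$ in this case, hence $\eta_{CC,\infty}^* \le \frac{\kappa_\infty\mathcal T_0}{4(2^{\mathcal T_0}-1)}$. It then suffices to show $g(\mathcal T_0)\triangleq \frac{\mathcal T_0}{2^{\mathcal T_0}-1}$ is decreasing on $(0,\infty)$ with $\lim_{\mathcal T_0\to 0^+} g(\mathcal T_0)=1/\ln 2$. The limit follows from $2^{\mathcal T_0}-1=\mathcal T_0\ln 2+O(\mathcal T_0^2)$ (equivalently L'H\^opital). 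For monotonicity I would write $\ln g(\mathcal T_0)=\ln\mathcal T_0-\ln(2^{\mathcal T_0}-1)$ and verify its derivative $\frac{1}{\mathcal T_0}-\frac{2^{\mathcal T_0}\ln 2}{2^{\mathcal T_0}-1}$ is negative, i.e. $2^{\mathcal T_0}-1<\mathcal T_0\,2^{\mathcal T_0}\ln 2$; this is immediate from $2^{\mathcal T_0}-1=\int_0^{\mathcal T_0}2^t\ln 2\,dt<\mathcal T_0\,2^{\mathcal T_0}\ln 2$ since $2^t<2^{\mathcal T_0}$ on $(0,\mathcal T_0)$. Therefore $\sup_{\mathcal T_0>0}g(\mathcal T_0)=1/\ln 2$, giving $\eta_{CC,\infty}^*\le\frac{\kappa_\infty}{4\ln 2}$ and establishing the last display of the theorem.

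The main obstacle, such as it is, is purely bookkeeping rather than analysis: one must be careful that the limit $L\to\infty$ may be split factor-by-factor in $\kappa_L^{1/(1-2^{-L})}\eta_{I,L}^*$, which is legitimate precisely because both factors converge (the first by Appendix \ref{app:proof_rec}, the second by Theorem \ref{the:ee_I}), and that the two successive bounds---first over $L$ via the monotonicity of Property \ref{the:type_I_eff}, then over $\mathcal T_0$ via the monotonicity of $g$---are chained in the correct order so that each inequality points the same way. No new convexity or KKT argument is required, since all the heavy lifting for the rate selection was already done for Type I HARQ in Theorems \ref{the:optimal_rate} and \ref{the:ee_I}, and (\ref{eqn:power_eff_opt_cc}) transfers it verbatim to HARQ-CC.
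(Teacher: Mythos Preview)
Your proposal is correct and follows essentially the same route as the paper: the paper simply states that Theorem~\ref{the:ee_CC} follows from Property~\ref{the:type_I_eff}, Theorem~\ref{the:ee_I}, and the multiplicative relation (\ref{eqn:power_eff_opt_cc}), together with the convergence of $\kappa_L$ in Appendix~\ref{app:proof_rec}. Your added details---the explicit splitting of the limit $\kappa_L^{1/(1-2^{-L})}\eta_{I,L}^*$ and the integral argument for the monotonicity of $\mathcal{T}_0/(2^{\mathcal{T}_0}-1)$---are sound refinements of steps the paper leaves implicit (the latter monotonicity is already asserted in Theorem~\ref{the:ee_I}).
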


In other words, when QoS constraints are considered, the maximal energy efficiency of HARQ-CC is $\frac{\kappa_\infty}{4\ln(2)}$ which is higher than that of Type I HARQ.

%Clearly, $\theta_{CC}$ is less than $\theta_I$, which leads to the following relationship of power efficiencies between Type I HARQ and HARQ-CC as
%\begin{equation}\label{eqn:power_eff_relatoin_I_CC}
%\eta _{EE}^{CC}\left( {{P_1}^*, \cdots {P_L}^*,{R^*}} \right) = {{\kappa_L}^{\frac{1}{{1 - {2^{ - L}}}}}} \eta _{EE}^I\left( {{P_1}^*, \cdots {P_L}^*,{R^*}} \right)
%\end{equation}
%Taking the limit of $L \to \infty$ at the both sides of (\ref{eqn:power_eff_relatoin_I_CC}) yields
%\begin{equation}\label{eqn:power_eff_relatoin_I_CC_limit}
% \mathop {\lim }\limits_{L \to \infty } \eta _{EE}^{CC}\left( {{P_1}^*, \cdots {P_L}^*,{R^*}} \right) = \kappa_\infty \mathop {\lim }\limits_{L \to \infty } \eta _{EE}^I\left( {{P_1}^*, \cdots {P_L}^*,{R^*}} \right)
%\end{equation}
%Plugging (\ref{eqn:power_eff_fina_bouncc}) into (\ref{eqn:power_eff_relatoin_I_CC_limit}), the bounds of energy efficiency of HARQ-CC are explicitly given as
%\begin{equation}\label{eqn:power_eff_cc_fin_limi}
%\frac{{4{\kappa _\infty }\left( {{2^{{{\mathcal T_0}}}} - 1} \right)}}{{{{\left( {1 - {\rho ^{2(1 + 2\delta )}}} \right)}^{\frac{1}{4}}}\Omega {{\mathcal T_0}}}} \le \mathop {\lim }\limits_{L \to \infty } \eta _{EE}^{CC}\left( {{P_1}^*, \cdots {P_L}^*,{R^*}} \right) \le \frac{{4{\kappa _\infty }\left( {{2^{{{\mathcal T_0}}}} - 1} \right)}}{{\left( {1 - {\rho ^{2\delta }}} \right)\Omega {{\mathcal T_0}}}}
%\end{equation}
%where the equalities hold if and only if $\rho=0$.
%In addition, the same conclusion can be drawn as remark \ref{rem:conc_I} regarding to the energy efficiency of HARQ-CC.
\subsection{HARQ-IR}\label{sec:harq_ir}
\subsubsection{Optimal Rate}
Putting (\ref{eqn:Wl_def}) into (\ref{eqn:rate_sele_prob}) yields the energy efficiency of HARQ-IR as
\begin{align}\label{eqn:power_eff_IR}
{\eta _{IR,L}} %&= \psi f\left( {{\alpha ^*}} \right)\frac{R} {{{{\left( {\prod\limits_{k = 1}^L {{{\left( {\frac{{{\varsigma _k}{g_k}\left( R \right)}}{{{\varsigma _{k - 1}}{g_{k - 1}}\left( R \right)}}} \right)}^{{2^{ - k}}}}} } \right)}^{\frac{1}{{1 - {2^{ - L}}}}}}}} \notag\\
&= \psi {\theta _L} \frac{f\left( {{\alpha ^*}} \right)R} {{{{\left( {\prod\limits_{k = 1}^L {{{\left( {\frac{{{g_k}\left( R \right)}}{{{g_{k - 1}}\left( R \right)}}} \right)}^{{2^{ - k}}}}} } \right)}^{\frac{1}{{1 - {2^{ - L}}}}}}}}.
%&=\psi{\theta _L} f\left( {{\alpha ^*}} \right){\left( {\prod\limits_{k = 1}^L {{{\left( {\frac{{R{g_{k - 1}}\left( R \right)}}{{{g_k}\left( R \right)}}} \right)}^{{2^{ - k}}}}} } \right)^{\frac{1}{{1 - {2^{ - L}}}}}}.
\end{align}
%\begin{align}\label{eqn:vartheta_IR}
%{\vartheta _{IR}}\left( R \right) &= \frac{{{{\left( {\prod\limits_{k = 1}^L {{{\left( {\frac{{{\varsigma _k}{g_k}\left( R \right)}}{{{\varsigma _{k - 1}}{g_{k - 1}}\left( R \right)}}} \right)}^{{2^{ - k}}}}} } \right)}^{\frac{1}{{1 - {2^{ - L}}}}}}}}{R} \notag\\
%&= {\theta _L}{\left( {\prod\limits_{k = 1}^L {{{\left( {\frac{{{g_k}\left( R \right)}}{{R{g_{k - 1}}\left( R \right)}}} \right)}^{{2^{ - k}}}}} } \right)^{\frac{1}{{1 - {2^{ - L}}}}}}
%\end{align}
Substituting (\ref{eqn:f_alpha_star}) into (\ref{eqn:power_eff_IR}), it becomes
\begin{multline}\label{eqn:power_eff_ir_rew}
{\eta _{IR,L}} =
 \frac{{{\psi {\theta _L}}}{{{{\mathcal T_0}}}}}{\Lambda \left( R \right)}\left( {{\chi}\left( {R - {{\mathcal T_0}}} \right) - {\chi}\left( {R - \frac{{{{\mathcal T_0}}}}{{1 - \Delta }}} \right)} \right) \\
%{{\psi {\theta _L}}}{{{{\mathcal T_0}}}}\frac{\left( {1 - \frac{{{{\mathcal T_0}}}}{R}} \right)^{ c}}{\left( {\prod\limits_{k = 1}^L {{{\left( {\frac{{{g_k}\left( R \right)}}{{{g_{k - 1}}\left( R \right)}}} \right)}^{{2^{ - k}}}}} } \right)^{\frac{1}{{1 - {2^{ - L}}}}}}\left( {{\chi}\left( {R - {{\mathcal T_0}}} \right) - {\chi}\left( {R - \frac{{{{\mathcal T_0}}}}{{1 - \Delta }}} \right)} \right)\\
  + \frac{{{\psi {\theta _L}}}{{(1 - \Delta ){\Delta ^c}}}}{\left( {\prod\limits_{k = 1}^L {{{\left( {\cal G}_k\left( R \right) \right)}^{{2^{ - k}}}}} } \right)^{\frac{1}{{1 - {2^{ - L}}}}}}{\chi}\left( {R - \frac{{{{\mathcal T_0}}}}{{1 - \Delta }}} \right),
\end{multline}
where ${\Lambda \left( R \right) \triangleq {{\left( {1 - \frac{{{{\mathcal T_0}}}}{R}} \right)}^{ - c}}{{\left( {\prod\limits_{k = 1}^L {{{\left( {\frac{{{g_k}\left( R \right)}}{{{g_{k - 1}}\left( R \right)}}} \right)}^{{2^{ - k}}}}} } \right)}^{\frac{1}{{1 - {2^{ - L}}}}}}}$ and ${\cal G}_{k}\left( R \right) \triangleq \frac{{{g_k}\left( R \right)}}{{R{g_{k - 1}}\left( R \right)}}$. When $R > \frac{\mathcal T_0}{1-\Delta}$, the first term in (\ref{eqn:power_eff_ir_rew}) is zero and the energy efficiency reduces as ${\eta _{IR,L}} =
 \frac{{{\psi {\theta _L}}}{{(1 - \Delta ){\Delta ^c}}}}{\left( {\prod\limits_{k = 1}^L {{{\left( {\cal G}_k\left( R \right) \right)}^{{2^{ - k}}}}} } \right)^{\frac{1}{{1 - {2^{ - L}}}}}}$.
%\begin{equation}
%{\eta _{IR,L}} =
% \frac{{{\psi {\theta _L}}}{{(1 - \Delta ){\Delta ^c}}}}{\left( {\prod\limits_{k = 1}^L {{{\left( {\cal G}_k\left( R \right) \right)}^{{2^{ - k}}}}} } \right)^{\frac{1}{{1 - {2^{ - L}}}}}}.
%\end{equation}
As proved in Appendix \ref{app:proof_increasing_g_ratio} that ${\cal G}_{k}\left( R \right) $ is a monotonically increasing function of $R$. Thus when $R > \frac{\mathcal T_0}{1-\Delta}$,  ${\eta _{IR,L}}$ is a decreasing function of $R$. Together with the continuity of ${\eta _{IR,L}}$ at the point $R={\frac{{{{\mathcal T_0}}}}{{1 - \Delta }}}$ and the constraint of $R > \mathcal T_0$ in (\ref{eqn:rate_sele_prob}), we can conclude that the optimal rate to achieve the maximal energy efficiency of HARQ-IR is located within the range of $\left({\mathcal T_0}, \frac{\mathcal T_0}{1-\Delta}\right]$.
%
%Likewise, the monotonicity of ${\cal G}_{k}\left( R \right)$ in $R > \frac{\mathcal T_0}{1-\Delta}$, the continuity of ${\eta _{IR,L}}$ at the point $R={\frac{{{{\mathcal T_0}}}}{{1 - \Delta }}}$ and the infeasibility of the problem when $R \le \mathcal T_0$ lead to the following property with respect to the optimal rate of HARQ-IR.
%\begin{property}
%The maximal energy efficiency of HARQ-IR is attained if and only if the optimal rate lies in $\left({\mathcal T_0}, \frac{\mathcal T_0}{1-\Delta}\right]$.
%%, which means ${\mathcal T_0} < R^* \le \frac{\mathcal T_0}{1-\Delta}$.
%\end{property}
Accordingly, when ${\mathcal T_0} < R \le \frac{\mathcal T_0}{1-\Delta}$, the energy efficiency ${\eta _{IR,L}}$ can be simplified as
\begin{equation}\label{eqn:eta_ir_l_simp}
{\eta _{IR,L}} =
 \frac{{{\psi {\theta _L}}}{{{{\mathcal T_0}}}}}{\Lambda \left( R \right)}.
\end{equation}
Then the rate selection problem in (\ref{eqn:rate_sele_prob}) is equivalent to
\begin{equation}\label{eqn:type_ir_rate_ref}
\begin{array}{*{20}{l}}
{\mathop {{\rm{\min}}}\limits_R }&{\Lambda \left( R \right)}\\ %= { {{\left( {1 - \frac{{{{\mathcal T_0}}}}{R}} \right)}^{ - c}}{{\left( {\prod\limits_{k = 1}^L {{{\left( {\frac{{{g_k}\left( R \right)}}{{{g_{k - 1}}\left( R \right)}}} \right)}^{{2^{ - k}}}}} } \right)}^{\frac{1}{{1 - {2^{ - L}}}}}}}\\
{{\rm{subject\,to}}}&{{{\mathcal T_0}} < R \le \frac{{{{\mathcal T_0}}}}{{1 - \Delta }}}.
\end{array}
\end{equation}
Due to the complicated form of ${\Lambda \left( R \right)}$, it is difficult to derive a closed-form solution for the optimal transmission rate. Fortunately, after analyzing the function ${\Lambda \left( R \right)}$, we find a special property of ${\Lambda \left( R \right)}$ in the following.
%
%
%Alternatively, the bounds of ${\Lambda \left( R \right)}$ are adopted to gain useful insights on optimal energy efficiency of HARQ-IR.
%Towards this end, we employ the following inequalities as proved in Appendix \ref{app:proof_ratio_g_ineqs}, such that
%\begin{equation}\label{eqn:ratio_g_bound}
% \frac{{R\ln \left( 2 \right)}}{k} \le \frac{{{g_k}\left( R \right)}}{{{g_{k - 1}}\left( R \right)}} \le \frac{{R\ln \left( 2 \right)}}{{k - 1}}.
%\end{equation}
%%The bounds can facilitate the optimization and can be used to extract insights.
%Substituting (\ref{eqn:ratio_g_bound}) into the lower and the upper bounds of $\Lambda \left( R \right)$ produces the following property.
\begin{property}\label{the:bounds_ee_ir}
The function ${\Lambda \left( R \right)}$ is bounded by
\begin{multline}\label{eqn:Lambda_lowerbound}
%\Lambda \left( R \right) &\ge {\left( {1 - \frac{{{{\mathcal T_0}}}}{R}} \right)^{ - c}}{\left( {{2^R} - 1} \right)^{\frac{{{2^{ - 1}}}}{{1 - {2^{ - L}}}}}}{\left( {\prod\limits_{k = 2}^L {{{\left( {\frac{{R\ln \left( 2 \right)}}{k}} \right)}^{{2^{ - k}}}}} } \right)^{\frac{1}{{1 - {2^{ - L}}}}}}\notag \\
%&= {\left( {1 - \frac{{{{\mathcal T_0}}}}{R}} \right)^{ - c}}{\left( {\prod\limits_{k = 1}^L {{{\left( {\frac{1}{k}} \right)}^{{2^{ - k}}}}} } \right)^{\frac{1}{{1 - {2^{ - L}}}}}}{\left( {{2^R} - 1} \right)^{\frac{{{2^{ - 1}}}}{{1 - {2^{ - L}}}}}}{\left( {R\ln \left( 2 \right)} \right)^{\frac{{{2^{ - 1}} - {2^{ - L}}}}{{1 - {2^{ - L}}}}}} \notag\\
%&= {\left( {\ln \left( 2 \right)} \right)^{\frac{{{2^{ - 1}} - {2^{ - L}}}}{{1 - {2^{ - L}}}}}}{\kappa _L}^{-\frac{1}{{1 - {2^{ - L}}}}}{\left( {1 - \frac{{{{\mathcal T_0}}}}{R}} \right)^{ - c}}{\left( {{2^R} - 1} \right)^{\frac{{{2^{ - 1}}}}{{1 - {2^{ - L}}}}}}{R^{\frac{{{2^{ - 1}} - {2^{ - L}}}}{{1 - {2^{ - L}}}}}} \notag\\.
{\left( {\ln \left( 2 \right)} \right)^{\frac{{{2^{ - 1}} - {2^{ - L}}}}{{1 - {2^{ - L}}}}}}{\kappa _L}^{-\frac{1}{{1 - {2^{ - L}}}}}{\left( {\varpi \left( R \right)} \right)^{\frac{{{2^{ - 1}}}}{{1 - {2^{ - L}}}}}} \le \Lambda \left( R \right) \\
 \le
{\left( {\ln \left( 2 \right)} \right)^{\frac{{{2^{ - 1}} - {2^{ - L}}}}{{1 - {2^{ - L}}}}}}{\kappa _{L - 1}}^{-\frac{{{2^{ - 1}}}}{{1 - {2^{ - L}}}}}{\left( {\varpi \left( R \right)} \right)^{\frac{2^{-1}}{{1 - {2^{ - L}}}}}},
\end{multline}
where $\varpi \left( R \right) = {\left( {R - {{\cal T}_0}} \right)^{ - {2^{ - L + 1}}}}\left( {{2^R} - 1} \right)R$.
\end{property}
\begin{proof}
Please see Appendix \ref{app:proof_ratio_g_ineqs}.
\end{proof}
With these bounds and Intermediate Value Theorem \cite[Theorem 4.23]{rudin1964principles}, $\Lambda \left( R \right)$ can be rewritten as
\begin{equation}\label{eqn:Lambda_exc}
\Lambda \left( R \right) = {\left( {\ln \left( 2 \right)} \right)^{\frac{{{2^{ - 1}} - {2^{ - L}}}}{{1 - {2^{ - L}}}}}}\zeta {\left( {\varpi \left( R \right)} \right)^{\frac{{{2^{ - 1}}}}{{1 - {2^{ - L}}}}}},
\end{equation}
where $\zeta$ is bounded as
 \begin{equation}\label{eqn:zeta_bounds}
{\kappa _L}^{ - \frac{1}{{1 - {2^{ - L}}}}} \le \zeta  \le {\kappa _{L - 1}}^{ - \frac{{{2^{ - 1}}}}{{1 - {2^{ - L}}}}}.
 \end{equation}
Based on (\ref{eqn:Lambda_exc}), the optimization problem (\ref{eqn:type_ir_rate_ref}) can be rewritten as
\begin{equation}\label{eqn:eqn_opt}
\begin{array}{*{20}{l}}
{\mathop {{\rm{\min}}}\limits_R }&{\varpi \left( R \right) }\\
{{\rm{subject\,to}}}&{{{\mathcal T_0}} < R \le \frac{{{{\mathcal T_0}}}}{{1 - \Delta }}}.
\end{array}
\end{equation}

%Similarly, by applying the same approach as introduced in Appendix \ref{app:proof_convex} into (\ref{eqn:eqn_opt}) yields the optimal transmission rate as
%%similarly to (\ref{eqn:type_i_rate_ref}), we can prove that
%\begin{equation}\label{eqn:opt_rate_bounds_type_IR}
%{\tilde R^*} = \left\{ {\begin{array}{*{20}{c}}
%{\frac{{{{\mathcal T_0}}}}{{1 - \Delta }},}&{\Upsilon \left( {\frac{{{{\mathcal T_0}}}}{{1 - \Delta }}} \right) \le 0}\\
%{{\Upsilon^{-1}(0)},}&{\Upsilon  \left( {\frac{{{{\mathcal T_0}}}}{{1 - \Delta }}} \right) > 0,}
%\end{array}} \right.
%\end{equation}
%
%As proved in Appendix \ref{app:proof_rate_ir11},
Following a similar proof as that for Theorem \ref{the:optimal_rate}, the optimal solution to (\ref{eqn:eqn_opt}) can be derived by using KKT conditions, as given in the following theorem.
\begin{theorem}\label{the:optimal_rate_ir}
The optimal rate to the problem in (\ref{eqn:eqn_opt}) is
\begin{equation}\label{eqn:opt_rate_type_ir_re}
{R^*} = \min \left\{ {\frac{{{{\mathcal T_0}}}}{{1 - \Delta }}},\Upsilon^{-1}(0) \right\}.
%{R^*} = \left\{ {\begin{array}{*{20}{c}}
%{\frac{{{{\cal T}_0}}}{{1 - \Delta }},}&{{\mathcal T_0} \le {\tilde{\mathcal T}_0}}\\
%{{\Upsilon ^{ - 1}}(0),}&{{\mathcal T_0} > {\tilde{\mathcal T}_0},}
%\end{array}} \right.
\end{equation}
where $\Upsilon \left( R \right) = \left( {R - {{\mathcal T_0}}} \right) \left( {{2^R}\ln \left( 2 \right)R + {2^R} - 1} \right)- {2^{ - L + 1}}\left( {{2^R} - 1} \right)R$ and $\Upsilon^{-1}(0)$ refers to the zero point of $\Upsilon(R)$. When $\varepsilon \ge 2^{-L}$, the optimal transmission rate reduces to $R^*=\Upsilon^{-1}(0)$.
\end{theorem}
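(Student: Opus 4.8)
The plan is to follow the route already used for Theorem~\ref{the:optimal_rate}: turn the constrained scalar minimization (\ref{eqn:eqn_opt}) into a sign analysis of a single auxiliary function, show the objective is quasiconvex on the feasible interval, and then read the minimizer off the first-order (KKT) conditions together with the right endpoint $\mathcal T_0/(1-\Delta)$. Concretely, differentiating $\varpi$ through its logarithmic derivative on $(\mathcal T_0,\infty)$ and combining the resulting fractions over the common denominator $R(R-\mathcal T_0)(2^{R}-1)$ gives
\[
\varpi'(R)=\frac{\varpi(R)}{R\,(R-\mathcal T_0)\,(2^R-1)}\,\Upsilon(R),
\]
so on the whole feasible set $\mathcal T_0<R\le \mathcal T_0/(1-\Delta)$ the prefactor is strictly positive and the sign of $\varpi'(R)$ coincides with that of $\Upsilon(R)$; thus the optimization is entirely controlled by the zero set of $\Upsilon$.

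Next I would show $\Upsilon$ has exactly one zero in $(\mathcal T_0,\infty)$. At the left endpoint, $\Upsilon(\mathcal T_0)=-2^{-L+1}(2^{\mathcal T_0}-1)\mathcal T_0<0$ for $\mathcal T_0>0$, while the leading $R^{2}2^{R}\ln 2$ behaviour forces $\Upsilon(R)\to+\infty$. For uniqueness I would prove that $\Upsilon$ is strictly increasing on $(\mathcal T_0,\infty)$. Writing $A(R)=2^{R}\ln 2\cdot R+2^{R}-1$, which satisfies $A(0)=0$, $A'(R)>0$, hence $A(R)>0$ for $R>0$, and using the identity $\frac{d}{dR}\big[(2^{R}-1)R\big]=A(R)$, one obtains
\[
\Upsilon'(R)=(1-2^{-L+1})A(R)+(R-\mathcal T_0)A'(R),
\]
which is nonnegative term by term for $L\ge1$ and strictly positive for $R>\mathcal T_0$. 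Continuity, strict monotonicity and the sign change then give a well-defined $\Upsilon^{-1}(0)\in(\mathcal T_0,\infty)$.

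It follows that $\varpi$ is strictly decreasing on $(\mathcal T_0,\Upsilon^{-1}(0)]$ and strictly increasing afterwards, i.e.\ quasiconvex with unique unconstrained minimizer $\Upsilon^{-1}(0)$. Restricting to the feasible interval $(\mathcal T_0,\mathcal T_0/(1-\Delta)]$: if $\Upsilon^{-1}(0)\le\mathcal T_0/(1-\Delta)$ the unconstrained minimizer is feasible and optimal; otherwise $\varpi$ is strictly decreasing over the entire feasible interval and the minimum is attained at the right endpoint $\mathcal T_0/(1-\Delta)$. Both cases combine into $R^{*}=\min\{\mathcal T_0/(1-\Delta),\Upsilon^{-1}(0)\}$, which is (\ref{eqn:opt_rate_type_ir_re}); the lower constraint $R>\mathcal T_0$ is never active because $\Upsilon^{-1}(0)>\mathcal T_0$. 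For the final claim, when $\varepsilon\ge 2^{-L}$ one has $\Delta=2^{-L}$, and at $R=\mathcal T_0/(1-2^{-L})$ the identity $R-\mathcal T_0=2^{-L}R$ yields $\Upsilon(R)=2^{-L}R\big(2^{R}(R\ln 2-1)+1\big)$; since $2^{R}(R\ln 2-1)+1$ vanishes at $R=0$ and has derivative $R(\ln 2)^{2}2^{R}\ge0$, it is nonnegative, so $\Upsilon(\mathcal T_0/(1-2^{-L}))\ge0$, and monotonicity of $\Upsilon$ gives $\Upsilon^{-1}(0)\le\mathcal T_0/(1-2^{-L})$, hence $R^{*}=\Upsilon^{-1}(0)$.

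The main obstacle I anticipate is the monotonicity of $\Upsilon$: splitting $\Upsilon$ directly into an increasing and a decreasing part is inconclusive, and the compact formula for $\Upsilon'$ hinges on spotting the identity $\frac{d}{dR}\big[(2^{R}-1)R\big]=A(R)$; the remaining effort is just the endpoint bookkeeping for the half-open feasible set and the $\varepsilon\ge 2^{-L}$ specialization, both routine once the quasiconvexity is in hand.
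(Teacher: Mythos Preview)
Your proposal is correct and follows essentially the same route the paper indicates (it omits the proof, pointing to the KKT argument of Theorem~\ref{the:optimal_rate}): factor $\varpi'$ to reduce the sign analysis to an auxiliary $\Upsilon$, prove $\Upsilon$ is strictly increasing on $(\mathcal T_0,\infty)$, and then do the endpoint case split, with the $\varepsilon\ge 2^{-L}$ reduction obtained by evaluating $\Upsilon$ at $\mathcal T_0/(1-2^{-L})$ via the identity $R-\mathcal T_0=2^{-L}R$ and the inequality $2^{R}(R\ln 2-1)+1\ge 0$ (equivalently $2^{-x}\ge 1-x\ln 2$, which is exactly the tool the paper uses in the Type~I proof). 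Your quasiconvexity phrasing and the identity $\frac{d}{dR}[(2^{R}-1)R]=A(R)$ for the monotonicity step are a clean packaging of the same KKT computation rather than a different method.
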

The proof is omitted here to avoid redundancy. Notice that the zero point $\Upsilon^{-1}(0)$ can be easily found since $\Upsilon(R)$ is an increasing function with respect to $R$.

\subsubsection{Optimal Energy Efficiency}
With the definition of ${\Lambda \left( R \right)}$, the energy efficiency of HARQ-IR in (\ref{eqn:eta_ir_l_simp}) can be written as
\begin{equation}
\label{IR_1}
{\eta _{IR,L}} =
 \frac{{{\psi {\theta _L}}}{{{{\mathcal T_0}}}}}{{{{\left( {1 - \frac{{{{\mathcal T_0}}}}{R}} \right)}^{ - c}}{{\left( {\prod\limits_{k = 1}^L {{{\left( {\frac{{{g_k}\left( R \right)}}{{{g_{k - 1}}\left( R \right)}}} \right)}^{{2^{ - k}}}}} } \right)}^{\frac{1}{{1 - {2^{ - L}}}}}}}}
\end{equation}
As proved in Appendix \ref{app:g_k_thri_ine}, we have the following inequality
\begin{equation}\label{eqn:equa_ratio_upper_comp_cc_ir}
\frac{{{g_k}\left( R \right)}}{{{g_{k - 1}}\left( R \right)}} \le \frac{{{2^R} - 1}}{k}.
\end{equation}
Applying this inequality to (\ref{IR_1}) yields
\begin{align}
\label{IR_2}
{\eta _{IR,L}} &\ge
 \frac{{\left(\prod\limits_{k = 1}^L k^{2^{-k}}\right)^{\frac{1}{{1 - {2^{ - L}}}}}}{{\psi {\theta _L}}}{{{{\mathcal T_0}}}}{{\left( {1 - \frac{{{{\mathcal T_0}}}}{R}} \right)}^{ c}}}{{{{\left( {\prod\limits_{k = 1}^L {{{\left({{2^R} - 1} \right)}^{{2^{ - k}}}}} } \right)}^{\frac{1}{{1 - {2^{ - L}}}}}}}}\notag\\
 &= {\kappa _L}^{\frac{1}{{1 - {2^{ - L}}}}} {{\psi {\theta _L}}}{{{{\mathcal T_0}}}}\frac{{{{\left( {1 - \frac{{{{\mathcal T_0}}}}{R}} \right)}^c}}}{{{2^R} - 1}}= {\eta _{CC,L}}.
\end{align}
It means that HARQ-IR outperforms HARQ-CC in terms of energy efficiency. Moreover, the optimal energy efficiency of HARQ-IR is also higher than or equal to that of HARQ-CC, i.e., ${\eta _{IR,L}^*} \ge {\eta _{CC,L}^*}$.

Now putting the optimal rate (\ref{eqn:opt_rate_type_ir_re}) into (\ref{eqn:Lambda_exc}) and then (\ref{eqn:eta_ir_l_simp}) together with the bounds of $\zeta$ in (\ref{eqn:zeta_bounds}), the optimal energy efficiency of HARQ-IR is bounded as
%\begin{equation}\label{eqn:energy_efficiency_ir_notbounds}
%\eta _{IR,L}^* = \zeta^{-1}\tilde \eta _{IR,L}^*,
%\end{equation}
\begin{equation}\label{eqn:energy_efficiency_ir_notbounds}
{\kappa _{L - 1}}^{\frac{{{2^{ - 1}}}}{{1 - {2^{ - L}}}}}\tilde \eta _{IR,L}^* \le \eta _{IR,L}^* \le {\kappa _L}^{\frac{1}{{1 - {2^{ - L}}}}} \tilde \eta _{IR,L}^*.
\end{equation}
where
\begin{equation}\label{eqn:energy_eff_tilde_def}
\tilde \eta _{IR,L}^* = {{\psi {\theta _L}}}{{{{\mathcal T_0}}}}{\left( {\ln \left( 2 \right)} \right)^{-\frac{{{2^{ - 1}} - {2^{ - L}}}}{{1 - {2^{ - L}}}}}}{\left( {\varpi \left(R ^*\right)} \right)^{-\frac{2^{-1}}{{1 - {2^{ - L}}}}}}.
\end{equation}
Combining (\ref{eqn:energy_efficiency_ir_notbounds}) with the inequality ${\eta _{IR,L}^*} \ge {\eta _{CC,L}^*}$, the optimal energy efficiency of HARQ-IR is consequently found to be bounded by
\begin{equation}\label{eqn:energy_efficiency_ir_bounds_renew}
\max\left\{ {\kappa _{L - 1}}^{\frac{{{2^{ - 1}}}}{{1 - {2^{ - L}}}}}\tilde \eta _{IR,L}^*, {\eta _{CC,L}^*}\right\} \le \eta _{IR,L}^* \le {\kappa _L}^{\frac{1}{{1 - {2^{ - L}}}}} \tilde \eta _{IR,L}^*.
\end{equation}
When the number of transmissions approaches infinity, i.e., $L \to \infty$, the following bounds also hold
\begin{equation}\label{eqn:energy_efficiency_ir_bounds_renew_infty}
\max\left\{ \sqrt{\kappa _{\infty}} \tilde \eta _{IR,\infty}^*, {\eta _{CC,\infty}^*}\right\} \le \eta _{IR,\infty}^* \le {\kappa _\infty} \tilde \eta _{IR,\infty}^*,
\end{equation}
where ${\eta _{CC,\infty}^*}$ has been given in Theorem \ref{the:ee_CC}, and $ \tilde \eta _{IR,\infty}^*$ is defined as $ \tilde \eta _{IR,\infty}^* \triangleq \mathop {\lim }\limits_{L \to \infty } \tilde \eta _{IR,L}^*$ and can be further derived from (\ref{eqn:energy_eff_tilde_def}) as
\begin{align}\label{eqn:energy_eff_tilde_IR}
 &\tilde \eta _{IR,\infty}^* \notag\\
 &= \mathop {\lim }\limits_{L \to \infty } {{\psi {\theta _L}}}{{{{\mathcal T_0}}}}{\left( {\ln \left( 2 \right)} \right)^{-\frac{{{2^{ - 1}} - {2^{ - L}}}}{{1 - {2^{ - L}}}}}}\notag\\
 &\quad\times {\left( {{{\left( {{{R}^*} - {{\mathcal T_0}}} \right)}^{ - {2^{ - L + 1}}}}\left( {{2^{{{R}^*}}} - 1} \right){{R}^*}} \right)^{-\frac{{{2^{ - 1}}}}{{1 - {2^{ - L}}}}}}\notag\\
& = \frac{{\theta _\infty}{\mathcal T_0}}{4\sqrt{\ln(2)}}
\mathop {\lim }\limits_{L \to \infty } {\left( {{{\left( {{{R}^*} - {{\mathcal T_0}}} \right)}^{ - {2^{ - L + 1}}}}\left( {{2^{{{R}^*}}} - 1} \right){{R}^*}} \right)^{-\frac{{{2^{ - 1}}}}{{1 - {2^{ - L}}}}}}.
\end{align}
In (\ref{eqn:energy_eff_tilde_IR}), the last equality follows based on the limit result $ \mathop {\lim }\limits_{L \to \infty } \psi= \frac{1}{4}$.

When $L \to \infty$, the inequality $\varepsilon \ge 2^{-L}$ would hold. Based on Theorem \ref{the:optimal_rate_ir}, the optimal transmission rate is $R^*=\Upsilon^{-1}(0)$ and thus $\Upsilon(R^*)=0$ which can be rewritten as
\begin{equation}\label{eqn:tilde_R_opt}
{{R}^*} = {{\mathcal T_0}} + \frac{{{2^{ - L + 1}}\left( {{2^{{{R}^*}}} - 1} \right){{R}^*}}}{{{2^{{{R}^*}}}\ln \left( 2 \right){{R}^*} + {2^{{{R}^*}}} - 1}}.
\end{equation}
Meanwhile, noticing that the optimal rate ${{R}^*}$ is bounded as ${\mathcal T_0} < {{R}^*} \le \frac{\mathcal T_0}{1-\Delta}$, $\Delta = \min\left\{\varepsilon, 2^{-L}\right\}$ and $\mathop {\lim }\limits_{L \to \infty } \Delta =0$, it follows by using squeeze theorem that $\mathop {\lim }\limits_{L \to \infty } {{R}^*} = {\mathcal T_0}$. With this limit and plugging (\ref{eqn:tilde_R_opt}) into (\ref{eqn:energy_eff_tilde_IR}), it yields
\begin{align}\label{eqn:eta_tilde_fina_opt}
&\tilde \eta _{IR,\infty }^*\notag\\ %&= \frac{{{\theta _\infty }}}{4}\sqrt {\frac{{{{\cal T}_0}}}{{\ln (2)\left( {{2^{{{\cal T}_0}}} - 1} \right)}}} \mathop {\lim }\limits_{L \to \infty } {\left( {{{R}^*} - {{\cal T}_0}} \right)^{\frac{{{2^{ - L}}}}{{1 - {2^{ - L}}}}}} \notag\\
 %&= \frac{{{\theta _\infty }}}{4}\sqrt {\frac{{{{\cal T}_0}}}{{\ln (2)\left( {{2^{{{\cal T}_0}}} - 1} \right)}}} \mathop {\lim }\limits_{L \to \infty } {\left( {\frac{{{2^{ - L + 1}}\left( {{2^{{{R}^*}}} - 1} \right){{R}^*}}}{{{2^{{{R}^*}}}\ln \left( 2 \right){{R}^*} + {2^{{{R}^*}}} - 1}}} \right)^{\frac{{{2^{ - L}}}}{{1 - {2^{ - L}}}}}} \notag\\
 &=\frac{{{\theta _\infty }{{\cal T}_0}}}{{4\sqrt {\ln (2)} }}\mathop {\lim }\limits_{L \to \infty } {\left( {\left( {{2^{{{R}^*}}} - 1} \right){{R}^*}} \right)^{ - \frac{{{2^{ - 1}}}}{{1 - {2^{ - L}}}}}}\mathop {\lim }\limits_{L \to \infty } {\left( {{{R}^*} - {{\cal T}_0}} \right)^{\frac{{{2^{ - L}}}}{{1 - {2^{ - L}}}}}}\notag\\
 &=\frac{{{\theta _\infty }}}{4}\sqrt {\frac{{{\mathcal T_0}}}{{\ln (2)\left( {{2^{{\mathcal T_0}}} - 1} \right)}}} \mathop {\lim }\limits_{L \to \infty } {\left( {\frac{{{2^{ - L + 1}}\left( {{2^{{{R}^*}}} - 1} \right){{R}^*}}}{{{2^{{{R}^*}}}\ln \left( 2 \right){{R}^*} + {2^{{{R}^*}}} - 1}}} \right)^{\frac{{{2^{ - L}}}}{{1 - {2^{ - L}}}}}}\notag\\
&= \frac{{{\theta _\infty }}}{4}\sqrt {\frac{{{{\cal T}_0}}}{{\ln (2)\left( {{2^{{{\cal T}_0}}} - 1} \right)}}}.
\end{align}
Plugging (\ref{eqn:upper_bound_cc}) and (\ref{eqn:eta_tilde_fina_opt}) into (\ref{eqn:energy_efficiency_ir_bounds_renew_infty}), it then follows that
\begin{multline}\label{eqn:final_harq_ir_eff}
\max \left\{ {\frac{{{\theta _\infty }}}{4}\sqrt {\frac{{{\kappa _\infty }{{\cal T}_0}}}{{\ln (2)\left( {{2^{{{\cal T}_0}}} - 1} \right)}}} ,\frac{{{\kappa _\infty }{\theta _\infty }{{\cal T}_0}}}{{4\left( {{2^{{{\cal T}_0}}} - 1} \right)}}} \right\} \le \eta _{IR,\infty }^* \\
\le \frac{{{\kappa _\infty }{\theta _\infty }}}{4}\sqrt {\frac{{{{\cal T}_0}}}{{\ln (2)\left( {{2^{{{\cal T}_0}}} - 1} \right)}}}.
\end{multline}
Clearly, both the lower and upper bounds of $\eta _{IR,\infty }^*$ in (\ref{eqn:final_harq_ir_eff}) are decreasing functions of the the goodput threshold ${\cal T}_0$. In addition, the following inequality of their limits holds
\begin{multline}\label{eqn:energy_eff_harq_bound_limi}
\max \left\{ {\frac{{\theta _\infty }}{4}\sqrt {\frac{{{\kappa _\infty }}}{{\ln (2)}}} \sqrt {\mathop {\lim }\limits_{{{\mathcal T}_0} \to 0} \frac{{{\mathcal T_0}}}{{{2^{{\mathcal T_0}}} - 1}}} ,\frac{{{\kappa _\infty }}{\theta _\infty }}{4}\mathop {\lim }\limits_{{{\cal T}_0} \to 0} \frac{{{\mathcal T_0}}}{{{2^{{\mathcal T_0}}} - 1}}} \right\} \\ \le \mathop {\lim }\limits_{{{\cal T}_0} \to 0} \eta _{IR,\infty }^*
\le \frac{{{\kappa _\infty }}{\theta _\infty }}{{4\sqrt {\ln (2)} }}\sqrt {\mathop {\lim }\limits_{{{\cal T}_0} \to 0} \frac{{{\mathcal T_0}}}{{{2^{{\mathcal T_0}}} - 1}}}.
\end{multline}
Noticing that $\mathop {\lim }\limits_{{\mathcal T_0} \to 0} \frac{{{\mathcal T_0}}}{{{2^{{\mathcal T_0}}} - 1}} = \frac{1}{\ln \left( 2 \right)}$, we have
\begin{equation}\label{eqn:energy_efficin_ir_boun_lim_rew}
\max \left\{ {\frac{{\sqrt {{\kappa _\infty }} }{\theta _\infty }}{{4\ln (2)}},\frac{{{\kappa _\infty }}{\theta _\infty }}{{4\ln (2)}}} \right\} \le \mathop {\lim }\limits_{{{\cal T}_0} \to 0} \eta _{IR,\infty }^* \le \frac{{{\kappa _\infty }}{\theta _\infty }}{{4\ln (2)}}.
\end{equation}
Since ${\kappa _\infty } > 1$ and using squeeze theorem, it follows
\begin{equation}
\label{IR_limit}
\mathop {\lim }\limits_{{{\cal T}_0} \to 0} \eta _{IR,\infty }^* = \frac{{{\kappa _\infty }}{\theta _\infty }}{{4\ln (2)}}.
\end{equation}
Summarizing the results in (\ref{eqn:final_harq_ir_eff}), (\ref{IR_limit}) and Property \ref{the:type_I_eff} finally leads to the following result of the optimal energy efficiency of HARQ-IR.
\begin{theorem}\label{the:ee_IR}
Under time-correlated Rayleigh fading channels, the optimal energy efficiency of HARQ-IR with outage and goodput constraints is upper bounded by
\begin{equation}\label{eqn:upper_bound_ir}
{\eta _{IR,L}^*} \le {\eta _{IR,\infty}^*} \le \frac{{{\kappa _\infty }{\theta _\infty }}}{4}\sqrt {\frac{{{{\cal T}_0}}}{{\ln (2)\left( {{2^{{{\cal T}_0}}} - 1} \right)}}}.
\end{equation}
Particularly, for Rayleigh fading channels with unit channel gains ${\sigma_1}^2=\cdots={\sigma_L}^2=1$, $\theta_\infty \le 1$ and the maximal energy efficiency of HARQ-IR with outage and goodput constraints follows
\begin{equation}
{\eta _{IR,\infty}^*} \le \frac{{{\kappa _\infty }{\theta _\infty }}}{4}\sqrt {\frac{{{{\cal T}_0}}}{{\ln (2)\left( {{2^{{{\cal T}_0}}} - 1} \right)}}} \le  \frac{\kappa_\infty}{4\ln(2)} \triangleq {\eta _{IR,\infty}^{\max}}.
\end{equation}
%
% ${\eta _{IR,\infty}^{\max}} = \frac{\kappa_\infty}{4\ln(2)}$.
\end{theorem}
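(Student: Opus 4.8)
The plan is to obtain the theorem by assembling the estimates already derived in Section~\ref{sec:harq_ir} together with Property~\ref{the:type_I_eff} and one elementary scalar inequality; no genuinely new analysis is required.

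First I would establish the left inequality $\eta_{IR,L}^*\le\eta_{IR,\infty}^*$. By Property~\ref{the:type_I_eff}, $\eta_{IR,L}^*$ is non-decreasing in $L$ and bounded above, so $\eta_{IR,\infty}^*\triangleq\lim_{L\to\infty}\eta_{IR,L}^*$ exists and dominates every term. Next, the upper bound $\eta_{IR,\infty}^*\le\frac{\kappa_\infty\theta_\infty}{4}\sqrt{\mathcal{T}_0/(\ln(2)(2^{\mathcal{T}_0}-1))}$ is precisely the right-hand side of (\ref{eqn:final_harq_ir_eff}): that display was produced by inserting the closed-form optimal rate $R^*=\Upsilon^{-1}(0)$ of Theorem~\ref{the:optimal_rate_ir} into the two-sided sandwich for $\Lambda(R)$ from Property~\ref{the:bounds_ee_ir} (through (\ref{eqn:Lambda_exc})--(\ref{eqn:zeta_bounds})), using $\eta_{IR,L}^*\ge\eta_{CC,L}^*$ from (\ref{IR_2}) and Theorem~\ref{the:ee_CC}, and passing $L\to\infty$ with $\lim_{L\to\infty}\psi=1/4$, $\lim_{L\to\infty}R^*=\mathcal{T}_0$ (squeeze theorem, since $\mathcal{T}_0<R^*\le\mathcal{T}_0/(1-\Delta)$ and $\Delta\to0$), and the evaluation of $\tilde\eta_{IR,\infty}^*$ in (\ref{eqn:eta_tilde_fina_opt}). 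This already yields (\ref{eqn:upper_bound_ir}).

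For the unit-channel-gain case I would import $\theta_\infty\le 1$ from Appendix~\ref{app:ee_I} (it follows from the $\rho$-monotonicity of $\theta_L$ established in Appendix~\ref{app:time_corr} combined with $\theta_L=1$ at $\rho=0$ when $\sigma_1^2=\cdots=\sigma_L^2=1$). It then remains only to bound the scalar factor: with $\theta_\infty\le1$ it suffices to show $\sqrt{\mathcal{T}_0/(\ln(2)(2^{\mathcal{T}_0}-1))}\le1/\ln2$, i.e. $\mathcal{T}_0\ln2\le 2^{\mathcal{T}_0}-1$ for every $\mathcal{T}_0>0$, which is $e^t\ge1+t$ with $t=\mathcal{T}_0\ln2$. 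Equivalently, since the right-hand side of (\ref{eqn:final_harq_ir_eff}) is decreasing in $\mathcal{T}_0$ (noted immediately after that display) and $\theta_\infty\le1$, its supremum over $\mathcal{T}_0>0$ is $\lim_{\mathcal{T}_0\to0^+}\frac{\kappa_\infty\theta_\infty}{4}\sqrt{\mathcal{T}_0/(\ln(2)(2^{\mathcal{T}_0}-1))}=\frac{\kappa_\infty\theta_\infty}{4\ln2}\le\frac{\kappa_\infty}{4\ln2}$ using $\lim_{\mathcal{T}_0\to0}\mathcal{T}_0/(2^{\mathcal{T}_0}-1)=1/\ln2$; the last inequality is tight when $\rho=0$, so $\eta_{IR,\infty}^{\max}=\kappa_\infty/(4\ln2)$ is attained. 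Chaining the two inequalities gives the claim.

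The obstacle here is cosmetic rather than mathematical: all the substantial steps---the $\Lambda(R)$ sandwich of Property~\ref{the:bounds_ee_ir}, the KKT derivation of $R^*$ in Theorem~\ref{the:optimal_rate_ir}, and especially the delicate $L\to\infty$ limit leading to (\ref{eqn:eta_tilde_fina_opt})---are already in place, so the only load-bearing ingredients left are (i) well-definedness of $\eta_{IR,\infty}^*$ via Property~\ref{the:type_I_eff}, (ii) the one-line convexity bound $2^{\mathcal{T}_0}-1\ge\mathcal{T}_0\ln2$, and (iii) importing $\theta_\infty\le1$ from Appendix~\ref{app:ee_I}. Care is only needed to keep the admissible ranges ($\mathcal{T}_0>0$, $\rho\in[0,1]$, integer $L\ge1$) explicit so the ``maximal achievable'' statement is unambiguous.
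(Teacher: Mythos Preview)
Your proposal is correct and follows essentially the same route as the paper: Property~\ref{the:type_I_eff} for the first inequality, the right-hand side of (\ref{eqn:final_harq_ir_eff}) for the second, and then $\theta_\infty\le 1$ together with the monotone/limit argument $\lim_{\mathcal{T}_0\to 0}\mathcal{T}_0/(2^{\mathcal{T}_0}-1)=1/\ln 2$ for the unit-gain part. The only cosmetic difference is that you also offer the direct convexity bound $2^{\mathcal{T}_0}-1\ge \mathcal{T}_0\ln 2$ as an alternative to the paper's monotonicity-plus-limit step, and your remark that $\eta_{IR,\infty}^{\max}$ ``is attained'' slightly overstates things---the paper's (\ref{IR_limit}) shows it is the supremum approached as $\mathcal{T}_0\to 0$ (and $\rho=0$), not a value achieved at any admissible $\mathcal{T}_0>0$.
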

Comparing the results in Theorems \ref{the:ee_CC} and \ref{the:ee_IR}, we can conclude that HARQ-CC and HARQ-IR can reach the same maximal energy efficiency of $\frac{\kappa_\infty}{4\ln(2)}$.

\section{Numerical Results and Discussions}\label{sec:numerical}
The performance of our proposed energy-efficient optimization is tested and the impact of various parameters is discussed in this section. Unless otherwise stated, the results are provided over time-correlated Rayleigh fading channels with time correlation $\rho=0.5$ and unit channel gains ${\sigma_1}^2=\cdots={\sigma_L}^2=1$.

\subsection{Numerical Verification}
In our design, asymptotic outage probability is adopted to enable the derivation of closed-form solutions for the optimal transmission powers and the optimal rate. To validate the correctness of our design, the optimal energy efficiency achieved by our design is compared with that achieved through numerical exhaustive search based on the exact outage probabilities derived in \cite{shi2015outage,shi2016optimal,shi2017asymptotic}. The results versus the outage tolerance $\varepsilon$ and the minimum goodput requirement $\mathcal T_0$  are shown in Figs. \ref{fig:validationEpsilon} and \ref{fig:validation}, respectively. It is readily observed from Fig. \ref{fig:validationEpsilon} that there is an excellent match between the optimal energy efficiencies based on the asymptotic and the exact outage probabilities when $\varepsilon \le 10^{-2}$, because the outage probability can be well approximated by the asymptotic outage probability (\ref{eqn:outage_prob}) under a \emph{low outage} (or \emph{high SNR}). This result demonstrates that the effectiveness of the proposed close-form solution holds true when outage constraint is strict, i.e., the allowable outage probability is small, which is usually true in practical applications since the high QoS is generally required in practice. The effectiveness of our design is further demonstrated by the results in Fig. \ref{fig:validation} where the optimal energy efficiency achieved by our design coincides well with that achieved based on the exact outage probability no matter how large the minimum goodput. Moreover, it is also shown in Fig. \ref{fig:validationEpsilon} that the optimal energy efficiency $\eta_L^*$ is an increasing function of $\varepsilon$. For instance, $\eta_L^*$ of HARQ-IR scheme is increased by about $0.15$ bits/J when the outage tolerance is relaxed from $10^{-3}$ to $10^{-1}$. However, as shown in Fig. \ref{fig:validation}, the optimal energy efficiency decreases with the increase of the minimum goodput requirement $\mathcal T_0$, which verifies our result that the maximal energy efficiency is achieved when no goodput constraint is considered, i.e., $\mathcal T_0 \to 0$. In addition, both Figs. \ref{fig:validationEpsilon} and \ref{fig:validation} show that HARQ-IR performs the best in terms of optimal energy efficiency, while Type I HARQ provides the worst performance without exploiting the erroneously received sub-codewords.
\begin{figure}
  \centering
  \includegraphics[width=2.35in]{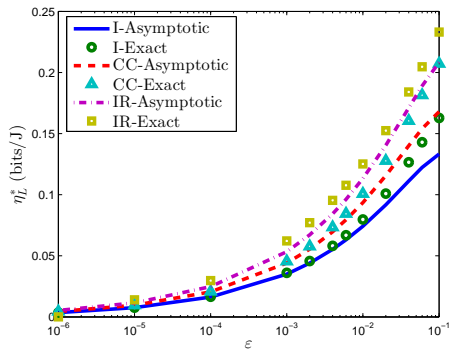}
  \caption{The optimal energy efficiency versus the outage tolerance with $L=2$ and $\mathcal T _0 = 2$ bps/Hz.}\label{fig:validationEpsilon}
\end{figure}

\begin{figure}
  \centering
  \includegraphics[width=2.35in]{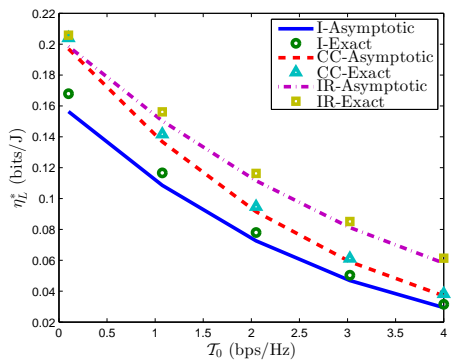}
  \caption{The optimal energy efficiency versus the minimum goodput requirement with $L=2$ and $\varepsilon=10^{-2}$.}\label{fig:validation}
\end{figure}

To further demonstrate the superiority of our design, our energy-efficient design is compared with uniform power design and their optimal energy efficiencies are shown in Fig. \ref{fig:percompare}. Notice that in the uniform power design, the transmission powers at different HARQ rounds are set equal, i.e., $P_1=\cdots=P_L=P$, and the transmission power $P$ and the rate are then optimally chosen to maximize the energy efficiency under the same QoS constraints as our design. Clearly, our design can achieve a significant enhancement of energy efficiency over the uniform power design under a stringent outage constraint. Interestingly, for Type I HARQ, the energy efficiency achieved by the proposed design converges to that of the uniform power design under loose outage constraint, i.e., $\varepsilon \to 1$. However, there is a non-negligible gap between the energy efficiencies of the proposed design and the uniform power design for both HARQ-CC and HARQ-IR. This is due to the fact that the erroneously received packets are directly discarded by Type I HARQ and thus the previously consumed resources such as transmission powers are not exploited. To overcome this problem, both HARQ-CC and HARQ-IR combine the currently received packet with the erroneously received packets for reutilizing these resources. Particularly for HARQ-CC and HARQ-IR, the concluded results are totally different from \cite{tajan2013hybrid}, where the uniform power allocation can offer similar performance as the optimal solution for spectral efficiency maximization. This essentially stems from the difference between the energy and the spectral efficiencies. The energy efficiency is in fact a ratio of spectral efficiency to the long term average power \cite{tuninetti2011benefits}, so it would be significantly affected by transmission powers. As opposed to \cite[Fig. 3]{jabi2016energy}, \cite{tajan2013hybrid} where the spectral efficiency maximization is targeted and only a slight spectral efficiency improvement can be achieved via optimal power allocation, the adaptation of transmission powers appears to be very crucial to the energy efficiency maximization.% Moreover, this difference can be further justified by comparing with \cite[Fig.3]{jabi2016energy} because the spectral efficiency maximization is targeted in \cite{jabi2016energy}

%The difference is essentially due to that Type I HARQ simply relies on the most recently received packets to decode the message without exploiting the useful information contained in the previously received packets, which leads to the waste of resources, such as transmission powers. To overcome this shortcoming, both HARQ-CC and HARQ-IR combine the currently received packet with the erroneously received packets for better exploitation of the consumed resources.

\begin{figure}
  \centering
  \includegraphics[width=2.35in]{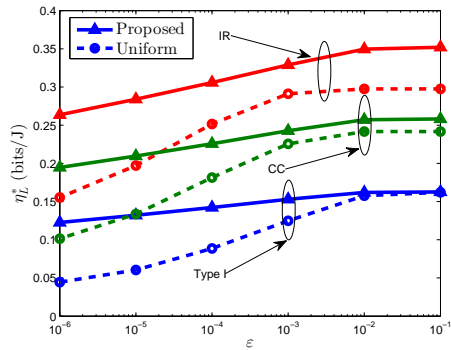}
  \caption{Comparison of the proposed design with uniform power design for various HARQ schemes by setting $\mathcal T_0 = 2$ bps/Hz and $L=5$.}\label{fig:percompare}
\end{figure}

\subsection{Impact of System Parameters}
To investigate the impact of various system parameters, the optimal energy efficiency versus the maximal number of transmissions is shown in Fig. \ref{fig:eevL}. It is clear that the optimal energy efficiency $\eta_L^*$ is an increase function of $L$ and is upper bounded. The maximal energy efficiency is achieved when $L \to \infty $, which is named as lossless HARQ for convenience. This is consistent with our analytical results in Theorems \ref{the:ee_I}, \ref{the:ee_CC} and \ref{the:ee_IR}. Since the energy efficiency is upper bounded, when $L$ is large, further increase of the maximal number of transmission can only contribute a very limited improvement on the energy efficiency, but it would definitely lead to the reduction of spectral efficiency since more transmissions are conducted to convey the same information. Thus there would exist a tradeoff between the energy efficiency and the spectral efficiency, which will be particularly discussed later. Herein, it should be pointed out  that $\theta_\infty$ is a function of $\rho$, and can be approximated as $\theta_L$ with high accuracy by choosing $L=20$.
\begin{figure}
  \centering
  \includegraphics[width=2.35in]{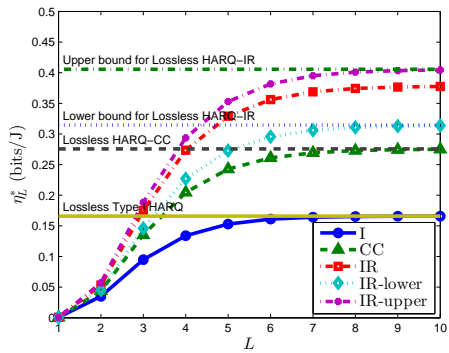}
  \caption{The optimal energy efficiency versus the maximal number of transmissions with $\mathcal T_0 = 2$ bps/Hz and $\varepsilon=10^{-4}$.}\label{fig:eevL}
\end{figure}

To test the effect of the outage tolerance $\varepsilon$, the optimal energy efficiency $\eta_L^*$ versus $\varepsilon$ is plotted in Fig. \ref{fig:eevoutt}. It is found that the optimal energy efficiency first increases as $\varepsilon$ increases up to $0.06$, while it remains constant when $\varepsilon$ increases further. This result can be well explained by Theorem \ref{the:optimal_rate} and Theorem \ref{the:optimal_rate_ir}. More precisely, when $\varepsilon > 2^{-L}$, the optimal rate is independent of the outage tolerance $\varepsilon$ and thus the optimal energy efficiency becomes irrelevant to $\varepsilon$. In the case of $L=4$, we have $2^{-L} = 2^{-4} \approx 0.06$. Therefore, when $\varepsilon > 0.06$, the optimal energy efficiency would become constant in this case.
\begin{figure}
  \centering
  \includegraphics[width=2.35in]{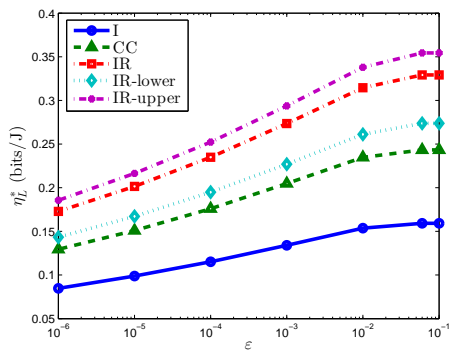}
  \caption{The optimal energy efficiency versus the outage tolerance with  $\mathcal T_0 = 2$ bps/Hz and $L=4$.}\label{fig:eevoutt}
\end{figure}

The effect of the minimal goodput requirement $\mathcal T_0$ is then shown in Fig. \ref{fig:eevr}. It can be seen that the optimal energy efficiency $\eta_L^*$ decreases with the increase of $\mathcal T_0$. For all the three HARQ schemes, the maximal energy efficiency is achieved when $\mathcal T_0 \to 0$. Moreover, HARQ-CC and HARQ-IR can achieve the same maximal energy efficiency of $\frac{\kappa_\infty}{4\ln(2)}$ when $\mathcal T_0 \to 0$. However, the superiority of HARQ-IR over HARQ-CC becomes more significant as $\mathcal T_0$ increases.

\begin{figure}
  \centering
  \includegraphics[width=2.35in]{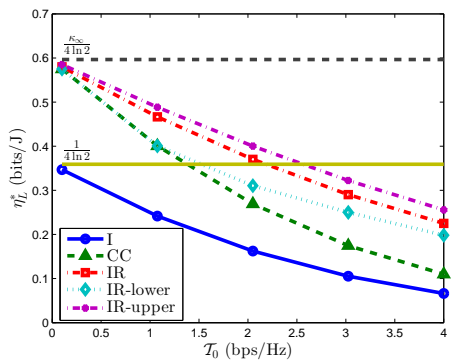}
  \caption{The optimal energy efficiency versus the minimum goodput requirement with $L=10$ and $\varepsilon=10^{-2}$.}\label{fig:eevr}
\end{figure}

The impact of the time correlation $\rho$ on the optimal energy efficiency $\eta_L^*$ is finally investigated and the results are shown in Fig. \ref{fig:timecorr}. As proved in Appendix \ref{app:time_corr}, $\theta_L$ is a decreasing function of the time correlation $\rho$ and thus the time correlation has a detrimental impact on the energy efficiency. This can be verified by the results in Fig. \ref{fig:timecorr}. It can also been seen that there is a significant drop of the energy efficiency when $\rho > 0.5$, which is consistent with the result in \cite{goldsmith2005wireless} that time correlation lower than $0.5$ would not lead to a significant performance degradation.
\begin{figure}
  \centering
  \includegraphics[width=2.35in]{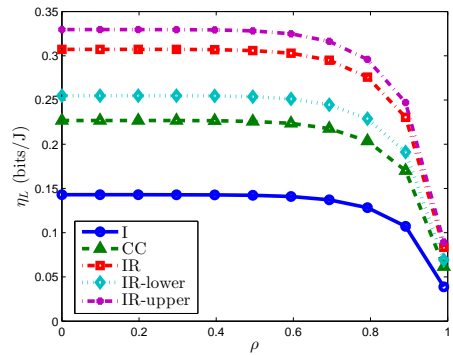}
  \caption{Effect of time correlation with $L=5$, $\mathcal T_0=2$ bps/Hz and $\varepsilon=10^{-4}$.}\label{fig:timecorr}
\end{figure}
\subsection{Spectral Efficiency}
As defined in \cite{wu2014energy,chelli2014performance}, spectral efficiency of HARQ schemes signifies the average number of successfully delivered bits per channel use (bps/Hz) and is defined as \begin{equation}\label{eqn:spect_eff}
{\xi _L} = \frac{{R(1 - {p_{out,L}})}}{{\sum\nolimits_{l = 0}^{L - 1} {{p_{out,l}}} }}.
\end{equation}
Generally, the optimizations of the energy efficiency and the spectral efficiency are two conflicting goals. This conflict can be clearly observed from the results of energy and spectral efficiencies achieved by our design as shown in Fig. \ref{fig:se}. It is shown that the spectral efficiency ${\xi _L}$ decreases while the optimal energy efficiency $\eta_L^*$ increases with the increase of $L$ and higher energy efficiency is achieved at the cost of spectral efficiency degradation. In addition, with our energy-efficient optimization, the HARQ-IR scheme can achieve the highest energy efficiency but with the lowest spectral efficiency since our design is targeting at energy efficiency maximization. Notice that this result is different from that in \cite{chelli2013performance,larsson2014throughput,caire2001throughput} where the design objective is to maximize the spectral efficiency. To further illustrate the tradeoff between the energy and spectral efficiencies for the three HARQ schemes, the minimum goodput requirement $\mathcal T_0$ is varied from $0.5 $bps/Hz to $10$bps/Hz and the corresponding efficiencies are shown in Fig. \ref{fig:sevsee}. %It is clear that  HARQ-CC can achieve a better tradeoff between the energy and spectral efficiencies.
It is clear from Figs. \ref{fig:se} and \ref{fig:sevsee} that HARQ-CC can always achieve spectral and energy efficiencies in between those of Type I HARQ and HARQ-IR given the same objective and constraints for optimization. In other words, HARQ-CC can achieve a better tradeoff between the energy and spectral efficiencies than the other schemes given the same objective and constraints for optimization.

%there exists an energy-spectral-efficiency tradeoff for each HARQ scheme. Moreover, the energy-spectral-efficiency tradeoff between different HARQ schemes can also be observed from Fig. \ref{fig:sevsee}. It should be emphasized that the comparison should be under the same system settings and the markers with the same order (from left to right) in three curves are subject to the same QoS constraints (i.e., $\mathcal T_0$ and $\varepsilon$), such as the markers with order 3 in the dotted ellipse in Fig. \ref{fig:sevsee}. By comparing the coordinates of the three markers, it is readily found that HARQ-CC can achieve a better tradeoff between the energy and spectral efficiencies.}
%
%By comparing the coordinates at the markers with the same order in the three curves, it is clear that higher spectral efficiency leads to lower energy efficiency for different HARQ schemes, and HARQ-CC performs better than the others in terms of the energy-spectral-efficiency tradeoff.

\begin{figure}
  \centering
  \includegraphics[width=2.35in]{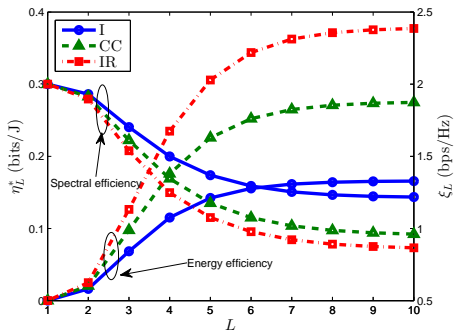}
  \caption{Spectral and energy efficiencies of various HARQ schemes with $\mathcal T_0 = 2$ bps/Hz and $\varepsilon=10^{-4}$.}\label{fig:se}
\end{figure}

\begin{figure}
  \centering
  \includegraphics[width=2.35in]{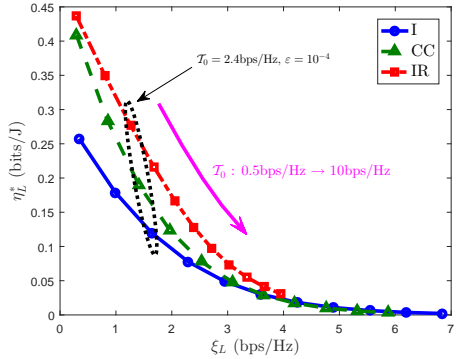}
  \caption{{Tradeoff between the optimal energy efficiency and spectral efficiency with $\varepsilon=10^{-4}$ and $L=5$.}}\label{fig:sevsee}
\end{figure}

To demonstrate the generality of the above results, a practical long term evolution (LTE) system with a coding rate of $1/2$ and a modulation scheme of 16QAM is also taken as an example for simulations. The system level simulation is conducted by using LTE system toolbox for MATLAB. Note that since ${{{\mathcal T_0}} < R \le \frac{{{{\mathcal T_0}}}}{{1 - \Delta }}}$ and $\Delta$ is small, ${\mathcal T_0}$ can be approximated as ${\mathcal T_0} \approx R = 1/2 \times \log_2 16 = 2$ bps/Hz. Hereby, by using the closed-form solution of problem (\ref{eqn:joint_adaption}) with ${\mathcal T_0}=2$ bps/Hz, Fig. \ref{fig:sevseereal} shows the spectral and the energy efficiencies of the three HARQ schemes against the outage constraint under LTE system settings. Similar results can be observed as Fig. \ref{fig:se}, that is, HARQ-CC can strike the best balance between the energy and the spectral efficiencies.

%, which also justies the tradeoff between the energy and spectral efficiencies.

%The tradeoff between the energy efficiency and the spectral efficiency can be observed  under energy-efficiently designed HARQ systems in Fig. \ref{fig:se}.
%It can be obviously seen that the spectral efficiency ${\xi _L}$ decreases with $L$ as opposed to the increase of the optimal energy efficiency $\eta_L^*$. In addition, different from prior works that HARQ-IR can offer the highest spectral efficiency \cite{chelli2013performance,larsson2014throughput,caire2001throughput}, Fig. \ref{fig:se} exhibits that HARQ-IR scheme has the lowest spectral efficiency, while Type I HARQ is superior to the other two HARQ schemes, because the proposed strategy focuses on the maximization of energy efficiency. But the achievement of higher energy efficiency comes at the cost of spectral efficiency. To see this more clearly, through varying the minimum goodput demand $\mathcal T_0$ from $0.5 $bps/Hz to $10$bps/Hz, Fig. \ref{fig:sevsee} plots the optimal energy efficiency against the corresponding spectral efficiency. It is readily found that there exists a tradeoff between the energy efficiency and the spectral efficiency.

\begin{figure}
  \centering
  \includegraphics[width=2.35in]{./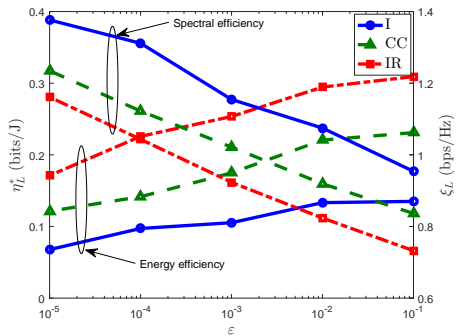}
  \caption{Spectral and energy efficiencies for the three HARQ schemes in LTE system with a coding rate of $1/2$, a modulation scheme of 16QAM, $N_b=12960$ bits and $L=4$.}\label{fig:sevseereal}
\end{figure}

\section{Conclusion}\label{sec:con}
Energy-efficient optimization for HARQ schemes has been proposed in this paper. Different from the prior designs, widely occurred time-correlated fading channels and practical QoS constraints are considered in the optimal design to maximize the energy efficiency. The optimal transmission powers and the optimal rate have been derived in closed-forms and have further enabled the analysis of the maximal energy efficiencies of various HARQ schemes. It has been found that with low outage constraint, the maximal energy efficiencies of Type I HARQ and HARQ-CC/IR are $\frac{1}{4\ln2}$ bits/J and $\frac{\kappa_\infty}{4\ln2}$ bits/J, respectively. Our numerical results have also shown that the energy efficiency improvement is achieved in sacrifice of the spectral efficiency, because the spectral efficiency and the energy efficiency are two conflicting objectives. In addition, HARQ-CC can achieve a better tradeoff between the energy efficiency and the spectral efficiency than Type I HARQ and HARQ-IR.

%the influence of both time-correlated fading channels and QoS constraints on tolerable outage and minimum goodput demand. To facilitate the optimization as well as substantially reduce the computational complexity, the asymptotic outage probabilities have been adopted to derive optimal transmission powers and rate in closed-form. The analytical results have enabled us to extract useful insights on the optimal energy efficiency. It has been found that the energy efficiency benefits from low time correlation and large allowable number of transmissions. %Moreover, it has been proved that the energy efficiency of HARQ-IR is superior to that of the other two HARQ schemes because of its highest decoding complexity, while Type I HARQ exhibits the worst performance of energy efficiency due to no exploitation of the erroneously received sub-codewords.
%%The maximal energy efficiencies of Type I HARQ and HARQ-CC/IR provided by the proposed strategy are $\frac{1}{4\ln2}$ bits/J and $\frac{\kappa_\infty}{4\ln2}$ bits/J, respectively. Finally, our numerical results have shown that the energy efficiency gain is achieved in sacrificed of the spectral efficiency, because the spectral efficiency and the energy efficiency are two conflicting objectives. It has been revealed that HARQ-CC can achieve a better tradeoff between the energy efficiency and the spectral efficiency over Type I HARQ and HARQ-IR.

\appendices
\section{Proof of Theorem \ref{the:optimal_outage}}\label{app:proof_tarout}
It is clear from (\ref{eqn:P_L_star_sec}) that the target outage probability $\alpha$ should be nonzero, i.e., $\alpha \ne 0$. Together with the first constraint in (\ref{eqn:tar_out_prob}), we have $0 < \alpha \le \varepsilon$. Further considering the second constraint $R(1- \alpha) \ge \mathcal T_0$, the transmission rate should satisfy $R > \mathcal T_0$, otherwise the second constraint cannot be satisfied and the problem (\ref{eqn:tar_out_prob}) is then infeasible. Then the problem (\ref{eqn:tar_out_prob}) can be reformulated as
\begin{equation}\label{eqn:tar_out_prob_re}
\begin{array}{*{20}{l}}
{\mathop {{\rm{max}}}\limits_{\alpha } }&{f\left( \alpha  \right)}\\
{{\rm{subject\,to}}}&{0 < \alpha  \le \min\left\{\varepsilon,1 - \frac{{{\mathcal T_0}}}{R}\right\} }.
\end{array}
\end{equation}

The Lagrangian associated with (\ref{eqn:tar_out_prob_re}) can be written as
\begin{equation}\label{eqn:lagrangain_2}
{\mathcal L_1}\left( {\alpha,u} \right) = f(\alpha) + u\left( {\alpha  - \min\left\{\varepsilon,1 - \frac{{{\mathcal T_0}}}{R}\right\} } \right)-\nu\alpha,
\end{equation}
where $u$ and $\nu$ refer to Lagrange multipliers. The optimal solution to (\ref{eqn:tar_out_prob_re}) should satisfy the Karush-Khun-Tucker (KKT) conditions as
\begin{equation}\label{eqn:l2_kkt1}
{\left. {\frac{{\partial{{\cal L}_1}}}{{\partial\alpha }}} \right|_{\left( {{\alpha ^*},{u^*},\nu^*} \right)}} = f'({\alpha ^*}) + {u^*}-\nu^* = 0,
\end{equation}
\begin{equation}\label{eqn:l2_kkt2}
{u^*}\left( {{\alpha ^*} - \min \left\{ {\varepsilon ,1 - \frac{{{\mathcal T_0}}}{R}} \right\}} \right) = 0,
\end{equation}
\begin{equation}\label{eqn:l2_kkt21}
\nu^*\alpha^* = 0,
\end{equation}
%\begin{equation}\label{eqn:l2_kkt3}
%{u^*} \ge 0,
%\end{equation}
%\begin{equation}\label{eqn:l2_kkt31}
%\nu^* \ge 0,
%\end{equation}
%\begin{equation}\label{eqn:l2_kkt4}
%0 < {\alpha ^*}  \le \min \left\{ {\varepsilon ,1 - \frac{{{\mathcal T_0}}}{R}} \right\},
%\end{equation}
$0 < {\alpha ^*}  \le \min \left\{ {\varepsilon ,1 - \frac{{{\mathcal T_0}}}{R}} \right\}$ and ${u^*}, \nu^* \ge 0$. Since $\alpha^* > 0$, (\ref{eqn:l2_kkt21}) implies $\nu^* = 0$.
Substituting $\nu^* = 0$ into (\ref{eqn:l2_kkt1}) yields
\begin{align}\label{eqn:u_star}
{u^*} &=  - f'({\alpha ^*}) =\frac{{{{\alpha^*} ^{\frac{1}{{{2^L} - 1}}}}}}{{{2^L} - 1}}\left({{{{\alpha^*} ^{ - 1}}-{2^L}}}\right).
\end{align}
Note that the Lagrange multiplier ${{u^*} } $ is either larger than or equal to $0$. Suppose that ${{u^*} } = 0$, it follows from (\ref{eqn:u_star}) that $\alpha^*  = {2^{ - L}}$. Combining this with (\ref{eqn:l2_kkt2}) and the constraint $0 < {\alpha ^*}  \le \min \left\{ {\varepsilon ,1 - \frac{{{\mathcal T_0}}}{R}} \right\}$, we have $\alpha^*  = {2^{ - L}} \le \min \left\{ {\varepsilon ,1 - \frac{{{{\mathcal T_0}}}}{R}} \right\}$. If ${{u^*} } > 0$, on the other hand, (\ref{eqn:l2_kkt2}) implies that ${{\alpha ^*} = \min \left\{ {\varepsilon ,1 - \frac{{{{\mathcal T_0}}}}{R}} \right\}} $. Besides, combining ${u^*} > 0$ with (\ref{eqn:u_star}) gives ${{\alpha^*} ^{ - 1}}-{2^L}  > 0$ so that ${{\alpha ^*} = \min \left\{ {\varepsilon ,1 - \frac{{{{\mathcal T_0}}}}{R}} \right\}} < 2 ^{-L}$. To summarize, the optimal target outage probability is therefore given by ${{\alpha ^*} = \min \left\{ {\varepsilon ,1 - \frac{{{{\mathcal T_0}}}}{R},2 ^{-L}} \right\}}$ for $R > \mathcal T_0$.

By defining $\Delta  = \min \left\{ {\varepsilon ,{2^{ - L}}} \right\}$,
%With the optimal target outage probability, if $\varepsilon  \le {2^{ - L}}$,
the optimal target outage probability ${\alpha ^*}$ can be rewritten as
\begin{multline}\label{eqn:tager_out1}
{\alpha ^*} = \min \left\{ {\Delta ,1 - \frac{{{{\mathcal T_0}}}}{R}} \right\}= \left( {1 - \frac{{{{\mathcal T_0}}}}{R}} \right)\times\\
  \left( {\chi\left( {R - {{\mathcal T_0}}} \right) - \chi\left( {R - \frac{{{{\mathcal T_0}}}}{{1 - \Delta }}} \right)} \right) + \Delta \chi\left( {R - \frac{{{{\mathcal T_0}}}}{{1 - \Delta }}} \right),
\end{multline}
where $\chi (\cdot)$ denotes the indicator function as (\ref{eqn:indi_fundef}). Plugging (\ref{eqn:tager_out1}) into $f(\alpha)={{(1 - {\alpha}){\alpha}^{\frac{1}{{{2^L} - 1}}}}}$ finally leads to (\ref{eqn:f_alpha_star}).
%\begin{align}\label{eqn:f_alpha_1}
%f\left( {{\alpha ^*}} \right) &= {{(1 - {\alpha ^*}){\alpha ^*}^{\frac{1}{{{2^L} - 1}}}}} \notag\\
% &= {{ {\frac{{{{\mathcal T_0}}}}{R}} {{\left( {1 - \frac{{{{\mathcal T_0}}}}{R}} \right)}^{\frac{1}{{{2^L} - 1}}}}}}\left( {\chi\left( {R - {{\mathcal T_0}}} \right) -\chi\left( {R - \frac{{{{\mathcal T_0}}}}{{1 - \varepsilon }}} \right)} \right) \notag\\
%&\quad+ {{(1 - \varepsilon ){\varepsilon ^{\frac{1}{{{2^L} - 1}}}}}}\chi\left( {R - \frac{{{{\mathcal T_0}}}}{{1 - \varepsilon }}} \right).
%\end{align}

%Thus (\ref{eqn:f_alpha_1}) yields (\ref{eqn:f_alpha_star}) under the case of $\varepsilon  \le {2^{ - L}}$.
%Conversely, if $\varepsilon  > {2^{ - L}}$, ${\alpha ^*}$ reduces to
%\begin{equation}\label{eqn:alpha_star_2}
%{\alpha ^*} = \left( {1 - \frac{{{{\mathcal T_0}}}}{R}} \right)\left( {\chi\left( {R - {{\mathcal T_0}}} \right) - \chi\left( {R - \frac{{{{\mathcal T_0}}}}{{1 - {2^{ - L}}}}} \right)} \right) + {2^{ - L}}\chi\left( {R - \frac{{{{\mathcal T_0}}}}{{1 - {2^{ - L}}}}} \right)
%\end{equation}
%Analogous to (\ref{eqn:f_alpha_1}), $f\left( {{\alpha ^*}} \right)$ can also be derived in this case. Eventually, combining these two cases together with the definition of $\Delta  = \min \left\{ {\varepsilon ,{2^{ - L}}} \right\}$ yields (\ref{eqn:f_alpha_star}).
\section{Proof of Theorem \ref{the:optimal_rate}}\label{app:proof_convex}
The Lagrangian associated with (\ref{eqn:type_i_rate_ref1}) is written as
\begin{equation}\label{eqn:lagrangain_gene_R}
{\mathcal L_2}\left( {R,v,w} \right) = \Phi \left( R \right) + v\left( {{{\mathcal T_0}} - R} \right) + w\left( {R - \frac{{{{\mathcal T_0}}}}{{1 - \Delta }}} \right),
\end{equation}
where $v$ and $w$ represent the Lagrange multipliers. The optimal solution to  (\ref{eqn:type_i_rate_ref1}) should satisfy the KKT conditions as
\begin{equation}\label{eqn:kkt1_cond_case3}
{\left. {\frac{{\partial {\mathcal L_2}}}{{\partial R}}} \right|_{\left( {{R^*},{v^*},{w^*}} \right)}} = \Phi '\left( R^* \right) - v^* + w^* = 0,
\end{equation}
\begin{equation}\label{eqn:kkt2_cond_case3}
v^*\left( {{{\mathcal T_0}} - R^*} \right) = 0,
\end{equation}
\begin{equation}\label{eqn:kkt3_cond_case3}
w^*\left( {R^* - \frac{{{{\mathcal T_0}}}}{{1 - \Delta }}} \right) = 0,
\end{equation}
%\begin{equation}\label{eqn:kkt4_cond}
%{{{\mathcal T_0}} < R \le \frac{{{{\mathcal T_0}}}}{{1 - \Delta }}},
%\end{equation}
${{{\mathcal T_0}} < R^* \le \frac{{{{\mathcal T_0}}}}{{1 - \Delta }}}$ and $v^*,w^* \ge 0$. Herein, the first derivative of $\Phi \left( R \right)$ with respect to $R$ is
\begin{align}\label{eqn:first_der_phi}
\Phi '\left( R \right) %&= {2^R}{\left( {1 - \frac{{{{\mathcal T_0}}}}{R}} \right)^{ - c}}\ln \left( 2 \right) - c{{\mathcal T_0}}{R^{ - 2}}\left( {{2^R} - 1} \right){\left( {1 - \frac{{{{\mathcal T_0}}}}{R}} \right)^{ - c - 1}} \notag\\
 &= \frac{1}{{{R^{1 - c}}{{\left( {R - {{\mathcal T_0}}} \right)}^{c + 1}}}}\varphi \left( R \right),
\end{align}
where $\varphi \left( R \right) = \ln \left( 2 \right)R\left( {R - {{\mathcal T_0}}} \right){2^R} - c{{\mathcal T_0}}\left( {{2^R} - 1} \right)$. Noticing ${{{\mathcal T_0}} } < R^*$, it follows from (\ref{eqn:kkt2_cond_case3}) that $v^*=0$. Clearly from (\ref{eqn:kkt1_cond_case3}), $w^* $ is thus given by
\begin{equation}\label{eqn:w_final}
w^* =  - \Phi '\left( R ^*\right) =  - \frac{1}{{{{R^*}^{1 - c}}{{\left( {R^* - {{\mathcal T_0}}} \right)}^{c + 1}}}}\varphi \left( R^* \right).
\end{equation}

Note that $w^*$ is either larger than or equal to $0$. If $w^*>0$, it follows from (\ref{eqn:kkt3_cond_case3}) that ${R^*}=\frac{{{{\mathcal T_0}}}}{{1 - \Delta }}$. Together with (\ref{eqn:w_final}), we have $\varphi \left( R^* \right) = \varphi \left( {\frac{{{{\mathcal T_0}}}}{{1 - \Delta }}} \right) < 0$. %$w^* =  - \Phi '\left( \frac{{{{\mathcal T_0}}}}{{1 - \Delta }} \right) \ge 0$, which implies
%\begin{equation}\label{eqn:varphi_first_case_0}
%\varphi \left( {\frac{{{{\mathcal T_0}}}}{{1 - \Delta }}} \right) \le 0.
%\end{equation}
On the other hand, if $w^* = 0$, following from (\ref{eqn:w_final}), we have $\varphi \left( R ^* \right) = 0$. By defining $\varphi^{-1} \left( y \right)$ as the inverse function of $\varphi$, we have ${R^*}=\varphi^{-1}(0)$. Since $R > {\mathcal T_0}$, the following inequality holds
\begin{equation}\label{eqn:varphi_def}
\varphi '\left( R \right) = \ln \left( 2 \right){2^R}\left( {\ln \left( 2 \right)R\left( {R - {{\mathcal T_0}}} \right) + 2R - \left( {1 + c} \right){{\mathcal T_0}}} \right) > 0.
\end{equation}
It means that $\varphi \left( R \right)$ is a monotonically increasing function of the rate $R$. Together with $R^* \le \frac{{{{\mathcal T_0}}}}{{1 - \Delta }}$, we have $\varphi \left( \frac{{{{\mathcal T_0}}}}{{1 - \Delta }} \right) \ge \varphi \left( R ^* \right) = 0$. Then the optimal solution $R^*$ under the above two cases can be summarized as
\begin{equation}\label{eqn:opt_rate_type_i1}
{R^*} = \left\{ {\begin{array}{*{20}{c}}
{\frac{{{{\mathcal T_0}}}}{{1 - \Delta }},}&{\varphi \left( {\frac{{{{\mathcal T_0}}}}{{1 - \Delta }}} \right) < 0}\\
{{\varphi^{-1}(0)},}&{\varphi \left( {\frac{{{{\mathcal T_0}}}}{{1 - \Delta }}} \right) \ge 0.}
\end{array}} \right.
\end{equation}
Noticing that $\varphi \left( R \right)$ is an increasing function of $R$, (\ref{eqn:opt_rate_type_i1}) can be rewritten as (\ref{eqn:opt_rate_type_i_re}).

%\section{Proof of Theorem \ref{rem:varpsilon}}\label{app:var_the_proof}
%Furthermore, for $\varepsilon \ge 2^{-L}$, since $\varphi(R)$ is an increasing function of $R$, as proved in Theorem \ref{the:optimal_rate},
Clearly from (\ref{eqn:opt_rate_type_i1}), the optimal rate $R^*$ is determined by the sign of $\varphi \left( {\frac{{{{\mathcal T_0}}}}{{1 - \Delta }}} \right)$. Based on its definition, $\varphi \left( {\frac{{{{\mathcal T_0}}}}{{1 - \Delta }}} \right)$ can be explicitly expressed as
\begin{align}\label{eqn:varphi_app_smallR0}
\varphi \left( {\frac{{{{\mathcal T_0}}}}{{1 - \Delta }}} \right)
&= {2^{\frac{{{{\mathcal T_0}}}}{{1 - \Delta }}}}{{\mathcal T_0}}\left( { {\ln \left( 2 \right)\frac{{\Delta {{\mathcal T_0}}}}{{{{\left( {1 - \Delta } \right)}^2}}} + c\left( {{2^{ - \frac{{{{\mathcal T_0}}}}{{1 - \Delta }}}} - 1} \right)}} \right).
\end{align}
By using the inequality ${\rm 2}^{-x} \ge 1-x\ln2$, we have
\begin{equation}\label{eqn:phi_inelower}
\varphi \left( {\frac{{{{\cal T}_0}}}{{1 - \Delta }}} \right) \ge {2^{\frac{{{{\cal T}_0}}}{{1 - \Delta }}}}{{\cal T}_0}\ln \left( 2 \right)\frac{{{{\cal T}_0}}}{{1 - \Delta }}\left( {\frac{\Delta }{{1 - \Delta }} - c} \right).
\end{equation}
When $\varepsilon \ge 2^{-L}$, $\Delta=\min\{\varepsilon,2^{-L}\}=2^{-L}$ and the right hand side of (\ref{eqn:phi_inelower}) equals to 0 since $c = \frac{1}{{{2^L} - 1}}$. In other words, when $\varepsilon \ge 2^{-L}$, we have $\varphi \left( {\frac{{{{\cal T}_0}}}{{1 - \Delta }}} \right) \ge 0$. From (\ref{eqn:opt_rate_type_i1}), the optimal transmission rate is thus given by $R^*=\varphi^{-1}(0)$. The proof is then completed.

\section{Decreasing Monotonicity of $\theta_L$}\label{app:time_corr}
From (\ref{eqn:power_eff_opt}), the optimal energy efficiency can be rewritten as
\begin{equation}\label{eqn:power_eff_opt_app}
{{\eta}  _{I,L}^*}  = \mathcal K {\theta _L}.
\end{equation}
where $\mathcal K = {{\psi }}{{{{\mathcal T_0}}}} \frac{{{{\left( {{R^*} - {{\mathcal T_0}}} \right)}^c}}}{{{R^*}^c\left( {{2^{{R^*}}} - 1} \right)}}$ and is independent of $\rho$. Accordingly, the monotonicity of $\theta_L$ with respect to $\rho$ is the same as that of ${\eta _{I,L}^*}$. With the same monotonicity, we can prove the decreasing monotonicity of $\theta_L$ through the analysis of the monotonicity of ${\eta _{I,L}^*}$ as follows.

Specifically, as proved in \cite[Lemma 3]{shi2017asymptotic}, the asymptotic outage probability is an increasing function of $\rho$. Thus it follows from (\ref{eqn:power_effi}) that the energy efficiency $\eta_{I,L}$ is a decreasing function of $\rho$. To proceed with the proof, we assume two distinct time correlation coefficients $\rho_1$ and $\rho_2$ with $\rho_1 > \rho_2$. Denote the optimal solution to the original problem (\ref{eqn:joint_adaption}) under the time correlation of $\rho_1$ and the corresponding optimal energy efficiency as $({P_1}^*,\cdots,{P_L}^*,R^*)$ and $\eta_{I,L,\rho_1}^* = \eta_{I,L,\rho_1}({P_1}^*,\cdots,{P_L}^*,R^*)$, respectively. When the channel time correlation is reduced to $\rho_2$, the solution $({P_1}^*,\cdots,{P_L}^*,R^*)$ is still feasible to the problem (\ref{eqn:joint_adaption}) since it satisfies all the constraints in (\ref{eqn:joint_adaption}). With the decreasing monotonicity of $\eta_{I,L}$ with respect to $\rho$, we have $\eta_{I,L,\rho_1}({P_1}^*,\cdots,{P_L}^*,R^*)< \eta_{I,L,\rho_2}({P_1}^*,\cdots,{P_L}^*,R^*)$. Moreover, when the time correlation of $\rho_2$ is considered, the optimization in (\ref{eqn:joint_adaption}) will definitely lead to the optimal energy efficiency $\eta_{I,L,\rho_2}^*$ not lower than $\eta_{I,L,\rho_2}({P_1}^*,\cdots,{P_L}^*,R^*)$, i.e., $\eta_{I,L,\rho_2}^* \ge \eta_{I,L,\rho_2}({P_1}^*,\cdots,{P_L}^*,R^*)$.  Therefore, we have $\eta_{I,L,\rho_2}^* > \eta_{I,L,\rho_1}^*$. It means that the optimal energy efficiency ${{\eta}  _{I,L}^*}$ is a monotonically decreasing function of the time correlation $\rho$. Combining this monotonicity with (\ref{eqn:power_eff_opt_app}), we can conclude that $\theta_L$ is a decreasing functions of $\rho$ and the proof is finally completed.

\section{Proof of Property \ref{the:type_I_eff}}\label{app:there2_proof}
%This property can be proved by virtue of the original definition of the outage probability of HARQ schemes. %we turn back to the original optimization problem by using the property of the exact outage probability of HARQ schemes.
Consider two different maximal numbers of transmissions $L_1$ and $L_2$ with $L_1 \le L_2$. We denote the optimal transmission powers and rate and the corresponding optimal energy efficiency under the case with maximal $L_1$ transmissions as $\mathcal S_1^* = (P_1^*,\cdots,P_{L_1}^*,R^*)$ and $\eta_{L_1}^*$, respectively. Now when the maximal number of transmissions increases to $L_2$, we construct a solution of transmission powers and rate as $\mathcal S_2 = (P_1^*,\cdots,P_{L_1}^*, P_{L_1+1},\cdots, P_{L_2},R^*)$ where $ P_{L_1+1}=\cdots= P_{L_2}=0$. Clearly from the definition of outage probability in (\ref{outage_definition}), we have ${p_{out,{L_1}}} = \cdots =  {p_{out,{L_2}}}$. When the maximal number of transmissions is $L_2$, $\mathcal S_2$ constitues a feasible point of the problem (\ref{eqn:joint_adaption}), since $\mathcal S_2$ satisfies both outage and goodput constraints. Denote its corresponding energy efficiency as $\eta_{L_2}(\mathcal S_2)$. We directly have $\eta_{L_2}(\mathcal S_2)=\eta_{L_1}^*$. Through the optimization in (\ref{eqn:joint_adaption}), when the maximal number of transmissions is $L_2$, we can definitily find the optimal energy efficiency $\eta_{L_2}^*$ not lower than $\eta_{L_2}(\mathcal S_2)$, i.e., $\eta_{L_2}^* \ge \eta_{L_2}(\mathcal S_2)$. It then follows  $\eta_{L_2}^* \ge  \eta_{L_1}^*$, which means that the optimal energy efficiency is non-decreasing function of the maximum number of transmissions and thus completes the proof.
%
%  is defined as the energy efficiency corresponding to $\mathcal S_2$. Since $\eta_{L_2}(\mathcal S_2) $ can not be guaranteed to be optimal, the relationship $\eta_{L_2}^* \ge \eta_{L_2}(\mathcal S_2) = \eta_{L_1}^*$ directly follows. Thus we complete the proof. % the optimal energy efficiency for two different maximal numbers of transmissions follows as
%%\emph{However, it is worth mentioning that this theorem holds as well for the results obtained through the asymptotic outage probability under the condition of low outage probability or high SNR, because the asymptotic outage probability perfectly matches the exact outage probability under these conditions \cite{shi2015outage,shi2016optimal,shi2017asymptotic}.}

\section{Proof of Theorem \ref{the:ee_I}}\label{app:ee_I}
\subsection{Proof of (\ref{eqn:upper_bound})}
Taking the limit $L \to \infty$ of both sides of (\ref{eqn:power_eff_opt}) yields
\begin{align}\label{eqn:power_eff_upper_limi}
{\eta _{I,\infty}^*} &= \mathop {\lim }\limits_{L \to \infty } {{\psi {\theta _L}}}{{{{\mathcal T_0}}}}\frac{{{{\left( {{R^*} - {{\mathcal T_0}}} \right)}^c}}}{{{R^*}^c\left( {{2^{{R^*}}} - 1} \right)}}  \notag\\
&= \mathop {\lim }\limits_{L \to \infty } \frac{{\left( {2^{\frac{L}{{1 - {2^{ - L}}}} - 2}} \right){\theta _L}{{\mathcal T_0}}}}{{{{2^L} - 1}}}\mathop {\lim }\limits_{L \to \infty } \frac{{{{\left( {{R^*} - {{\mathcal T_0}}} \right)}^c}}}{{{R^*}^c\left( {{2^{{R^*}}} - 1} \right)}}\notag\\
%& = \frac{{{{\mathcal T_0}}}}{{4\left( {{2^{{{\mathcal T_0}}}} - 1} \right)}}{{\mathop {\lim }\limits_{L \to \infty } {\theta _L}}}{{\mathop {\lim }\limits_{L \to \infty } {{\left( {{R^*} - {{\mathcal T_0}}} \right)}^c}}},
 &= \frac{{{\theta _\infty }{{\cal T}_0}}}{4}\mathop {\lim }\limits_{L \to \infty } \frac{{{{\left( {{R^*} - {{\cal T}_0}} \right)}^c}}}{{{R^*}^c\left( {{2^{{R^*}}} - 1} \right)}},
\end{align}
where $\theta_\infty \triangleq {\mathop {\lim }\limits_{L \to \infty } {\theta _L}}$ and the existence of $\theta_\infty$ will be proved in Appendix \ref{app:proof_bound_ratio_ell}. As $L \to \infty$, we surely have $\Delta =\min\{\varepsilon,2^{-L}\}= 2^{-L}$. Recalling ${{{\mathcal T_0}} < R^* \le\frac{{{{\mathcal T_0}}}}{{1 - \Delta }}}$, $\mathop {\lim }\limits_{L \to \infty } {R^*} = {{\mathcal T_0}}$ follows by applying squeeze theorem. Plugging this into (\ref{eqn:power_eff_upper_limi}) along with $\mathop {\lim }\limits_{L \to \infty } {c} = {0}$  leads to
\begin{align}\label{eqn:power_eff_upper_limi1}
{\eta _{I,\infty}^*} & = \frac{\theta _\infty{{{\mathcal T_0}}}}{{4\left( {{2^{{{\mathcal T_0}}}} - 1} \right)}}{{\mathop {\lim }\limits_{L \to \infty } {{\left( {{R^*} - {{\mathcal T_0}}} \right)}^c}}}.
\end{align}
% To proceed, it is necessary to find an explicit expression for the optimal rate ${R^*}$ instead of directly using (\ref{eqn:opt_rate_type_i_re}).
Noticing that both ${{R^*} - {{\mathcal T_0}}}$ and $c$ in (\ref{eqn:power_eff_upper_limi1}) converge to zero, the occurrence of the limit $0^0$ complicates the derivations. Fortunately, based on Theorem \ref{the:optimal_rate}, we have $R^* = \varphi^{-1}(0) $ when $L \to \infty$ as $\varepsilon$ becomes greater than $2^{-L}$.
%To obtain ${\eta _{I,\infty}^*}$ in (\ref{eqn:power_eff_upper_limi1}),  As , we surely have the case of . %In this case, the optimal rate is given in the following lemma.
%Based on the following lemma, the asymptotic optimal energy efficiency ${\eta _{I,\infty}^*}$ can be derived in a simple form.
It implies that $R^*$ is a zero point of $\varphi \left( {{R}} \right)$, i.e., $\varphi \left( {{R^*}} \right) = 0$. Following the definition of $\varphi \left( {{R}} \right) $, we then have
%\begin{equation}\label{eqn:approx_varphi_zero}
%\varphi \left( {{R^*}} \right) = \ln \left( 2 \right){R^*}\left( {{R^*} - {{\mathcal T_0}}} \right){2^{{R^*}}} - c{{\mathcal T_0}}\left( {{2^{{R^*}}} - 1} \right) = 0.
%\end{equation}
%Thus $R^*$ can be rewritten as
\begin{equation}\label{eqn:L_inf_R_stc4}
{R^*} = {{\mathcal T_0}} + c{{\mathcal T_0}}\Xi(R^*),
\end{equation}
%Noticing that ${\mathcal T_0} < {R^*} \le \frac{{{{\mathcal T_0}}}}{{1 - 2^{-L} }}$, $\Xi(R^*) \to \Xi({\mathcal T_0})$ as $L$ tends to infinity. Eventually, the optimal transmission rate is given by ${R^*} \approx {{\mathcal T_0}} + c{{\mathcal T_0}}\Xi({\mathcal T_0})$.
where $\Xi(R)=\frac{{{2^R} - 1}}{{\ln \left( 2 \right)R{2^R}}}$. Now putting (\ref{eqn:L_inf_R_stc4}) into (\ref{eqn:power_eff_upper_limi1}) yields
%\begin{align}\label{eqn:dominotor_limit_L_inf}
%\mathop {\lim }\limits_{L \to \infty } {\left( {{R^*} - {{\mathcal T_0}}} \right)^c} &= \mathop {\lim }\limits_{L \to \infty } {\left( {c{{\mathcal T_0}}\Xi ({R^*})} \right)^c} \notag\\
% &= \mathop {\lim }\limits_{L \to \infty } {c^c}{\left( {\mathop {\lim }\limits_{L \to \infty } {{\mathcal T_0}}\Xi ({R^*})} \right)^{\mathop {\lim }\limits_{L \to \infty } c}} \notag\\
% &= \mathop {\lim }\limits_{c \to 0} {c^c} = 1
%\end{align}
%Substituting (\ref{eqn:dominotor_limit_L_inf}) into (\ref{eqn:power_eff_upper_limi}), we have
\begin{align}\label{eqn:type_power_eff_fin_bound}
{\eta _{I,\infty}^*} &= \frac{{{{{\theta _\infty }}}{{\cal T}_0}}}{{4\left( {{2^{{{\cal T}_0}}} - 1} \right)}}{{\mathop {\lim }\limits_{L \to \infty } {{\left( {c{{\cal T}_0}\Xi ({R^*})} \right)}^c}}}\notag\\
&= \frac{{{{{\theta _\infty }}}{{\cal T}_0}}}{{4\left( {{2^{{{\cal T}_0}}} - 1} \right)}}{{\mathop {\lim }\limits_{L \to \infty } {c^c}{{\left( {\mathop {\lim }\limits_{L \to \infty } {{\cal T}_0}\Xi ({R^*})} \right)}^{\mathop {\lim }\limits_{L \to \infty } c}}}}\notag\\
%& = \frac{{{\theta _\infty {{\cal T}_0}}}}{{4\left( {{2^{{{\cal T}_0}}} - 1} \right)}}.
 &= \frac{{{{{\theta _\infty }}}{{\cal T}_0}}}{{4\left( {{2^{{{\cal T}_0}}} - 1} \right)}}.
\end{align}
The result in (\ref{eqn:upper_bound}) then directly follows by combining (\ref{eqn:type_power_eff_fin_bound}) with Property \ref{the:type_I_eff}.

Clearly, ${\eta _{I,\infty}^*}$ is a decreasing function of the goodput threshold ${{\cal T}_0}$. Moreover, when ${\sigma_1}^2=\cdots={\sigma_L}^2=1$ and $\rho=0$, we have $\theta_L=1$ based on its definition in (\ref{eqn:theta_I}). As shown in Appendix \ref{app:time_corr}, $\theta_L$ is a decreasing function of the time correlation coefficient $\rho$. Therefore, when ${\sigma_1}^2=\cdots={\sigma_L}^2=1$ with $L \to \infty$, ${\theta _\infty }$ would be generally lower than or equal to that for the case with $\rho=0$. That is, ${\theta _\infty } \le 1$. Applying this inequality into the maximal energy efficiency (\ref{eqn:type_power_eff_fin_bound}) and using the decreasing monotonicity of ${\eta _{I,\infty}^*}$ with respect to ${{\cal T}_0}$, it is easy to get (\ref{type_I_maximal}).

\subsection{Existence of $\theta_\infty$}\label{app:proof_bound_ratio_ell}
Based on the definition in (\ref{eqn:theta_I}), $\theta _\infty$ can be written as
\begin{equation}\label{eqn:theta_infy_limit}
{\theta _\infty } = \mathop {\lim }\limits_{L \to \infty } \underbrace{\prod\limits_{k = 1}^L {{{\left( {{\sigma _k}^2} \right)}^{{2^{ - k}}}}}}_{\omega_L}  \mathop {\lim }\limits_{L \to \infty } \underbrace {\prod\limits_{k = 1}^L {{{\left( {\frac{{\ell \left( {k,\rho } \right)}}{{\ell \left( {k - 1,\rho } \right)}}} \right)}^{{2^{ - k}}}}} }_{{\vartheta _L}}.
\end{equation}
Here $\omega_L$ can be rewritten as $\omega_L  = \exp \left( {\sum\nolimits_{k = 1}^L {{2^{ - k}}\ln {\sigma _k}^2} } \right)$. Since the channel power gain ${\sigma_1}^2, \cdots, {\sigma_\infty}^2$ are generally bounded in practice and every absolutely convergent series is convergent \cite{rudin1964principles}, the exponent ${\sum\nolimits_{k = 1}^L {{2^{ - k}}\ln {\sigma _k}^2} }$ absolutely converges as $L \to \infty$. It follows that $\omega_L$ converges as $L \to \infty$, i.e., the limit of $\omega_L$ for $L \to \infty$ exists.
%
%Considering that the average channel power gain ${\sigma_1}^2, \cdots, {\sigma_\infty}^2$ are bounded in practice, the limit of $\omega_L$ for $L \to \infty$ exists by the theorem that every absolutely convergent series is convergent \cite{rudin1964principles}, because $\omega_L$ can be rewritten as $\omega_L  = \exp \left( {\sum\nolimits_{k = 1}^L {{2^{ - k}}\ln {\sigma _k}^2} } \right)$, wherein ${\sum\nolimits_{k = 1}^L {{2^{ - k}}\ln {\sigma _k}^2} }$ absolutely converges as $L \to \infty$.

With respect to the term $\vartheta _L$, following the definition of $\ell \left( {k,\rho }\right) $ in (\ref{eqn:time_corr_impa}), we have
%$\frac{{\ell \left( {k - 1,\rho } \right)}}{{\ell \left( {k,\rho } \right)}}$ can be simplified as
\begin{align}\label{eqn:ratio_ell_simpl}
\frac{{\ell \left( {k,\rho } \right)}}{{\ell \left( {k - 1,\rho } \right)}} %&=\frac{{\left( {1 + \sum\limits_{t = 1}^k {\frac{{{\rho ^{2(t   - 1)}}}}{{1 - {\rho ^{2(t   - 1)}}}}} } \right)\left( {1 - {\rho ^{2(k   - 1)}}} \right)}}{{1 + \sum\limits_{t = 1}^{k - 1} {\frac{{{\rho ^{2(t   - 1)}}}}{{1 - {\rho ^{2(t   - 1)}}}}} }} \notag\\
&=\frac{{1 + \left( {1 - {\rho ^{2(k   - 1)}}} \right)\sum\limits_{t = 1}^{k - 1} {\frac{{{\rho ^{2(t   - 1)}}}}{{1 - {\rho ^{2(t   - 1)}}}}} }}{{1 + \sum\limits_{t = 1}^{k - 1} {\frac{{{\rho ^{2(t   - 1)}}}}{{1 - {\rho ^{2(t   - 1)}}}}} }} \le 1.
\end{align}
Accordingly, the sequence $\{\vartheta _L\}$ decreases with $L$ and is lower bounded as $\vartheta _L \ge 0$. Therefore, the limit $\mathop {\lim }\limits_{L \to \infty } {\vartheta _L}$ exists. The existence of both the limits $\mathop {\lim }\limits_{L \to \infty } {\vartheta _L}$ and $\mathop {\lim }\limits_{L \to \infty } {\omega_L}$ then guarantees the existence $\theta_\infty$ and completes the proof.

\section{Derivation of $\kappa_\infty$}\label{app:proof_rec}
Noticing ${{2^{ - k}}} > 0$ together with ${\kappa _L} = \prod\nolimits_{k = 1}^L {{{ {{k}} }^{{2^{ - k}}}}}$, $\kappa _L$ is readily found to be an increasing sequence, such that ${\kappa _1} < {\kappa _2} <  \cdots  < {\kappa _\infty }$.
%\begin{equation}\label{eqn:kappa_dercesing_seq}
%{\kappa _1} < {\kappa _2} <  \cdots  < {\kappa _\infty }.
%\end{equation}
To prove the convergence of the sequence ${\kappa _L }$, it suffices to prove that ${\kappa _\infty}$ is upper bounded. To this end, ${\kappa _\infty }$ is rewritten as ${\kappa _\infty } = {e^{ \sum\nolimits_{k = 1}^\infty  {{2^{ - k}}\ln \left( k \right)} }}$. Then applying Jensen's inequality yields
\begin{equation}\label{eqn:lower_rec_K_prod_fin}
{\kappa _\infty} \le {e^{ \ln \left( {\sum\limits_{k = 1}^\infty {k{2^{ - k}}} } \right)}} = {{\sum\limits_{k = 1}^\infty  {k{2^{ - k}}} }}.
\end{equation}
By using \cite[Eq.0.231]{gradshteyn1965table}, we have ${\kappa _\infty} \le {2}$. Accordingly, the sequence ${\kappa _{L}}$ is convergent, and the limit ${\kappa _\infty }= \mathop {\lim }\limits_{{L} \to \infty} \kappa_L$  exists. It is difficult to evaluate ${\kappa _\infty}$ exactly, however ${\kappa _\infty}$ can be approximated as ${\kappa _L}$ with a large $L$. To guarantee the approximate accuracy, $L$ should be properly chosen to achieve a sufficiently low approximation error which is defined as
\begin{equation}\label{eqn:comput_err_kappa}
{e_L} \triangleq \frac{{{\kappa _\infty } - {\kappa _L} }}{{{\kappa _\infty }}} = 1 - \prod\limits_{k = L + 1}^\infty  {{k^{{2^{ - k}}}}}.
\end{equation}
Clearly, the approximation error ${e_L}$ decreases with $L$, and is upper bounded as
\begin{align}\label{eqn:e_L_bound}
{e_L} &= 1 - {e^{{-2^{ - L}}\sum\limits_{k = L + 1}^\infty  {\frac{{{2^{ - k}}}}{{{2^{ - L}}}}\ln \left( k \right)} }}  \notag \\
&\le 1 - {e^{{-2^{ - L}}\ln \left( {\sum\limits_{k = L + 1}^\infty  {\frac{{{2^{ - k}}}}{{{2^{ - L}}}}k} } \right)}} \notag\\
&= 1 - {e^{{-2^{ - L}}\ln \left( {{\sum\limits_{k = 1}^\infty  {{2^{ - k}}k}  + L\sum\limits_{k = 1}^\infty  {{2^{ - k}}} } } \right)}} \notag\\
&= 1 - {\left( { {L + 2}} \right)^{{-2^{ - L}}}},
\end{align}
wherein the inequality holds by using Jensen's inequality. Based on (\ref{eqn:e_L_bound}), by setting $L=20$, the approximation error is upper bounded as ${e_L} \le 2.95*10^{-6}$. Thus ${\kappa _\infty}$ can be approximated as $\kappa _\infty \approx \kappa _{20}=1.6617$ with approximation error less than $2.95*10^{-6}$ which is sufficiently low for practical applications.

\section{Proof of Increasing Monotonicity of ${\cal G}_k\left( R \right)$}\label{app:proof_increasing_g_ratio}
To prove the increasing monotonicity of ${\cal G}_k\left( R \right)$, we resort to analyze its first order derivative of ${\cal G}_k\left( R \right)$ with respect to $R$, given by
\begin{align}\label{eqn:first_derive_cal_G}
{\cal G}_k'\left( R \right)%& = \frac{{{g_k}^\prime \left( R \right)R{g_{k - 1}}\left( R \right) - {{\left( {R{g_{k - 1}}\left( R \right)} \right)}^\prime }{g_k}\left( R \right)}}{{{{\left( {R{g_{k - 1}}\left( R \right)} \right)}^2}}}
= \frac{{{\cal U}_k\left( R \right)}}{{{{\left( {R{g_{k - 1}}\left( R \right)} \right)}^2}}},
\end{align}
where $\mathcal U_k\left( R \right) = R{g_k}^\prime \left( R \right){g_{k - 1}}\left( R \right) - R{g_{k - 1}}^\prime \left( R \right){g_k}\left( R \right) - {g_{k - 1}}\left( R \right){g_k}\left( R \right)$ and ${g_k}^\prime(R)$ can be further written by using Residue theorem as
\begin{align}\label{eqn:g_K_fir_der}
{g_k}^\prime \left( R \right) &= \frac{{\ln \left( 2 \right)}}{{2\pi {\rm{i}}}}\int\nolimits_{{a} - {\rm{i}}\infty }^{{a} + {\rm{i}}\infty } {\frac{{{2^{Rs}}}}{{{{\left( {s - 1} \right)}^k}}}ds} \notag\\
%&= \Ln \Left( 2 \Right)\Sum\Limits_{A = 1} {{\Rm{Res}}\Left\{ {\Frac{{S{2^{Rs}}}}{{{{\Left( {S - 1} \Right)}^K}}},S = A} \Right\}} \Notag\\
% &= \Frac{{\Ln \Left( 2 \Right)}}{{\Left( {K - 1} \Right)!}}{\Left. {\Frac{{{\Partial ^{K - 1}}\Left( {{2^{Rs}}} \Right)}}{{\Partial {S^{K - 1}}}}} \Right|_{S = 1}} \Notag\\
&= \frac{{\ln \left( 2 \right){{\left( {R\ln \left( 2 \right)} \right)}^{k - 1}}}}{{\left( {k - 1} \right)!}}{2^R},\, k \ge 1.
\end{align}
Specifically when $k=1$, it follows from (\ref{eqn:g_l_def}) that $\mathcal U_k\left( R \right)$ reduces to $\mathcal U_1\left( R \right) = R\ln(2)2^R - 2^R+1 = {2^R}\left( {R\ln (2) - 1 + {e^{ - \ln \left( 2 \right)R}}} \right)$. Clearly by using the inequality $e^{-x} \ge 1-x$, we have $\mathcal U_1\left( R \right) \ge 0$. It means that ${\cal G}_k\left( R \right)$ is an increasing function when $k=1$. On the other hand, when $k \ge 2$, it becomes very difficult to determine the sign of $\mathcal U_k\left( R \right) $. To proceed, we turn to analyze the first order derivative of $\mathcal U_k\left( R \right)$ with respect to $R$ given as
\begin{multline}\label{eqn:first_der_U}
\mathcal U_k'\left( R \right) = R{g_k}^{\prime \prime }\left( R \right){g_{k - 1}}\left( R \right) \\
 - \left( {2{g_{k - 1}}^\prime \left( R \right)
+ R{g_{k - 1}}^{\prime \prime }\left( R \right)} \right){g_k}\left( R \right),
\end{multline}
where similarly to (\ref{eqn:g_K_fir_der}), the second order derivative of
${g_k}(R)$ can also be written by using Residue theorem as
\begin{multline}\label{eqn:g_K_fun_sec_der}
{g_k}^{\prime \prime }\left( R \right) = \frac{\left( {\ln 2} \right)^2}{2 \pi \rm i}\int\nolimits_{{a} - {\rm{i}}\infty }^{{a} + {\rm{i}}\infty } {\frac{{s{2^{Rs}}}}{{{{\left( {s - 1} \right)}^k}}}ds}= \frac{{{{\left( {\ln 2} \right)}^2}}}{{\left( {k - 1} \right)!}} \times\\
% &= {\left( {\ln 2} \right)^2}\sum\limits_{a = 1} {{\mathop{\rm Res}\nolimits} \left\{ {\frac{{s{2^{Rs}}}}{{{{\left( {s - 1} \right)}^k}}},s = a} \right\}} \notag\\
% &= \frac{{{{\left( {\ln 2} \right)}^2}}}{{\left( {k - 1} \right)!}}{\left. {\frac{{{\partial ^{k - 1}}\left( {s{2^{Rs}}} \right)}}{{\partial {s^{k - 1}}}}} \right|_{s = 1}} \notag\\
% &= \frac{{{{\left( {\ln 2} \right)}^2}}}{{\left( {k - 1} \right)!}}{\left( {\sum\limits_{l = 0}^{k - 1} {C_{k - 1}^l{s^{\left( l \right)}}{{\left( {{2^{Rs}}} \right)}^{\left( {k - l - 1} \right)}}} } \right)_{s = 1}} \notag\\
 %&= \frac{{{{\left( {\ln 2} \right)}^2}}}{{\left( {k - 1} \right)!}}{\left( {{2^{Rs}}s{{\left( {R\ln 2} \right)}^{k - 1}} + \left( {k - 1} \right){2^{Rs}}{{\left( {R\ln 2} \right)}^{k - 2}}} \right)_{s = 1}} \notag\\
  \left( {{2^R}{{\left( {R\ln 2} \right)}^{k - 1}} + \left( {k - 1} \right){2^R}{{\left( {R\ln 2} \right)}^{k - 2}}} \right)
  ,k \ge 2.
\end{multline}
Substituting (\ref{eqn:g_K_fir_der}) and (\ref{eqn:g_K_fun_sec_der}) into (\ref{eqn:first_der_U}), it follows that
\begin{align}\label{eqn:first_deriv_U_fin}
\mathcal U_k^{\prime}\left( R \right) &= \frac{{\ln \left( 2 \right){{\left( {R\ln 2} \right)}^{k - 2}}{2^R}}}{{\left( {k - 2} \right)!}} \notag\\
&\times\underbrace {\left( \begin{array}{l}
\frac{{R\ln \left( 2 \right)}}{{\left( {k - 1} \right)}}\left( {R\ln \left( 2 \right) + \left( {k - 1} \right)} \right){g_{k - 1}}\left( R \right)\\
 - \left( {R\ln \left( 2 \right) + k} \right){g_k}\left( R \right)
\end{array} \right)}_{{\cal J}\left( R \right)}.
\end{align}
It can be proved that $\mathcal J(R)$ is an increasing and convex function, since its first and second order derivatives of $\mathcal J(R)$ with respect to $R$ satisfy
\begin{align}\label{eqn:J_R_deriv_first}
\mathcal J^{\prime}\left( R \right) %&= {\left( {\frac{{{{\left( {R\ln \left( 2 \right)} \right)}^2}}}{{\left( {k - 1} \right)}} + R\ln \left( 2 \right)} \right)^\prime }{g_{k - 1}}\left( R \right)
%+ \left( {\frac{{{{\left( {R\ln \left( 2 \right)} \right)}^2}}}{{\left( {k - 1} \right)}} + R\ln \left( 2 \right)} \right){g_{k - 1}}^\prime \left( R \right) \notag\\
%&\quad - {\left( {R\ln \left( 2 \right) + k} \right)^\prime }{g_k}\left( R \right) - \left( {R\ln \left( 2 \right) + k} \right){g_k}^\prime \left( R \right) \notag\\
& = \left( {\frac{{2R{{\left( {\ln \left( 2 \right)} \right)}^2}}}{{\left( {k - 1} \right)}} + \ln \left( 2 \right)} \right){g_{k - 1}}\left( R \right)  \notag\\
&\quad - \ln \left( 2 \right){g_k}\left( R \right) - \frac{{\ln \left( 2 \right){{\left( {R\ln \left( 2 \right)} \right)}^{k - 1}}}}{{\left( {k - 1} \right)!}}{2^R},
\end{align}
\begin{align}\label{eqn:J_R_secon_der}
\mathcal J^{\prime \prime}\left( R \right) %&= \frac{{2{{\left( {\ln \left( 2 \right)} \right)}^2}}}{{\left( {k - 1} \right)}}{g_{k - 1}}\left( R \right) + \left( {\frac{{2R{{\left( {\ln \left( 2 \right)} \right)}^2}}}{{\left( {k - 1} \right)}} + \ln \left( 2 \right)} \right){g_{k - 1}}^\prime \left( R \right) \notag\\
%&\quad - \ln \left( 2 \right){g_k}^\prime \left( R \right) - \frac{{{{\left( {\ln \left( 2 \right)} \right)}^2}{{\left( {R\ln \left( 2 \right)} \right)}^{k - 2}}}}{{\left( {k - 2} \right)!}}{2^R} - \frac{{{{\left( {\ln \left( 2 \right)} \right)}^2}{{\left( {R\ln \left( 2 \right)} \right)}^{k - 1}}}}{{\left( {k - 1} \right)!}}{2^R}\notag\\
&= \frac{{2{{\left( {\ln \left( 2 \right)} \right)}^2}}}{{\left( {k - 1} \right)}}{g_{k - 1}}\left( R \right) \ge 0.
\end{align}
Noticing that $\mathcal J^{\prime \prime}\left( R \right) \ge 0$ and $g_k(0)=0$, thus $\mathcal J^{\prime}(R) \ge \mathcal J^{\prime}(0) = 0$ for $k \ge 2$. Analogously, we can sequentially prove $\mathcal J(R) \ge \mathcal J(0) = 0$, $\mathcal U_k^{\prime}(R) \ge \mathcal U_k^{\prime}(0) = 0$, $\mathcal U_k(R) \ge \mathcal U_k(0) = 0$, then we have $\mathcal G_k^{\prime}(R) \ge 0$. Thus the increasing monotonicity of $\mathcal G_k(R)$ follows as well for $k\ge2$. The proof is eventually completed.
%\section{Some Inequalities Associated with $\frac{{{g_k}\left( R \right)}}{{{g_{k - 1}}\left( R \right)}} $}

\section{Proof of Property \ref{the:bounds_ee_ir}} \label{app:proof_ratio_g_ineqs}
Prior to proving (\ref{eqn:Lambda_lowerbound}), the following upper and lower bounds associated with $\frac{{{g_k}\left( R \right)}}{{{g_{k - 1}}\left( R \right)}}$ are obtained first.
\subsection{$\frac{{{g_k}\left( R \right)}}{{{g_{k - 1}}\left( R \right)}} \le \frac{{R\ln \left( 2 \right)}}{{k - 1}}$}
To prove this inequality, it suffices to show that ${\frak{A}}(R) \triangleq \frac{{R\ln \left( 2 \right)}}{{k - 1}}{g_{k - 1}}\left( R \right) - {g_k}\left( R \right) \ge 0$.
%\begin{equation}\label{eqn:g_upper_eq}
%{\frak{A}}(R) \triangleq \frac{{R\ln \left( 2 \right)}}{{k - 1}}{g_{k - 1}}\left( R \right) - {g_k}\left( R \right).
%\end{equation}
By taking the first order derivative of ${\frak{A}}(R)$ with respect to $R$, it follows that
\begin{align}\label{eqn:g_upper_A_1st_der}
\frak A'\left( R \right) &= \frac{{\ln \left( 2 \right)}}{{k - 1}}{g_{k - 1}}\left( R \right) + \frac{{R\ln \left( 2 \right)}}{{k - 1}}{g_{k - 1}}^\prime \left( R \right) - {g_k}^\prime \left( R \right)\notag\\
%& = \frac{{\ln \left( 2 \right)}}{{k - 1}}{g_{k - 1}}\left( R \right) + \frac{{R\ln \left( 2 \right)}}{{k - 1}}\frac{{\ln \left( 2 \right){{\left( {R\ln \left( 2 \right)} \right)}^{k - 2}}}}{{\left( {k - 2} \right)!}}{2^R} - \frac{{\ln \left( 2 \right){{\left( {R\ln \left( 2 \right)} \right)}^{k - 1}}}}{{\left( {k - 1} \right)!}}{2^R} \notag\\
%& = \frac{{\ln \left( 2 \right)}}{{k - 1}}{g_{k - 1}}\left( R \right) + \frac{{\ln \left( 2 \right){{\left( {R\ln \left( 2 \right)} \right)}^{k - 1}}}}{{\left( {k - 1} \right)!}}{2^R} - \frac{{\ln \left( 2 \right){{\left( {R\ln \left( 2 \right)} \right)}^{k - 1}}}}{{\left( {k - 1} \right)!}}{2^R} \notag\\
& = \frac{{\ln \left( 2 \right)}}{{k - 1}}{g_{k - 1}}\left( R \right) \ge 0.
\end{align}
where the second equality holds by using (\ref{eqn:g_K_fir_der}). Accordingly, ${\frak{A}}(R)$ is an increasing function, which implies ${\frak{A}}(R) \ge {\frak{A}}(0) = 0$. Then the upper bound of $\frac{{{g_k}\left( R \right)}}{{{g_{k - 1}}\left( R \right)}}$ holds.
\subsection{$\frac{{{g_k}\left( R \right)}}{{{g_{k - 1}}\left( R \right)}} \ge \frac{{R\ln \left( 2 \right)}}{k}$} \label{lower_bound}
To prove this lower bound, it is equivalent to prove $\frak{B}(R) \triangleq \frac{{R\ln \left( 2 \right)}}{k}{g_{k - 1}}\left( R \right) - {g_k}\left( R \right) \le 0$.
%\begin{equation}\label{eqn:g_low_eq}
%\frak{B}(R) = \frac{{R\ln \left( 2 \right)}}{k}{g_{k - 1}}\left( R \right) - {g_k}\left( R \right) \le 0.
%\end{equation}
To this end, taking the first order derivative of $\frak{B}(R)$ with respect to $R$, we have
\begin{align}\label{eqn:g_lower_bound_B_frist_de}
\frak B'\left( R \right) %&= \frac{{\ln \left( 2 \right)}}{k}{g_{k - 1}}\left( R \right) + \frac{{R\ln \left( 2 \right)}}{k}{g_{k - 1}}^\prime \left( R \right) - {g_k}^\prime \left( R \right) \notag\\
 %&= \frac{{\ln \left( 2 \right)}}{k}{g_{k - 1}}\left( R \right) + \frac{{R\ln \left( 2 \right)}}{k}\frac{{\ln \left( 2 \right){{\left( {R\ln \left( 2 \right)} \right)}^{k - 2}}}}{{\left( {k - 2} \right)!}}{2^R} - \frac{{\ln \left( 2 \right){{\left( {R\ln \left( 2 \right)} \right)}^{k - 1}}}}{{\left( {k - 1} \right)!}}{2^R} \notag\\
 &= \frac{{\ln \left( 2 \right)}}{k}{g_{k - 1}}\left( R \right) - \frac{{\ln \left( 2 \right){{\left( {R\ln \left( 2 \right)} \right)}^{k - 1}}{2^R}}}{{k!}}.
\end{align}
Then taking the second order derivative of $\frak{B}(R)$ with respect to $R$ yields
\begin{align}\label{eqn:B_second_der}
\frak B^{\prime \prime}\left( R \right) %&= \frac{{\ln \left( 2 \right)}}{k}{g_{k - 1}}^\prime \left( R \right) - \frac{{\left( {k - 1} \right){{\left( {\ln \left( 2 \right)} \right)}^2}{{\left( {R\ln \left( 2 \right)} \right)}^{k - 2}}{2^R}}}{{k!}} - \frac{{{{\left( {\ln \left( 2 \right)} \right)}^2}{{\left( {R\ln \left( 2 \right)} \right)}^{k - 1}}{2^R}}}{{k!}} \notag\\
% &= \frac{{{{\left( {\ln \left( 2 \right)} \right)}^2}{{\left( {R\ln \left( 2 \right)} \right)}^{k - 2}}}}{{k\left( {k - 2} \right)!}}{2^R} - \frac{{{{\left( {\ln \left( 2 \right)} \right)}^2}{{\left( {R\ln \left( 2 \right)} \right)}^{k - 2}}{2^R}}}{{k\left( {k - 2} \right)!}} - \frac{{{{\left( {\ln \left( 2 \right)} \right)}^2}{{\left( {R\ln \left( 2 \right)} \right)}^{k - 1}}{2^R}}}{{k!}} \notag\\
 &=  - \frac{{{{\left( {\ln \left( 2 \right)} \right)}^2}{{\left( {R\ln \left( 2 \right)} \right)}^{k - 1}}{2^R}}}{{k!}} \le 0.
\end{align}
Combining (\ref{eqn:g_lower_bound_B_frist_de}) and (\ref{eqn:B_second_der}), it is clearly found that $\frak B\left( R \right)$ is a decreasing function, which indicates $\frak{B}(R) \le \frak{B}(0) = 0$. Thus the lower bound of $\frac{{{g_k}\left( R \right)}}{{{g_{k - 1}}\left( R \right)}}$ is proved.% when $k>1$. Whereas, for $k=1$, it is readily found from (\ref{eqn:g_lower_bound_B_frist_de}) that $\frak B'\left( R \right) < 0$.

As a consequence, substituting the upper and lower bounds of $\frac{{{g_k}\left( R \right)}}{{{g_{k - 1}}\left( R \right)}}$ into $\Lambda \left( R \right)$ directly yields the upper and lower bounds of $\Lambda \left( R \right)$, respectively, as shown in (\ref{eqn:Lambda_lowerbound}).

%and the upper bounds  into $\Lambda \left( R \right)$ finally produces Property \ref{the:bounds_ee_ir}.%into the lower and the upper bounds of $\Lambda \left( R \right)$ produces the following property.

%\section{$\frac{{{g_k}\left( R \right)}}{{{g_{k - 1}}\left( R \right)}} \le \frac{{{2^R} - 1}}{k}$}\label{app:g_k_thri_ine}
\section{Proof of (\ref{eqn:equa_ratio_upper_comp_cc_ir})}\label{app:g_k_thri_ine}
For the proof, it suffices to show that $\frak C (R) \triangleq {g_k}\left( R \right) - \frac{{{2^R} - 1}}{k}{g_{k - 1}}\left( R \right) \le 0$.
%\begin{equation}\label{eqn:eq_upper_ratio_g_comp}
%\frak C (R) = {g_k}\left( R \right) - \frac{{{2^R} - 1}}{k}{g_{k - 1}}\left( R \right) \le 0.
%\end{equation}
Taking the first order derivative of $\frak C (R)$ with respect to $R$ gives
\begin{multline}\label{eqn:ratio_g_comp_first_der}
\frak C^ {\prime} \left( R \right)=% &= {g_k}^\prime \left( R \right) - \frac{{{2^R}\ln \left( 2 \right)}}{k}{g_{k - 1}}\left( R \right) - \frac{{{2^R} - 1}}{k}{g_{k - 1}}^\prime \left( R \right) \notag\\
 %&= \frac{{\ln \left( 2 \right){{\left( {R\ln \left( 2 \right)} \right)}^{k - 1}}}}{{\left( {k - 1} \right)!}}{2^R} - \frac{{{2^R}\ln \left( 2 \right)}}{k}{g_{k - 1}}\left( R \right) - \frac{{{2^R} - 1}}{k}\frac{{\ln \left( 2 \right){{\left( {R\ln \left( 2 \right)} \right)}^{k - 2}}}}{{\left( {k - 2} \right)!}}{2^R} \notag\\
 %&= {2^R}\ln \left( 2 \right)\left( {\frac{{{{\left( {R\ln \left( 2 \right)} \right)}^{k - 1}}}}{{\left( {k - 1} \right)!}} - \frac{{k - 1}}{k}\frac{{\left( {{2^R} - 1} \right){{\left( {R\ln \left( 2 \right)} \right)}^{k - 2}}}}{{\left( {k - 1} \right)!}} - \frac{{{g_{k - 1}}\left( R \right)}}{k}} \right).
{2^R}\ln \left( 2 \right)\left( {\frac{{{{\left( {R\ln \left( 2 \right)} \right)}^{k - 1}}}}{{\left( {k - 1} \right)!}} - } \right.\\
\left. {\frac{{k - 1}}{k}\frac{{\left( {{2^R} - 1} \right){{\left( {R\ln \left( 2 \right)} \right)}^{k - 2}}}}{{\left( {k - 1} \right)!}} - \frac{{{g_{k - 1}}\left( R \right)}}{k}} \right).
\end{multline}
By using the lower bound of $\frac{{{g_k}\left( R \right)}}{{{g_{k - 1}}\left( R \right)}}$ in Appendix \ref{lower_bound}, namely $\frac{{{g_k}\left( R \right)}}{{{g_{k - 1}}\left( R \right)}} \ge \frac{{R\ln \left( 2 \right)}}{k}$, it follows that
\begin{multline}\label{eqn:lower_bound_g_k_1}
{g_{k - 1}}\left( R \right) \ge \frac{{R\ln \left( 2 \right)}}{{k - 1}}{g_{k - 2}}\left( R \right) \ge
 \cdots  \\
 \ge \frac{{{{\left( {R\ln \left( 2 \right)} \right)}^{k - 2}}}}{{\left( {k - 1} \right)!}}{g_1}\left( R \right) = \frac{{{{\left( {R\ln \left( 2 \right)} \right)}^{k - 2}}\left( {{2^R} - 1} \right)}}{{\left( {k - 1} \right)!}}.
\end{multline}
Substituting (\ref{eqn:lower_bound_g_k_1}) into (\ref{eqn:ratio_g_comp_first_der}) yields
\begin{align}\label{eqn:ineq_C_frist_bound}
\frak C'\left( R \right) %&\le {2^R}\ln \left( 2 \right)\left( {\frac{{{{\left( {R\ln \left( 2 \right)} \right)}^{k - 1}}}}{{\left( {k - 1} \right)!}} - \frac{{k - 1}}{k}\frac{{\left( {{2^R} - 1} \right){{\left( {R\ln \left( 2 \right)} \right)}^{k - 2}}}}{{\left( {k - 1} \right)!}} - \frac{{{{\left( {R\ln \left( 2 \right)} \right)}^{k - 2}}\left( {{2^R} - 1} \right)}}{{k\left( {k - 1} \right)!}}} \right) \notag\\
&\le \frac{{{2^R}\ln \left( 2 \right){{\left( {R\ln \left( 2 \right)} \right)}^{k - 2}}}}{{\left( {k - 1} \right)!}}\left( {R\ln \left( 2 \right) - \left( {{2^R} - 1} \right)} \right) \le 0.
\end{align}
Thus $\frak C (R)$ turns out to be a decreasing function of $R$, which implies $\frak C (R) \le \frak C (0) = 0$. The proof is then completed.

\bibliographystyle{ieeetran}
\bibliography{manuscript_1}

\end{document}